\newtheorem{lemma}{\bf Lemma}[section]
\newtheorem{theorem}[lemma]{\bf Theorem}
\newtheorem{claim}[lemma]{\bf Claim}
\newtheorem{corollary}[lemma]{\bf Corollary}
\newtheorem{proposition}[lemma]{\bf Proposition}
\newtheorem{definition}[lemma]{\bf Definition}
\newcommand{\eps}{\epsilon}
\begin{document}
\pagestyle{headings}

\title{\textbf{Commutative Information Algebras: \newline Representation and Duality Theory }}

\author{Juerg Kohlas \\
\small Department of Informatics DIUF \\ 
\small University of Fribourg \\ 
\small CH -- 1700 Fribourg (Switzerland) \\ 
\small E-mail: \texttt{juerg.kohlas@unifr.ch} \\
\small \texttt{http://diuf.unifr.ch/drupal/tns/juerg\_kohlas} \\
 Juerg Schmid \\
\small Institute of Mathematics \\ 
\small University of Bern \\ 
\small CH -- 3012 Bern (Switzerland) \\
\small E-mail: \texttt{juerg.schmid@math.unibe.ch} \\
}



\date{\today}

\maketitle


\begin{abstract}
Information algebras arise from the idea that information comes in pieces which can be aggregated or combined into new pieces, that information refers to questions and that from any piece of information, the part relevant to a given question can be extracted. This leads to a certain type of algebraic structures, basically semilattices endowed with with additional unary operations. These operations essentially are (dual) existential quantifiers on the underlying semilattice. The archetypical instances of such algebras are semilattices of subsets of some universe, together with the saturation operators associated with a family of equivalence relations on this universe. Such algebras will be called {\em set algebras} in our context. Our first result is a basic representation theorem: Every abstract information algebra is isomorphic to a set algebra. When it comes to combine pieces of information, the idea to model the logical connectives {\em and}, {\em or} or {\em not} is quite natural. Accordingly, we are especially interested in information algebras where the underlying semilattice is a lattice, typically distributive or even Boolean. A major part of this paper is therefore devoted to developing explicitly a full-fledged natural duality theory - in the sense of \cite{clarkdavey98} - extending Stone resp. Priestley duality in a suitable way in order to take into account the additional operations.
\end{abstract}

\newpage


\tableofcontents



\section{Introduction and Overview}


Information algebras arise from the idea that information comes in pieces which can be aggregated or combined into new pieces, that information refers to questions and that from any piece of information, the part relevant to a given question can be extracted. This view leads to two different, but essentially equivalent types of algebraic structures, \textit{domain-free} and \textit{labeled} information algebras \cite{kohlas03,kohlasschmid14}. Archetypical instances of such algebras are so-called \textit{set algebras} (for the domain-free version), resp. \textit{relational algebras} connected to relational database theory (for the labeled version). In both instances questions are represented by the sets of all their possible answers, and pieces of information are thought of as certain sets of possible answers, giving a precise meaning to the elements of information algebras. This paper will deal with domain-free type of information information algebras exclusively.

The natural question is therefore whether and to what extent abstract information algebras are isomorphic to such set algebras. Partial answers were given in \cite{kohlas03}. Here, we want to address the problem more systematically. The problem is similar to representation problems in lattice theory, where Boolean algebras or distributive lattices are shown to be isomorphic to subset algebras resp. lattices  of certain topological spaces \cite{daveypriestley02}. A substantial part of this paper is motivated by the classical duality theories for Boolean algebras resp. distributive lattices, and we extend Stone resp. Priestley duality to domain-free information algebras.

Commutative domain-free information algebras are introduced in Section \ref{sec:InfAlg}. The notion of a set (information) algebra is defined, and a few illustrative examples of such algebras are given. For a more complete presentation of information algebras we refer to \cite{kohlas03} and for more examples to \cite{poulykohlas11}. A parallel representation theory for the associated  so-called labeled information algebras must be postponed; for some partial results see \cite{kohlas03}. An information algebra induces a partial order on its elements, reflecting the information contents of the pieces of information. We show that so-called {\em truncated up-sets} (relative to this order) of elements of an information algebra may be used to construct a set algebra isomorphic to the given algebra, providing a first general representation theorem.

To see set algebras at work, we consider in Section \ref{sec:AtomAlg} atomic or atomistic information algebras. Such algebras have a very natural representation as set algebras consisting of sets of atoms, loosely speaking, of maximally informative pieces of information. This representation could be used directly to develop a representation theory for information algebras based on a Boolean algebra, since maximal ideals in such algebras are atoms in the ideal completion of the underlying Boolean algebra, resulting in an extension of Stone's representation theory for Boolean algebras.  We do not elaborate this approach, since the Boolean case is subsumed in that of information algebras based on distributive lattices, to be considered in full generality in the following Section \ref{sec:DistrLattAlg}.

The treatment of quantifiers on distributive lattices by \cite{cignoli91}, generalizing Halmos' theory of monadic Boolean algebras, will provide  the basis for a representation theory of information algebras based on distributive lattices, extending and generalizing the Boolean case. We will show in Section \ref{sec:DistrLattAlg} that, in fact, there is a full-fledged natural duality in the sense of \cite{clarkdavey98} between the categories of commutative domain-free information algebras based on distributive lattices with morphisms as defined in Subsection \ref{Homomorphisms and Subalgebras} on one side and Priestley spaces equipped with a semigroup of commuting and separating equivalences and morphisms as defined in Subsection \ref{$Q$-Priestley spaces} on the other. In Subsection \ref{Boolean CDFs} we consider the  special case of information algebras based on Bollean lattices. Finally, in Subsection \ref{Finite Distr CDFs} we look in some detail at information algebras carried by finite distributive lattices. It turns out that this class is as close to an elementary class in the sense of first order logic as one can possibly get.

For the sake of completeness, it should be noted that embedding information algebras into set algebras is not the only way to model information algebras with sets. Already the ideal completion of an information algebra embeds the information algebra into an algebra of sets, namely the algebra of its ideals. But this is \textit{not} a set algebra in the strict sense used in this paper. Also, it is well known that information algebras are closely related to information systems (in the sense of domain theory), see \cite{kohlas03}. Again, this yields not a representation theory in the sense considered here. As mentioned, most of the results contained in this paper should have, in some way or another, a counterpart in the labeled version of information algebras. This is a subject still to be worked out.

From the point of view of universal algebra, information algebras as considered in this paper can be seen as semilattices endowed with a family of (dual) existential quantifiers which form a commutative, idempotent semigroup with respect to composition. Many examples of such  structures can be found in algebraic logic, but usually related to Boolean algebras instead of semilattices, e.g.  quantifier algebras \cite{halmos62,plotkin94} or cylindric algebras \cite{HMT71,plotkin94}. As far as representation theory is concerned, there are well known and well developed theories for monadic Boolean algebras \cite{halmos62}, for cylindric algebras \cite{HMT71} and for distributive lattices with a quantifier \cite{cignoli91}.  Otherwise, to the best of our knowledge, not much is known about representation of semilattices with quantifiers, except for a few rudimentary results contained in \cite{kohlas03}.

As a notational convention, in order to improve readability, we admit writing $fx$ instead of $f(x)$ whenever it is clear from the context that $f$ is a function and $x$ a member of the domain of $f$.


\section{Domain-free Information Algebras} \label{sec:InfAlg}



\subsection{Structures} \label{subsec:DoFreeInfAlg}


We will define a type of algebra describing the interaction
between ``pieces of information'' and ``questions'' as discussed
in the introduction.

\bigskip

{\em Defining operations}

\smallskip

Beginning with ``pieces of information'', let $\Phi$ be an
abstract set whose elements are thought to represent such pieces,
denoted by lower case Greek letters. We assume that $\Phi$ is
equipped with a binary operation $\cdot$ :

\begin{center}
\emph{Combination}:$\hspace{3mm} \cdot: \Phi\times\Phi
\longrightarrow \Phi$.
\end{center}

For $\phi,\psi\in\Phi$, the element $\phi\cdot\psi$ represents the
aggregation of the pieces of information represented by $\phi$
resp. $\psi$. Mimicking the intuitive properties of
``aggregation'', combination is assumed to be associative,
commutative and idempotent.  Additionally, we assume that there
exist in $\Phi$ a unit or neutral element $1$ and a null element
$0$ satisfying $1\cdot\phi = \phi\cdot 1 = \phi$ resp. $0\cdot\phi
= \phi\cdot 0 = 0$. $1$ represents vacuous information which does
not change any other information under combination. $0$ represents
contradiction and destroys any other information. Summing up,
$(\Phi;\hspace{0.5mm} \cdot, 1,0)$ is a commutative idempotent
semigroup with a neutral resp. null element.

\smallskip

Turning to ``questions'', we think of an abstract set $Q$ whose
elements represent such questions. Elements of $Q$ will typically
be denoted by $x,y,z,\ldots$ etc.. In view of the discussion in
the introduction, we will not deal with the questions $x\in Q$
themselves, but represent them, for each $x\in Q$, by a unary
operation $\epsilon_x:\Phi\longrightarrow\Phi$ which extracts,
from every $\phi\in\Phi$, the piece of information
$\epsilon_x(\phi)$ which is relevant to question $x$ (this amounts
to replacing $x$ by the graph of the map $\epsilon_x$):

\begin{center}
\emph{Extraction}:$\hspace{3mm} \epsilon_x: \Phi \longrightarrow
\Phi$.
\end{center}

The set of all such operations will be denoted by
$E(\Phi,Q)$ or just $E$ if $\Phi$ and $Q$ are clear from the
context

\smallskip

The members of $E$ will be required to satisfy all $x$-$y$-instances
(for $x,y\in Q$) of the following conditions:

\begin{enumerate}

\item $\epsilon_x(0) = 0$ \hfill (N)
\item $\phi\cdot\epsilon_x(\phi) = \phi$, for all $\phi\in\Phi$ \hfill (A)
\item  $\epsilon_x(\epsilon_x(\phi)\cdot\psi) = \epsilon_x(\phi)\cdot\epsilon_x(\psi)$, for all $\phi,\psi\in\Phi$ \hfill  (Q)
\end{enumerate}

(N) says that contradiction cannot be eliminated by extraction. (A) states that information extracted from $\phi$ is contained in $\phi$.  The crucial condition is (Q) as we shall see. Operations $\eps: \Phi \longrightarrow \Phi$ satisfying (N), (A) and (Q) will be called {\em extraction operators}.

At this point, we impose an additional condition on extraction operators, defining the scope of this paper: We require that the order of successive extractions does not matter, that is,

4. $\epsilon_x(\epsilon_y(\phi)) =
\epsilon_y(\epsilon_x(\phi))$, for all $\phi\in \Phi$  and $x,y\in Q$ \hfill (C).

Structures $(\Phi,E)$ with $(\Phi;\hspace{0.5mm}\cdot, 0,1)$ a commutative idempotent semigroup with null and unit and $E$ satisfying conditions 1. to 4. will be called {\em commutative domain-free information algebras}.

\begin{lemma}
Let $(\Phi,E)$ a commutative domain-free information algebra. Then \newline
5. $\epsilon_x(\epsilon_x(\phi)) = \epsilon_x(\phi)$, for all
$\phi\in\Phi$ and $x\in Q$. \hfill {\rm(I)}
\end{lemma}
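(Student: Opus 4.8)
The plan is to derive (I) from the semigroup axioms together with (A) and (Q) alone; conditions (N) and (C) will not be needed. The guiding observation is that the neutral element $1$ should be a fixed point of every extraction operator, and that the desired idempotence is exactly what one gets from (Q) once the second argument is specialized to $1$.

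First I would establish the auxiliary fact $\epsilon_x(1) = 1$. This is immediate from (A): instantiating the universally quantified $\phi$ in $\phi\cdot\epsilon_x(\phi) = \phi$ at $\phi = 1$ gives $1\cdot\epsilon_x(1) = 1$, and since $1$ is neutral the left-hand side equals $\epsilon_x(1)$, whence $\epsilon_x(1) = 1$. Intuitively, the part of vacuous information relevant to any question is again vacuous.

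Next I would instantiate the key axiom (Q), namely $\epsilon_x(\epsilon_x(\phi)\cdot\psi) = \epsilon_x(\phi)\cdot\epsilon_x(\psi)$, at $\psi = 1$. On the left, neutrality gives $\epsilon_x(\phi)\cdot 1 = \epsilon_x(\phi)$, so the left-hand side collapses to $\epsilon_x(\epsilon_x(\phi))$. On the right, using $\epsilon_x(1) = 1$ from the previous step together with neutrality yields $\epsilon_x(\phi)\cdot 1 = \epsilon_x(\phi)$. Equating the two sides delivers $\epsilon_x(\epsilon_x(\phi)) = \epsilon_x(\phi)$, which is precisely (I).

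The argument is short, so there is no serious obstacle; the only point requiring insight is recognizing that the correct specialization in (Q) is $\psi = 1$, and that (A) evaluated at $\phi = 1$ supplies the fixed-point property $\epsilon_x(1) = 1$ needed to simplify the right-hand side. For completeness I could note an alternative order-theoretic route: writing $a\le b$ for $a\cdot b = b$, axiom (A) gives $\epsilon_x(\epsilon_x(\phi))\le\epsilon_x(\phi)$, while the specialization of (Q) at $\psi=1$ shows $\epsilon_x(\phi)\le\epsilon_x(\epsilon_x(\phi))$; antisymmetry then forces equality. I would nonetheless prefer the direct computation above, as it is a single application of each of (A) and (Q).
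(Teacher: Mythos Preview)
Your proof is correct and follows essentially the same approach as the paper: first derive $\epsilon_x(1)=1$ from (A) at $\phi=1$, then specialize (Q) at $\psi=1$ to obtain the idempotence. Your observation that only (A) and (Q) are used is also explicitly noted in the paper.
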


\begin{proof}
Note that $\epsilon_x(1) = 1 \cdot \epsilon_x(1) = 1$ by (A). Using (Q), we get
$\epsilon_x(\epsilon_x(\phi)) =
\epsilon_x(\epsilon_x(\phi)\cdot 1) =
\epsilon_x(\phi)\cdot\epsilon_x(1)= \epsilon_x(\phi)\cdot 1 =
\epsilon_x(\phi)$.
\end{proof}

Note that (I) already follows from (A) and (Q).

\begin{lemma}  \label{preservation of N,A,Q}
If $E$ satisfies all instances of (N), (A), (C) and (Q), then so
does $E\cup\{\epsilon_x\circ\epsilon_y\}$, for any
$\epsilon_x,\epsilon_y\in E$.
\end{lemma}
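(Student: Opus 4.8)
The plan is to show that the composite operator $\epsilon_x \circ \epsilon_y$ satisfies each of the four conditions (N), (A), (Q), (C), treating it as a new extraction operator $\delta$ and verifying the axioms one by one. Write $\delta(\phi) = \epsilon_x(\epsilon_y(\phi))$ throughout. Conditions (N) and (A) should fall out quickly. For (N): $\delta(0) = \epsilon_x(\epsilon_y(0)) = \epsilon_x(0) = 0$, using (N) for $\epsilon_y$ then for $\epsilon_x$. For (A): I would first note that $\epsilon_y(\phi) \cdot \phi = \phi$ by (A) applied to $\epsilon_y$, and that $\epsilon_x(\epsilon_y(\phi)) \cdot \epsilon_y(\phi) = \epsilon_y(\phi)$ by (A) applied to $\epsilon_x$; combining these (using associativity and commutativity of $\cdot$) gives $\phi \cdot \delta(\phi) = \phi \cdot \epsilon_y(\phi) \cdot \delta(\phi) = \phi \cdot \epsilon_y(\phi) = \phi$.

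The substantive step is verifying the quantifier axiom (Q) for $\delta$, namely $\delta(\delta(\phi)\cdot\psi) = \delta(\phi)\cdot\delta(\psi)$. I expect to peel off the two quantifiers one at a time, applying (Q) for $\epsilon_x$ and $\epsilon_y$ in turn, and crucially invoking the commutation hypothesis (C) to reorder the operators when necessary. A key subsidiary fact I would establish first is that the commuting idempotent operators satisfy $\delta(\delta(\phi)) = \delta(\phi)$ and, more usefully, identities such as $\epsilon_y(\delta(\phi)) = \delta(\phi)$ and $\epsilon_x(\delta(\phi)) = \delta(\phi)$, which follow from (I), (C), and (Q) applied to the inner quantifiers. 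The main obstacle will be bookkeeping: starting from $\delta(\delta(\phi)\cdot\psi) = \epsilon_x(\epsilon_y(\epsilon_x(\epsilon_y(\phi))\cdot\psi))$, I would push $\epsilon_y$ inside using (Q) for $\epsilon_y$ (after checking the argument has the right form, exploiting that $\epsilon_y(\epsilon_x(\epsilon_y(\phi))) = \epsilon_x(\epsilon_y(\phi))$ via (C) and idempotence), then apply (Q) for $\epsilon_x$, and finally reassemble the product $\delta(\phi)\cdot\delta(\psi)$. Getting the operators into the exact shape demanded by each application of (Q) is where the commutation condition (C) does its essential work.

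Finally, condition (C) for the enlarged set $E \cup \{\epsilon_x \circ \epsilon_y\}$ requires that $\delta$ commute with every $\epsilon_z \in E$ (and trivially with itself): I must check $\epsilon_z(\delta(\phi)) = \delta(\epsilon_z(\phi))$, i.e. $\epsilon_z(\epsilon_x(\epsilon_y(\phi))) = \epsilon_x(\epsilon_y(\epsilon_z(\phi)))$. This is immediate from (C) applied twice, shuffling $\epsilon_z$ past $\epsilon_x$ and then past $\epsilon_y$. The overall strategy is thus a direct, axiom-by-axiom verification in which the commutativity condition (C) and the idempotence consequences of Lemma~1 (condition (I)) are the tools that let each instance of (Q) be applied to an argument of the correct form; the only real care needed is in the algebraic manipulation for (Q).
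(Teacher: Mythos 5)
Your proposal is correct and follows essentially the same route as the paper: a direct axiom-by-axiom verification in which (N) and (A) are quick, (Q) is handled by applying (Q) for $\epsilon_x$ and $\epsilon_y$ in turn with (C) and idempotence used to bring the argument into the shape $\epsilon(\epsilon(\chi)\cdot\psi)$, and (C) for the enlarged set follows from associativity of composition. The only cosmetic differences are that your (A) argument manages without (C) and that you peel off $\epsilon_y$ before $\epsilon_x$ in the (Q) computation (using $\epsilon_y(\epsilon_x(\epsilon_y(\phi)))=\epsilon_x(\epsilon_y(\phi))$) where the paper first commutes the outer composition; both variants are sound.
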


\begin{proof}
(N) is obvious. For (A),
\begin{eqnarray*}
\lefteqn{\phi\cdot\epsilon_x \circ \epsilon_y(\phi)} \\
&& =   (\phi\cdot\epsilon_x(\phi))\cdot\epsilon_x \circ \epsilon_y(\phi) \qquad \text{by (A)} \nonumber\\
&& = \phi\cdot(\epsilon_x(\phi)\cdot\epsilon_y \circ \epsilon_x(\phi))  \qquad  \text{by (C)} \nonumber \\
&& =  \phi\cdot\epsilon_x(\phi)  \qquad \text{by (A)} \nonumber\\
&& =  \phi  \qquad  \text{by (A)} \nonumber\
\end{eqnarray*}

For (Q),
\begin{eqnarray*}
\lefteqn{\epsilon_x \circ \epsilon_y(\epsilon_x \circ \epsilon_y(\phi)\cdot\psi)} \\
&& =
\epsilon_y \circ \epsilon_x(\epsilon_x \circ \epsilon_y(\phi)\cdot\psi)  \quad
\text{ by (C)} \\
&& =
     \epsilon_y(\epsilon_x \circ \epsilon_y(\phi)\cdot\epsilon_x(\psi))
     \quad  \text{by (Q)}  \nonumber\\
&& =  \epsilon_y(\epsilon_y \circ \epsilon_x(\phi)\cdot\epsilon_x(\psi))
     \quad  \text{by (C)}  \nonumber\\
&& =  \epsilon_y \circ \epsilon_x(\phi)\cdot
     \epsilon_y \circ \epsilon_x(\psi)  \quad  \text{by (Q)}  \nonumber\\
&& =  \epsilon_x \circ \epsilon_y(\phi)\cdot
     \epsilon_x \circ \epsilon_y(\psi)  \quad  \text{by (C)}  \nonumber\
\end{eqnarray*}

For (C), $(\epsilon_x \circ \epsilon_y) \circ \epsilon_z =
\epsilon_z \circ (\epsilon_x \circ \epsilon_y)$ using associativity of
composition and (C) for $E$.
\end{proof}

\begin{corollary}  \label{preservation of N, A, C, Q}
If $E$ satisfies all instances of (N), (A), (C), (Q) and (I) then so
does $E^{\circ}$, the closure of $E$ under composition $\circ$.
Obviously, $(E^{\circ},\circ)$ is the least idempotent semigroup satisfying (N), (A),(C) and (Q) containing $E$
as a subset.
\end{corollary}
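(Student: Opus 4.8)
The plan is to obtain preservation of (N), (A), (C), (Q) by iterating Lemma \ref{preservation of N,A,Q}, and then to read off idempotency of $(E^{\circ}, \circ)$ essentially for free. First I would filter $E^{\circ}$ as $E^{\circ} = \bigcup_{n} E_n$, where $E_0 = E$ and $E_{n+1} = E_n \cup \{f \circ g : f, g \in E_n\}$. From the proof of Lemma \ref{preservation of N,A,Q} I would extract two facts valid for operators satisfying (N), (A), (C), (Q): (i) if $f$ and $g$ commute, then $f \circ g$ again satisfies (N), (A), (Q); and (ii) $f \circ g$ commutes with every operator $h$ that commutes with both $f$ and $g$ (by associativity of $\circ$). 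Granting these, an induction on $n$ shows that each $E_n$ satisfies all four conditions: in $E_{n+1}$ every element lies in $E_n$ or has the form $f \circ g$ with $f, g \in E_n$, which commute by (C) in $E_n$, so (i) yields (N), (A), (Q) elementwise while (ii) yields (C) for every pair (for two new elements one applies (ii) twice).

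Second, I would pass the four conditions from the $E_n$ to their union. This is immediate because every instance of (N), (A), (Q) refers to a single operator and every instance of (C) to just two; any such operator or pair already lies in some $E_n$, where the corresponding instance holds. Hence $E^{\circ}$ satisfies (N), (A), (C), (Q), and in particular (C) for $E^{\circ}$ says exactly that $(E^{\circ}, \circ)$ is commutative.

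Third, for the ``obviously'' clause: $(E^{\circ}, \circ)$ is a semigroup since $\circ$ is associative, and it is idempotent because, by the first Lemma of this subsection, (I) follows from (A) and (Q); applying that implication inside $E^{\circ}$ gives $q \circ q = q$ for every $q \in E^{\circ}$. Minimality is then clear, since any idempotent semigroup of operators under $\circ$ that contains $E$ is closed under composition and therefore contains every finite composition of members of $E$, i.e. all of $E^{\circ}$.

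I expect the only genuine work to lie in the first two paragraphs, and within them in the single point that Lemma \ref{preservation of N,A,Q} survives the passage from $E$ to an arbitrary set satisfying the conditions — above all that adjoining compositions does not destroy (C). The computations for (A) and (Q) are not redone: they are exactly those of Lemma \ref{preservation of N,A,Q}, here invoked as facts (i) and (ii), so no fresh manipulation of the semigroup identities is required.
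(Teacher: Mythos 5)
Your proof is correct and takes essentially the same approach the paper intends: the paper offers no explicit argument for this corollary, treating it as an immediate iteration of Lemma \ref{preservation of N,A,Q}, and your layered induction $E_0\subseteq E_1\subseteq\cdots$ followed by passage to the union is exactly the routine elaboration of that, with the idempotency and minimality clauses handled as the paper's ``obviously'' suggests. The only (harmless) refinement is that you adjoin a whole layer of compositions at once, justified by the two facts (i) and (ii) you extract from the lemma's proof, rather than one composition at a time.
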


{ \bf In order to develop a meaningful algebraic theory of commutative information algebras avoiding partial morphisms, we assume henceforth, based on Corollary \ref{preservation of N, A, C, Q}, that $E$ is closed under composition}.

Altogether, we have set up a two-sorted algebra $\underline{A}=(\Phi,\cdot,1,0;\;E,\circ)$. Such algebras will be called \textit{commutative domain-free information algebras} or, for short, {\em CDF information algebras}. However, in order to avoid cluttering the paper with a plethora of CDF's, we agree that

{\bf "information algebra" will mean "commutative domain-free information algebra" if not explicitly stated otherwise}.

We introduce the following notation: Write $\underline{\Phi}$ for the commutative semigroup $(\Phi;\cdot,1,0)$, $\underline{E}$ for the commutative semigroup $(E;\circ)$ and finally $\underline{A}=(\underline{\Phi};\underline{E})$. More generally, is $S$ is any set carrying some structure, we will write $\underline{S}$ for the set equipped with the type of structure under consideration.

\bigskip

{\em Introducing order on $\Phi$}

\smallskip

It is well-known that any idempotent commutative semigroup may be equipped
with a compatible order relation in exactly two ways. Explicitly,
in $(\Phi;\hspace{0.5mm} \cdot)$ we may define an order $\leq_1$
by $\phi\leq_1\psi\ :\Longleftrightarrow\ \phi\cdot \psi = \phi$
respectively by $\phi\leq_2\psi\ :\Longleftrightarrow\ \phi\cdot
\psi = \psi$. For $\Phi$, we will use $\leq_2$:

\begin{definition}   \label{DefInfOrder}
{\bf Information order:} For $\phi,\psi\in \Phi$, we put  $\phi\leq\psi$ iff $\phi\cdot
\psi = \psi$.
\end{definition}

This is appropiate since
$\phi\cdot\psi = \psi$, in a natural way, means that $\phi$ is
less informative than $\psi$. It is easy to check that in the
ordered set $(\Phi;\leq)$ the combination $\phi\cdot\psi$ is in
fact the {\em supremum} of $\phi$ and $\psi$, that is,=
$\phi\cdot\psi = sup_{\leq_2}\{\phi,\psi\}$.
We use $\leq$ to define a
binary operation $\vee:\Phi\times\Phi\longrightarrow\Phi$ (called
join) on $\Phi$ by
\begin{center}
$\phi\vee\psi := sup_{\leq}\{\phi,\psi\} \hspace{5mm} (=
\phi\cdot\psi).$
\end{center}
This turns $\Phi$ into a join-semilattice $\underline{\Phi} = (\Phi;\vee, 1,0)$ with
least element $1$ and greatest element $0$, which neatly reflects
the fact that contradiction 0 dominates every piece of
information, and that the vacuous information 1 is contained in
every piece of information.

The interplay between the {\em alter egos} of $\underline{\Phi}$ as {\em
semigroup} resp. {\em semilattice} resp. {\em ordered set} will prove to be
very fruitful. So combination will be denoted by both $\cdot$ and $\vee$ in order
to indicate which aspect is prevalent in a given context.

Using $\leq$ and $\vee$, the conditions (N), (A) and $Q$ for an
extraction operator may be rewritten as follows:
\begin{enumerate}
\item $\epsilon(0) = 0$, \item $\epsilon(\phi) \leq \phi$ for all
$\phi \in \Phi$, \item $\epsilon(\epsilon(\phi) \vee \psi) =
\epsilon(\phi) \vee \epsilon(\psi)$, for all $\phi,\psi \in�\Phi$.
\end{enumerate}

An operator $\epsilon$ on a semilattice $(\Phi;\wedge, 0)$
satisfying these three conditions is called an {\em  existential
quantifier} in algebraic logic. However, it must be noted that in
the relevant literature rather the order relation $\leq_1$
is used to define an existential quantifier. This gives rise to a
meet-semilattice $(\Phi;\wedge,0)$ with $ \phi\wedge\psi :=
inf_{\leq_1}\{\phi,\psi\}$ and least element $0$. The three
conditions then read $\epsilon(0) = 0$, $\epsilon(\phi) \geq \phi$
and $\epsilon(\epsilon(\phi)\wedge\psi)=
\epsilon(\phi)\wedge\epsilon(\psi)$. We could have called our
variant a ``dual existential quantifier'' with the risk of
cluttering the paper with a plethora of ``duals'' -- from which we
shrank back. In any case, our choice of $\leq_2$ over $\leq_1$ is
amply justified by the natural order between pieces of
information.

\begin{lemma}  \label{extraction preserves order}
An extraction operator preserves (information) order.
\end{lemma}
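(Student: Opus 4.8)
The plan is to show that if $\phi \leq \psi$ then $\epsilon(\phi) \leq \epsilon(\psi)$, where $\leq$ is the information order and $\epsilon$ is any extraction operator in $E$. Recall that $\phi \leq \psi$ means $\phi \vee \psi = \psi$, equivalently $\phi \cdot \psi = \psi$, and that what we must establish is $\epsilon(\phi) \cdot \epsilon(\psi) = \epsilon(\psi)$, i.e.\ $\epsilon(\phi) \leq \epsilon(\psi)$.

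First I would translate the hypothesis $\phi \leq \psi$ into the equation $\phi \cdot \psi = \psi$ and then apply $\epsilon$ to both sides, obtaining $\epsilon(\psi) = \epsilon(\phi \cdot \psi)$. The natural move is to rewrite $\phi \cdot \psi$ in a form to which condition (Q) applies. Condition (Q) reads $\epsilon(\epsilon(\alpha) \cdot \beta) = \epsilon(\alpha) \cdot \epsilon(\beta)$, so I want the argument of $\epsilon$ to have the shape $\epsilon(\alpha) \cdot \beta$. Using (A) in the form $\phi = \phi \cdot \epsilon(\phi)$ (equivalently $\phi \cdot \epsilon(\phi) = \phi$), I can substitute to make the factor $\epsilon(\phi)$ appear explicitly inside the argument, which is precisely the maneuver already used in the proof of Lemma \ref{preservation of N,A,Q}.

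Concretely, the chain I anticipate is
\begin{eqnarray*}
\epsilon(\psi) &=& \epsilon(\phi \cdot \psi) \qquad \text{since } \phi \leq \psi \\
&=& \epsilon((\phi \cdot \epsilon(\phi)) \cdot \psi) \qquad \text{by (A)} \\
&=& \epsilon(\epsilon(\phi) \cdot (\phi \cdot \psi)) \qquad \text{by commutativity and associativity} \\
&=& \epsilon(\phi) \cdot \epsilon(\phi \cdot \psi) \qquad \text{by (Q)} \\
&=& \epsilon(\phi) \cdot \epsilon(\psi).
\end{eqnarray*}
The last line again uses $\phi \cdot \psi = \psi$. This says exactly $\epsilon(\phi) \cdot \epsilon(\psi) = \epsilon(\psi)$, i.e.\ $\epsilon(\phi) \leq \epsilon(\psi)$, as required.

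I do not expect any genuine obstacle here; the argument is a short equational manipulation and the only point requiring care is arranging the argument of $\epsilon$ so that condition (Q) can be invoked. The key idea is to insert the idempotent-style factor $\epsilon(\phi)$ via (A) so that $\phi \cdot \psi$ is presented as $\epsilon(\phi) \cdot (\phi \cdot \psi)$; once (Q) strips off the leading quantified factor, collapsing $\phi \cdot \psi$ back to $\psi$ finishes the proof. Note that the statement holds for every $\epsilon \in E$ and uses only (A) and (Q), not (N) or (C).
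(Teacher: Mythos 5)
Your proof is correct and follows essentially the same route as the paper's: both arguments use (A) to introduce the factor $\epsilon(\phi)$ into the argument of $\epsilon$ and then invoke (Q) to extract it, yielding $\epsilon(\phi)\cdot\epsilon(\psi)=\epsilon(\psi)$. The paper merely phrases the insertion step order-theoretically (from $\epsilon(\phi)\leq\phi\leq\psi$ it gets $\epsilon(\phi)\cdot\psi=\psi$ and applies (Q) to $\epsilon(\epsilon(\phi)\cdot\psi)$), whereas you perform the equivalent substitution explicitly via $\phi=\phi\cdot\epsilon(\phi)$; the two computations evaluate $\epsilon$ of the same element.
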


\begin{proof}
Assume $\phi\leq\psi$. Since $\epsilon(\phi)\leq\phi$, we have
$\epsilon(\phi)\leq\psi$, thus $\epsilon(\phi)\cdot\psi = \psi$
and $\epsilon(\epsilon(\phi)\cdot\psi) = \epsilon(\psi)$. Using
(Q), we obtain $\epsilon(\phi)\cdot\epsilon(\psi) =
\epsilon(\psi)$, that is, $\epsilon(\phi)\leq \epsilon(\psi)$.
\end{proof}


\subsection{Homomorphisms and Subalgebras}   \label{Homomorphisms and Subalgebras}


Let $\underline{A} = (\underline{\Phi};\underline{E})$ and $\underline{B} = (\underline{\Psi};\underline{D})$ any two  information algebras. We do not notationally distinguish between the operations in the two algebras as their meaning will be clear from the context.

\begin{definition}    \label{CDF homos}
A pair $(f,g)$ of maps $f : \Phi \rightarrow \Psi$, $g : E \rightarrow D$ is a  homomorphism from $\underline{A}$ to $\underline{B}$ iff
\begin{enumerate}
\item $f(\phi \cdot \psi) = f(\phi) \cdot f(\psi)$ for all $\phi,\psi \in \Phi$,
\item $f(0) = 0$ and $f(1) = 1$.
\item $g(\epsilon \circ \eta) = g(\epsilon) \circ g(\eta)$ for all $\epsilon,\eta \in E$.
\item $f(\epsilon(\phi)) = g(\epsilon)(f(\phi))$ for all $\phi \in \Phi$ and $\epsilon \in E$.
\end{enumerate}
\end{definition}

Note that in a homomorphism $(f,g)$ the map $f$ is \textit{order-preserving}.

\begin{lemma}   \label{alg isos 1}
If for a homomorphism $(f,g)$ both $f^{-1}$ and $g^{-1}$ exist, then $(f^{-1},g^{-1})$ also satisfies condition 2.6.(4).
\end{lemma}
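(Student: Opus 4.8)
The plan is to read condition 2.6.(4) for the candidate inverse pair $(f^{-1},g^{-1}) : \underline{B} \to \underline{A}$ and reduce it directly to the instance of 2.6.(4) already known to hold for $(f,g)$. Here $f : \Phi \to \Psi$ and $g : E \to D$ are bijections by hypothesis, so $f^{-1} : \Psi \to \Phi$ and $g^{-1} : D \to E$ are well defined. What must be shown is that for every $\delta \in D$ and every $\psi \in \Psi$,
\[ f^{-1}(\delta(\psi)) = g^{-1}(\delta)\bigl(f^{-1}(\psi)\bigr). \]

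First I would use surjectivity of $f$ and $g$ to reparametrize. Given $\delta \in D$ and $\psi \in \Psi$, set $\epsilon := g^{-1}(\delta) \in E$ and $\phi := f^{-1}(\psi) \in \Phi$, so that $\delta = g(\epsilon)$ and $\psi = f(\phi)$. It is worth noting explicitly that $\epsilon$ is then a genuine extraction operator on $\Phi$ precisely because $g$ maps $E$ onto $D$; this is what makes the right-hand side $g^{-1}(\delta)(f^{-1}(\psi)) = \epsilon(\phi)$ meaningful as an element of $\Phi$, and likewise $\delta(\psi) \in \Psi$ makes the left-hand side $f^{-1}(\delta(\psi))$ meaningful.

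Next I would simply instantiate the hypothesis 2.6.(4) for $(f,g)$ at this particular $\epsilon$ and $\phi$, obtaining
\[ f(\epsilon(\phi)) = g(\epsilon)\bigl(f(\phi)\bigr) = \delta(\psi). \]
Applying the bijection $f^{-1}$ to both ends gives $\epsilon(\phi) = f^{-1}(\delta(\psi))$, and substituting back $\epsilon = g^{-1}(\delta)$ and $\phi = f^{-1}(\psi)$ yields exactly the required identity $g^{-1}(\delta)(f^{-1}(\psi)) = f^{-1}(\delta(\psi))$.

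I do not expect any genuine obstacle: the argument is a single application of the hypothesis conjugated by the bijections $f$, $f^{-1}$ together with the correspondence $g \leftrightarrow g^{-1}$, and the only point requiring care is the bookkeeping that every term is well-typed, which is guaranteed by $g$ being a bijection of $E$ onto $D$. (Contrast this with conditions 2.6.(1)--(3): for $(f^{-1},g^{-1})$ those also hold, but they require the separate, equally routine observation that a bijective monoid/semigroup homomorphism has a homomorphic inverse — which is why the lemma isolates condition (4) as the statement to verify.)
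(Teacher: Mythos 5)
Your proof is correct and follows essentially the same route as the paper's: pull $\delta$ and $\psi$ back to $\epsilon = g^{-1}(\delta)$ and $\phi = f^{-1}(\psi)$ using surjectivity, apply condition 2.6.(4) for $(f,g)$ at that instance, and hit both sides with $f^{-1}$. The extra remarks on well-typedness are fine but not needed beyond what the paper already does.
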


\begin{proof}
Let $(\psi,\delta)\in(\Psi,D)$. Then $\psi = f(\phi)$ and $\delta = g(\eps)$ for an unique pair $(\phi,\eps)\in(\Phi,E)$. Now $f(\eps(\phi)) = g(\eps)(f(\phi))$ by assumption. Applying $f^{-1}$ on both sides, we obtain $\eps(\phi)=f^{-1}(g(\eps)(f(\phi)))$
or $g^{-1}(\delta)(f^{-1}(\psi)) = f^{-1}(\delta(\psi))$.
\end{proof}

\begin{corollary}   \label{alg isos 2}
A homomorphism $(f,g)$ is an isomorphism iff both $f$ and $g$ are bijective.
\end{corollary}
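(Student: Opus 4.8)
The plan is to unwind the categorical meaning of \emph{isomorphism}: the pair $(f,g)$ is an isomorphism precisely when it admits a two-sided inverse morphism $(f',g')\colon\underline{B}\to\underline{A}$, where composition of morphisms is understood componentwise, i.e. $(f',g')\circ(f,g)=(f'\circ f,\,g'\circ g)$. With this reading, one direction is essentially bookkeeping and the other reduces to a single nontrivial point already isolated in Lemma \ref{alg isos 1}.

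For the ``only if'' direction I would argue as follows. Suppose $(f,g)$ is an isomorphism with inverse $(f',g')$. Then the identities $f'\circ f=\mathrm{id}_\Phi$, $f\circ f'=\mathrm{id}_\Psi$ and likewise $g'\circ g=\mathrm{id}_E$, $g\circ g'=\mathrm{id}_D$ hold, so $f$ and $g$ are bijective with $f'=f^{-1}$ and $g'=g^{-1}$. This is immediate and needs no computation.

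For the converse, assume $f$ and $g$ are both bijective. It then suffices to show that the pair $(f^{-1},g^{-1})$ is itself a homomorphism $\underline{B}\to\underline{A}$, for it is automatically the required two-sided inverse of $(f,g)$. I would check the four conditions of Definition \ref{CDF homos} in turn. Conditions 1 and 3 assert that $f^{-1}$ resp. $g^{-1}$ are semigroup homomorphisms, and these follow from the standard fact that the inverse of a bijective semigroup homomorphism is again a homomorphism: writing arbitrary $\psi_1,\psi_2\in\Psi$ as $f(\phi_1),f(\phi_2)$ and invoking condition 1 for $f$ yields $f^{-1}(\psi_1\cdot\psi_2)=\phi_1\cdot\phi_2=f^{-1}(\psi_1)\cdot f^{-1}(\psi_2)$, and symmetrically for $g^{-1}$ with respect to $\circ$. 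Condition 2, namely $f^{-1}(0)=0$ and $f^{-1}(1)=1$, is immediate from $f(0)=0$, $f(1)=1$ together with the injectivity of $f$.

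The only step that carries genuine content is condition 4, the compatibility of the quantifier actions under the inverse pair; here I expect the main (indeed the sole) obstacle to lie, since conditions 1--3 are routine inverse-of-a-bijection arguments. But this is exactly the statement proved in Lemma \ref{alg isos 1}, so the corollary follows by assembling that lemma with the three routine verifications above.
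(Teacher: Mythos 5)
Your proof is correct and follows exactly the route the paper intends: the paper gives no separate argument for this corollary, relying on Lemma \ref{alg isos 1} to handle condition 2.6.(4) for the inverse pair, with the remaining homomorphism conditions for $(f^{-1},g^{-1})$ being the routine inverse-of-a-bijection checks you carry out. Nothing is missing.
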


If the maps $f$ and $g$ both are \textit{one-to-one}, then $\underline{A}$ is said to be \textit{embedded} into $\underline{B}$. If $\Phi\subseteq\Psi$ and $E\subseteq D$ are such that
  \begin{itemize}
    \item[(i)] $\Phi$ is closed under the combination operation of $\Psi$ and contains the neutral and null elements of $\Psi$,
    \item [(ii)] $E$ is closed under the composition operation of $D$, and
    \item  [(iii)] for all $\eta \in E$ and $\phi \in \Phi$, the element $\eta(\phi)$ belongs to $\Phi$,
  \end{itemize}
  then $\underline{A}$ is called a \textit{subalgebra} of $\underline{B}$. Clearly, then, the pair of the identity maps of $\Phi$ and $E$ into $\Psi$ resp. $D$  is an embedding of $\underline{A}$ into $\underline{B}$. Also, if $(f,g)$ is a homomorphism from $\underline{A}$ into $\underline{B}$, then the image $(\underline{f(\Phi)}; \underline{g(E)})$ of $\underline{A}$ is a subalgebra of $\underline{B}$.

As an example and for further reference, consider an arbitrary but fixed $\eta\in D$ and let $\eta\Psi = \{\eta\psi: \psi\in\Psi\}$. We have $\eta1 = 1$, $\eta0 = 0$ and $\eta\psi\cdot\eta\psi' = \eta(\eta\psi\cdot\eta\psi')$ by (Q), so $\eta\Psi$ is closed under the operations of $\underline{\Psi}$, making it a substructure $\eta\underline{\Psi}$ of $\underline{\Psi}$.

\begin{lemma}  \label{extraction subalgebra}
 $(\eta\underline{\Psi};\underline{D})$ is a subalgebra of $(\underline{\Psi};\underline{D})$.
\end{lemma}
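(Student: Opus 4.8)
The plan is to verify the three defining conditions (i)--(iii) of a subalgebra for the pair $(\eta\underline{\Psi};\underline{D})$. Here the subset $\eta\Psi\subseteq\Psi$ plays the role of the carrier, and the operator set is all of $D$ (so $E = D$ in the notation of the subalgebra definition). Thus I must check that $\eta\Psi$ is closed under combination and contains $0$ and $1$, that $D$ is closed under composition, and that $\delta(\phi')\in\eta\Psi$ whenever $\delta\in D$ and $\phi'\in\eta\Psi$.

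Conditions (i) and (ii) require no real work. Condition (i) has already been recorded in the paragraph preceding the statement: $\eta1 = 1$, $\eta0 = 0$, and $\eta\psi\cdot\eta\psi' = \eta(\eta\psi\cdot\eta\psi')$ by (Q), so $\eta\Psi$ contains both distinguished elements and is closed under $\cdot$, i.e. it is the substructure $\eta\underline{\Psi}$ of $\underline{\Psi}$. Condition (ii) is immediate, since $D$ \emph{is} the whole operator semigroup and is closed under composition by hypothesis.

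The only step with any content is condition (iii): given $\delta\in D$ and an element $\eta\psi\in\eta\Psi$, I must show $\delta(\eta\psi)\in\eta\Psi$. Here I would invoke the commutativity condition (C). Since $\delta$ and $\eta$ both lie in $D$, they commute, so $\delta(\eta\psi) = (\delta\circ\eta)\psi = (\eta\circ\delta)\psi = \eta(\delta\psi)$. As $\delta\psi\in\Psi$, its image $\eta(\delta\psi)$ belongs to $\eta\Psi$ by the very definition of $\eta\Psi$, which is exactly what is needed. No genuine obstacle arises: the whole force of the lemma is that (C) allows an arbitrary operator $\delta$ to be pushed through the projection $\eta$, so that $\eta\Psi$ is mapped into itself by every member of $D$, and the remaining closure facts were already in hand.
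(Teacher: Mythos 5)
Your proposal is correct and follows exactly the paper's own argument: (i) is taken from the preceding paragraph, (ii) is trivial since the operator set is all of $D$, and (iii) uses (C) to rewrite $\delta(\eta\psi)$ as $\eta(\delta\psi)\in\eta\Psi$. Nothing to add.
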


\begin{proof}
(i) above is satisfied as just shown, (ii) is vacuously true, and for (iii) observe that $\eta'(\eta\psi) = \eta(\eta'\psi)\in\eta\Psi$ by (C) for any $\eta'\in D$.
\end{proof}

In the sequel we are particularly interested in homomorphisms, embeddings and isomorphisms between an arbitrary  information algebra $\underline{A}$ and so-called \textit{set algebras}, to be defined in the following section. Such embeddings and isomorphisms will be called \textit{representations} of $\underline{A}$.

Finally, we remark that from a category-theoretic point of view other types of morphisms may be more appropriate, see e.g. \cite{kohlasschmid14} for Cartesian-closed categories of information algebras.


\subsection{Set Algebras} \label{subsec:SetAlgs}


So far, the set $\Phi$ of pieces of information as well as the set $Q$ of questions have been arbitrary abstract sets, subject only to the conditions specified for composition and extraction. We will now define a special type of  information algebras - to be called {\em set algebras} - where the elements of these sets have an internal structure, described by set-theoretical constructs over some base set $U$ ($U\neq\emptyset$). The power set of $U$ will be denoted by $P(U)$.

We may equip $P(U)$ with a lattice structure in the obvious way. To be precise, let $\underline{P}(U) := (P(U); \cap, \cup, \emptyset, U)$ the bounded distributive lattice with carrier $P(U)$, set intersection as meet, set union as join and with $\emptyset$ resp. $U$ as least rep. greatest elements. Due to our use of information order, we will mostly be concerned with the order dual $\underline{P}^d(U)$ of $\underline{P}(U)$, and especially with $(\cap,U,\emptyset)$-reducts of the latter.

The basic idea is to consider $U$ as a set of possible worlds. Questions $x\in Q$ will then be modelled by equivalence relations $\equiv_x$ on $U$, the idea being that for $u,u'\in U$ we have $u\equiv_x u'$ iff question $x$ has the same answer in the worlds $u$ resp.\!\! $u'$.

\bigskip

{\em Equivalences and saturation operators}

\smallskip

It is useful for our purposes to examine, in some detail, the set $Eq(U)$ of all equivalence relations on $U$. Recall that any equivalence $\Theta\in Eq(U)$ has an {\em alter ego} as a partition of $U$ into pairwise disjoint nonempty sets, the  {\it equivalence classes} of $\Theta$ or, for short,  $\Theta$-blocks. A $\Theta$-block thus contains, with any $u\in B$, all $u'\in U$ satisfying $(u,u')\in\Theta$. Abusing notation to the limit, we also write $B\in\Theta$ to indicate that $B$ is a $\Theta$-block, and $u\Theta u'$ instead of $(u,u')\in\Theta$.

To every $\Theta\in Eq(U)$ we associate a {\it saturation operator} $\sigma_{\Theta}:P(U)\longrightarrow P(U)$ defined by $\sigma_{\Theta}(X) = \bigcup\{B: B\in\Theta \textit{ and } B\cap X\neq\emptyset\}$, for any subset $X\subseteq U$. Accordingly, a set $X\subseteq U$ will be called $\sigma_\Theta$-{\em saturated} iff $\sigma_\Theta(X) = X$.

The following properties of saturation operators will be crucial for our purposes:

\begin{lemma}   \label{saturation operators}
Let $\Theta\in Eq(U)$ with associated saturation operator $\sigma_{\Theta}$. Then for all $X,Y\subseteq U$:

1. $\sigma_{\Theta}(\emptyset) =  \emptyset$,

2. $X\subseteq\sigma_{\Theta}(X)$,

3. $X\subseteq Y$ implies $\sigma_{\Theta}(X)\subseteq\sigma_{\Theta}(Y)$,

4. $X =\sigma_{\Theta}(X)$ and $Y =\sigma_{\Theta}(Y)$ jointly imply $X\cap Y =\sigma_{\Theta}(X\cap Y)$,

5. $\sigma_{\Theta}(\sigma_{\Theta}(X)\cap Y) = \sigma_{\Theta}(X)\cap\sigma_{\Theta}(Y)$.

6. $\sigma_{\Theta}(X\cup Y) = \sigma_{\Theta}(X)\cup\sigma_{\Theta}(Y)$.
\end{lemma}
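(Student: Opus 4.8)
The plan is to verify the six assertions by working directly with the block structure of $\Theta$, exploiting the fact that $\sigma_\Theta(X)$ is precisely the union of all $\Theta$-blocks that meet $X$. Properties 1, 2 and 3 are essentially immediate from this description: the empty set meets no block (giving 1), every element lies in its own block which meets $X$ whenever that element is in $X$ (giving 2), and enlarging $X$ can only enlarge the family of blocks that meet it (giving 3). I would dispose of these three in a line or two each.

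The substantive content is the interaction of $\sigma_\Theta$ with intersection and union. For property 6, I would argue that a block $B$ meets $X\cup Y$ iff it meets $X$ or it meets $Y$; splitting the defining union accordingly yields $\sigma_\Theta(X\cup Y)=\sigma_\Theta(X)\cup\sigma_\Theta(Y)$ directly, with no appeal to saturation of the arguments. Property 5 is the crucial one, and I expect it to be the main obstacle, since it is exactly the (Q)-type identity that will later make $\sigma_\Theta$ an extraction operator. The key observation is that $\sigma_\Theta(X)$ is itself $\sigma_\Theta$-saturated, i.e.\ it is a union of whole blocks, so it behaves like a ``fixed'' set under $\Theta$. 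I would prove 5 by a double inclusion on the level of blocks: a block $B$ appears in $\sigma_\Theta(\sigma_\Theta(X)\cap Y)$ iff $B$ meets $\sigma_\Theta(X)\cap Y$; because $\sigma_\Theta(X)$ is a union of blocks, $B$ meets $\sigma_\Theta(X)$ iff $B\subseteq\sigma_\Theta(X)$, and one then checks that this together with $B\cap Y\neq\emptyset$ is equivalent to $B$ meeting both $X$ (through $\sigma_\Theta$) and $Y$, which is precisely the condition for $B$ to lie in $\sigma_\Theta(X)\cap\sigma_\Theta(Y)$. The fact that both $\sigma_\Theta(X)$ and $\sigma_\Theta(Y)$ are unions of blocks makes their intersection again a union of blocks, so no correction term survives.

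Finally, property 4 I would obtain as a consequence rather than prove from scratch: if $X$ and $Y$ are both $\sigma_\Theta$-saturated, then by property 5 (with the roles arranged so that $\sigma_\Theta(X)=X$ and $\sigma_\Theta(Y)=Y$) we get $\sigma_\Theta(X\cap Y)=\sigma_\Theta(\sigma_\Theta(X)\cap Y)=\sigma_\Theta(X)\cap\sigma_\Theta(Y)=X\cap Y$, so $X\cap Y$ is saturated as claimed. The only genuine bookkeeping is making the block-meets-union and block-contained-in-saturation equivalences precise; once those are in hand, 4, 5 and 6 all follow cleanly, and I would present 5 in full and let 4 fall out of it.
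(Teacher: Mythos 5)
Your proposal is correct and follows essentially the same block-based route as the paper: items 1--3 and 6 are immediate from the description of $\sigma_\Theta(X)$ as the union of all $\Theta$-blocks meeting $X$, and item 5 rests on the fact that saturated sets are unions of whole, pairwise disjoint blocks. The only organizational difference is that the paper proves item 4 directly from the block structure and uses it (together with monotonicity) to get one inclusion of item 5, whereas you prove 5 outright via a single block-level equivalence and let 4 fall out as a corollary; both orderings are sound, and yours involves no circularity since your proof of 5 nowhere invokes 4.
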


\begin{proof}
For 1., we have  $\sigma_{\Theta}(\emptyset) = \bigcup\{B\in \Theta: B\cap\emptyset\neq\emptyset\} = \emptyset$.

 Items 2. and 3. are obvious.

 For 4., observe that $X =\sigma_{\Theta}(X)$ iff $X$ is a set union of whole $\Theta$-blocks, and that for two $\Theta$-blocks $B_1$ and $B_2$, either $B_1\cap B_2 = \emptyset$ or $B_1= B_2$.

 For 5., observe that $\sigma_{\Theta}(X)\cap Y\subseteq \sigma_{\Theta}(X)\cap\sigma_{\Theta}(Y)$,  so $\sigma_{\Theta}(\sigma_{\Theta}(X)\cap Y) \subseteq \sigma_{\Theta}(\sigma_{\Theta}(X)\cap\sigma_{\Theta}(Y)) = \sigma_{\Theta}(X)\cap\sigma_{\Theta}(Y)$ by 3. and 4. For the reverse inclusion, we have $\sigma_{\Theta}(X)\cap\sigma_{\Theta}(Y) = \bigcup\{B\in \Theta : B\cap X\neq\emptyset\neq B\cap Y\}$. Obviously, for each such $B$ we have $B\cap\sigma_{\Theta}(X) = B$, so $B\cap\sigma_{\Theta}(X) \cap Y\neq \emptyset$ and $B$ participates in the union of all $B'\in P$ forming  $\sigma_{\Theta}(\sigma_{\Theta}(X)\cap Y)$. So $\sigma_{\Theta}(X)\cap\sigma_{\Theta}(Y)\subseteq\sigma_{\Theta}(\sigma_{\Theta}(X)\cap Y)$.

 Finally, 6. is immediate.
\end{proof}

\begin{corollary}   \label{saturation and extraction}
$\sigma_{\Theta}$ is an extraction operator on the $(\cap,\emptyset)$-reduct of $\underline{P}(U)^d$,  for any $\Theta\in Eq(U)$.
\end{corollary}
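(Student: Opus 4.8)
The plan is to unwind the definition of \emph{extraction operator} in the context of the $(\cap,\emptyset)$-reduct of $\underline{P}(U)^d$ and then read off all three defining conditions directly from Lemma \ref{saturation operators}. The only genuine work is bookkeeping: settling the order-dual dictionary correctly, so that the abstract conditions (N), (A), (Q) of Subsection \ref{subsec:DoFreeInfAlg} translate into the set-theoretic statements already established there.

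First I would fix the dictionary. On the dual lattice $\underline{P}(U)^d$ the order is reverse inclusion, so the join $\vee$ (the supremum in the information order, which coincides with combination) is set intersection $\cap$, the null element $0$ is $\emptyset$, and the information order $X \leq Y$ means $Y \subseteq X$. Under this translation the three conditions defining an extraction operator, written in the join-semilattice form used in the excerpt, read: (N) becomes $\sigma_{\Theta}(\emptyset) = \emptyset$; (A), namely $\sigma_{\Theta}(X) \leq X$ in the information order, becomes $X \subseteq \sigma_{\Theta}(X)$ for all $X \subseteq U$; and (Q) becomes $\sigma_{\Theta}(\sigma_{\Theta}(X)\cap Y) = \sigma_{\Theta}(X)\cap\sigma_{\Theta}(Y)$.

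With the dictionary in place the verification is immediate: (N) is item 1 of Lemma \ref{saturation operators}, (A) is item 2, and (Q) is item 5. Hence $\sigma_{\Theta}$ satisfies (N), (A) and (Q), and is therefore an extraction operator on the $(\cap,\emptyset)$-reduct of $\underline{P}(U)^d$, for every $\Theta\in Eq(U)$.

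I expect the main---indeed the only---obstacle to be the order-dual translation carried out in the first step. One must be careful not to conflate the lattice order of $\underline{P}(U)$ with the information order, and in particular must recognize that the contraction inequality $\epsilon(\phi)\leq\phi$ turns into the \emph{expansion} $X \subseteq \sigma_{\Theta}(X)$ rather than its reverse, precisely because passing to $\underline{P}(U)^d$ inverts the order. Once this is pinned down no further computation is required, since Lemma \ref{saturation operators} has already done all of the set-theoretic work.
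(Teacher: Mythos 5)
Your proposal is correct and follows exactly the paper's argument: the paper's proof is the single sentence that items 1, 2 and 5 of Lemma \ref{saturation operators} are conditions (N), (A) and (Q) in their semilattice version. Your extra care with the order-dual dictionary (in particular that the contraction $\epsilon(\phi)\leq\phi$ becomes the expansion $X\subseteq\sigma_{\Theta}(X)$) is exactly the implicit bookkeeping the paper leaves to the reader.
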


\begin{proof}
Items 1., 2. and 5. in Lemma \ref{saturation operators} are just conditions (N), (A) and (Q) for an extraction operator, in their semilattice version.
\end{proof}

Recall that the {\em relational product} $\star$ of two binary relations $R,S\subseteq U\times U$ is given by $R\star S = \{(u,u')\in U\times U: \textit{ there exists } v\in U \textit{ such that } uRvSu' \}$. In general, $Eq(U)$ is not closed under $\star$. In fact, we have

\begin{lemma}   \label{star product of two equivalences}
Given $\Theta,\Gamma\in Eq(U)$, their relational product $\Theta\star\Gamma$ belongs to $Eq(U)$ iff $\Theta\star\Gamma = \Gamma\star\Theta$.
\end{lemma}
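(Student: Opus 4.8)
The plan is to verify directly that $\Theta\star\Gamma$ is reflexive, symmetric and transitive precisely when the commutativity condition holds, exploiting two elementary facts about relational products: associativity of $\star$, and the converse identity $(R\star S)^{-1}=S^{-1}\star R^{-1}$, where $R^{-1}$ denotes the converse of a relation $R$. Since $\Theta$ and $\Gamma$ are symmetric we have $\Theta^{-1}=\Theta$ and $\Gamma^{-1}=\Gamma$, so this identity specialises to $(\Theta\star\Gamma)^{-1}=\Gamma\star\Theta$, which will carry most of the argument.

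First I would dispose of the two cheap observations. Reflexivity of $\Theta\star\Gamma$ holds unconditionally: for any $u\in U$ we have $u\,\Theta\,u\,\Gamma\,u$, so $(u,u)\in\Theta\star\Gamma$. And symmetry of $\Theta\star\Gamma$ is \emph{equivalent} to the commutativity hypothesis, since a relation $R$ is symmetric iff $R^{-1}=R$, and here $(\Theta\star\Gamma)^{-1}=\Gamma\star\Theta$; thus $\Theta\star\Gamma$ is symmetric iff $\Theta\star\Gamma=\Gamma\star\Theta$. This already settles the forward direction: if $\Theta\star\Gamma\in Eq(U)$ then in particular it is symmetric, whence $\Theta\star\Gamma=\Gamma\star\Theta$.

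For the converse direction, assume $\Theta\star\Gamma=\Gamma\star\Theta$. Reflexivity and, by the previous paragraph, symmetry are immediate, so only transitivity remains, and here the hypothesis does the real work. Recalling that an equivalence relation is idempotent under $\star$ (reflexivity gives $\Theta\subseteq\Theta\star\Theta$ and transitivity gives $\Theta\star\Theta\subseteq\Theta$, hence $\Theta\star\Theta=\Theta$, and likewise $\Gamma\star\Gamma=\Gamma$), I would regroup the fourfold product using associativity and swap the inner factors via the hypothesis:
\[
(\Theta\star\Gamma)\star(\Theta\star\Gamma)=\Theta\star(\Gamma\star\Theta)\star\Gamma=\Theta\star(\Theta\star\Gamma)\star\Gamma=(\Theta\star\Theta)\star(\Gamma\star\Gamma)=\Theta\star\Gamma .
\]
In particular $(\Theta\star\Gamma)\star(\Theta\star\Gamma)\subseteq\Theta\star\Gamma$, which is exactly transitivity, so $\Theta\star\Gamma$ is an equivalence relation.

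The main obstacle, such as it is, lies entirely in the transitivity step of the converse direction: one must resist chasing elements through $\Theta\star\Gamma$ directly, which is awkward, and instead argue at the level of the relation algebra, using associativity to regroup the product and the hypothesis $\Gamma\star\Theta=\Theta\star\Gamma$ to bring the inner factors into a form where the idempotence of $\Theta$ and $\Gamma$ collapses everything. The remaining ingredients — the converse identity and the idempotence of equivalences under $\star$ — are routine and can be stated without detailed verification.
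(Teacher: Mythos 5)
Your proof is correct, and it follows the same overall decomposition as the paper's: reflexivity holds unconditionally, symmetry of $\Theta\star\Gamma$ is shown to be \emph{equivalent} to $\Theta\star\Gamma=\Gamma\star\Theta$ (which gives the forward implication), and commutativity is then used to establish transitivity. The difference is one of level: the paper chases elements throughout, whereas you work in the relation algebra. For symmetry, the paper unwinds $u\,\Theta\,v\,\Gamma\,u'$ and its reverse by hand, while you invoke the converse identity $(R\star S)^{-1}=S^{-1}\star R^{-1}$ together with $\Theta^{-1}=\Theta$, $\Gamma^{-1}=\Gamma$ to get $(\Theta\star\Gamma)^{-1}=\Gamma\star\Theta$ in one line. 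For transitivity, the paper takes a chain $u\,\Theta\,x\,\Gamma\,w\,\Theta\,y\,\Gamma\,u'$, uses commutativity to replace the middle $\Gamma\star\Theta$ segment by a $\Theta\star\Gamma$ segment, and then absorbs $\Theta\Theta$ into $\Theta$ and $\Gamma\Gamma$ into $\Gamma$; your computation $(\Theta\star\Gamma)\star(\Theta\star\Gamma)=\Theta\star(\Gamma\star\Theta)\star\Gamma=(\Theta\star\Theta)\star(\Gamma\star\Gamma)=\Theta\star\Gamma$ is exactly this manipulation expressed via associativity and the idempotence $\Theta\star\Theta=\Theta$, $\Gamma\star\Gamma=\Gamma$. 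Your packaging is arguably cleaner and less error-prone, at the cost of quietly relying on the converse identity and on idempotence of equivalences under $\star$, which you correctly note are routine; the paper's version is self-contained at the element level and needs no auxiliary facts about $\star$. There is no gap in either direction of your argument.
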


\begin{proof}
Since $u\Theta u\Gamma u $ for all $u\in U$, $\Theta\star\Gamma$ is reflexive. Now $u\Theta\star\Gamma u'$ iff for some $v\in U$ we have $u\Theta v\Gamma u'$. So $\Theta\star\Gamma$ is symmetric iff $u\Theta v\Gamma u'$ implies the existence of $w\in U$ such that $u'\Theta w\Gamma u$ for all $u,u'\in U$. So $\Theta\star\Gamma\subseteq \Gamma\star\Theta$. The reverse inclusion is obtained in the same way and we have that $\Theta\star\Gamma$ is symmetric iff $\Theta\star\Gamma = \Gamma\star\Theta$. It remains to establish transitivity of $\Theta\star\Gamma$. Assume $u\Theta\star\Gamma w$ and $w\Theta\star\Gamma u'$. So there are $x,y\in U$ such that $u\Theta x\Gamma w \Theta y\Gamma u'$. So $x\Gamma\star\Theta y$ and using $\Gamma\star\Theta$ we find $w'\in U$ such that $x\Theta w' \Gamma y$. Putting all together we have $u\Theta x\Theta w' \Gamma y\Gamma u'$ and by transitivity $u\Theta w' \Gamma u'$, that is, $u\Theta\star\Gamma u'$ as desired.
\end{proof}

Equivalences $\Theta,\Gamma$ satisfying $\Theta \star \Gamma = \Gamma\star \Theta$ will be called {\em commuting}.
The following lemma collects some properties of commuting equivalences:

\begin{lemma}   \label{commuting equivalences}
Assume $\Theta,\Gamma\in Eq(U)$ commute. Then

1. $\Theta\star\Gamma$ is the least equivalence relation on $U$ containing $\Theta$ and $\Gamma$ (as subsets of $U\times U$),

2. $\sigma_{\Theta\star\Gamma} = \sigma_{\Theta}\circ\sigma_{\Gamma}$,

3. $\sigma_{\Theta}\circ\sigma_{\Gamma} = \sigma_{\Gamma}\circ\sigma_{\Theta}$,
\end{lemma}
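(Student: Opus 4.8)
The plan is to handle the three parts in order, as each feeds into the next. For part 1, I would first invoke Lemma~\ref{star product of two equivalences}: commutativity of $\Theta$ and $\Gamma$ guarantees that $\Theta\star\Gamma$ is genuinely an equivalence relation, so the assertion that it is the \emph{least} such relation containing both is even meaningful. Containment is immediate from reflexivity: if $u\Theta u'$, then $u\Theta u'\Gamma u'$ gives $u\,(\Theta\star\Gamma)\,u'$, so $\Theta\subseteq\Theta\star\Gamma$, and symmetrically $\Gamma\subseteq\Theta\star\Gamma$ via $u\Theta u\Gamma u'$. For minimality, let $\Delta$ be any equivalence with $\Theta\subseteq\Delta$ and $\Gamma\subseteq\Delta$; if $u\,(\Theta\star\Gamma)\,u'$, pick $v$ with $u\Theta v\Gamma u'$, whence $u\Delta v\Delta u'$, and transitivity of $\Delta$ yields $u\Delta u'$. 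Thus $\Theta\star\Gamma\subseteq\Delta$.

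The heart of the matter is part 2. My strategy is to unwind both sides as membership conditions stated directly in terms of the relations, using the elementary reformulation that $u\in\sigma_{\Theta}(X)$ iff there is $x\in X$ with $u\Theta x$ (i.e. the $\Theta$-block of $u$ meets $X$). On one side, $u\in\sigma_{\Theta\star\Gamma}(X)$ iff there is $x\in X$ with $u\,(\Theta\star\Gamma)\,x$, i.e. iff there exist $v\in U$ and $x\in X$ with $u\Theta v\Gamma x$. On the other side, $u\in(\sigma_{\Theta}\circ\sigma_{\Gamma})(X)=\sigma_{\Theta}(\sigma_{\Gamma}(X))$ iff there is $w\in\sigma_{\Gamma}(X)$ with $u\Theta w$, i.e. iff there exist $w\in U$ and $x\in X$ with $u\Theta w$ and $w\Gamma x$. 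These two conditions are verbatim the same, with $w$ playing the role of $v$, so the two sets coincide for every $X\subseteq U$, giving $\sigma_{\Theta\star\Gamma}=\sigma_{\Theta}\circ\sigma_{\Gamma}$.

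Part 3 then costs nothing: by part 2 applied to both orderings, $\sigma_{\Theta}\circ\sigma_{\Gamma}=\sigma_{\Theta\star\Gamma}$ and $\sigma_{\Gamma}\circ\sigma_{\Theta}=\sigma_{\Gamma\star\Theta}$, while commutativity gives $\Theta\star\Gamma=\Gamma\star\Theta$, whence the two composites agree.

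I do not anticipate a serious obstacle, since the lemma is essentially bookkeeping. The only points requiring care are that part 1 (or Lemma~\ref{star product of two equivalences}) be established first, so that $\sigma_{\Theta\star\Gamma}$ is a bona fide saturation operator attached to an equivalence relation, and that in part 2 the existential quantifiers over the intermediate element match up exactly. For this reason I would state explicitly at the outset the translation $u\in\sigma_{\Theta}(X)\iff\exists x\in X\,(u\Theta x)$, since the entire argument is just its repeated application.
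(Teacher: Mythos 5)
Your proposal is correct and follows essentially the same route as the paper: part 1 by the same containment-plus-minimality argument, part 2 by unwinding $u\in\sigma_{\Theta\star\Gamma}(X)$ and $u\in\sigma_{\Theta}(\sigma_{\Gamma}(X))$ into the identical existential condition $\exists v\,\exists x\in X\,(u\Theta v\Gamma x)$, and part 3 as an immediate consequence of part 2 and commutativity. The only difference is presentational: you make explicit the translation $u\in\sigma_{\Theta}(X)\iff\exists x\in X\,(u\Theta x)$, which the paper uses implicitly.
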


\begin{proof}
1. Assume $u\Theta u'$. Then $u\Theta u'\Gamma u'$ and thus $u\Theta\star\Gamma u'$, so $\Theta\subseteq\Theta\star\Gamma$, and analogously $\Gamma\subseteq\Theta\star\Gamma$. Conversely, let $\Theta,\Gamma\subseteq\Lambda\in Eq(U)$ and and assume $u\Theta\star\Gamma u'$. This means that $u\Theta v\Gamma u'$ for some $v\in U$. Hence $u\Lambda v\Lambda u'$ and so $u\Lambda u'$.

2. Let $X\subseteq U$. Then $u\in \sigma_{\Theta\star\Gamma}(X)$ iff there exists $x\in X$ such that $u\Theta\star\Gamma x$. Now $u\Theta\star\Gamma x$ iff there exists $v\in U$ such that $u\Theta v \Gamma x$. But this is equivalent with $u\in \sigma_{\Theta}(\sigma_{\Gamma}(X))$.

3. By 2. since $\Theta,\Gamma$ commute.
\end{proof}

Call a subset $\mathcal{T}\subseteq Eq(U)$ $\star${\em -closed} iff $\Theta,\Gamma\in \mathcal{T}$ implies $\Theta\star\Gamma\in \mathcal{T}$.

\begin{lemma}   \label{star-closed}
A subset $\mathcal{T}\subseteq Eq(U)$ is $\star$-closed iff $  \underline{\mathcal{T}} = (\mathcal{T}; \star\vert_{\mathcal{T}})$ is a commutative idempotent semigroup.
\end{lemma}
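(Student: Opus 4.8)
The plan is to prove the two implications separately, and to observe at once that the substantive content lies entirely in the left-to-right direction, since membership of the operation's outputs in $\mathcal{T}$ is already bundled into the definition of a semigroup. For the direction from right to left, suppose $\underline{\mathcal{T}} = (\mathcal{T};\star\vert_{\mathcal{T}})$ is a commutative idempotent semigroup. Then in particular $\star\vert_{\mathcal{T}}$ is a binary operation on $\mathcal{T}$, i.e.\ a map $\mathcal{T}\times\mathcal{T}\longrightarrow\mathcal{T}$. This says exactly that $\Theta\star\Gamma\in\mathcal{T}$ whenever $\Theta,\Gamma\in\mathcal{T}$, which is $\star$-closedness. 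So this implication is immediate upon unwinding what a semigroup is.

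For the direction from left to right, assume $\mathcal{T}$ is $\star$-closed, and verify the four requirements for a commutative idempotent semigroup in turn. First, closure is precisely the hypothesis, so $\star\vert_{\mathcal{T}}$ is a well-defined operation on $\mathcal{T}$. Second, for commutativity, take $\Theta,\Gamma\in\mathcal{T}$; then $\star$-closedness yields $\Theta\star\Gamma\in\mathcal{T}\subseteq Eq(U)$, so the product is an equivalence relation, and by Lemma \ref{star product of two equivalences} this forces $\Theta\star\Gamma = \Gamma\star\Theta$. Third, associativity is a generic property of relational product: both $(\Theta\star\Gamma)\star\Lambda$ and $\Theta\star(\Gamma\star\Lambda)$ unfold to the set of pairs $(u,u')$ for which there exist $v,w\in U$ with $u\Theta v\Gamma w\Lambda u'$, so it is inherited with no appeal to the equivalence or commuting conditions. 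Fourth, for idempotency one has $\Theta\star\Theta = \Theta$ for any $\Theta\in Eq(U)$: if $u(\Theta\star\Theta)u'$ then $u\Theta v\Theta u'$ for some $v$, whence $u\Theta u'$ by transitivity, and reflexivity (taking $v = u'$) supplies the converse inclusion. Combining the four, $\underline{\mathcal{T}}$ is a commutative idempotent semigroup.

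I do not expect a genuine obstacle here; the only place where the hypothesis does real work is commutativity, and the key realization is that the demand that $\Theta\star\Gamma$ remain inside $\mathcal{T}\subseteq Eq(U)$ silently forces the factors to commute via Lemma \ref{star product of two equivalences}. Associativity and idempotency are automatic properties of relational product and of equivalence relations respectively, so they cost nothing, and the converse implication is merely the observation that a semigroup operation is by definition closed.
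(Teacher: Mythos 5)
Your proof is correct and follows essentially the same route as the paper's: the backward direction is immediate from the definition of a semigroup operation, commutativity in the forward direction comes from Lemma \ref{star product of two equivalences} (a product of equivalences landing in $Eq(U)$ forces the factors to commute), and associativity and idempotency hold generically on all of $Eq(U)$. You merely spell out the details that the paper's two-sentence proof leaves implicit.
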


\begin{proof}
By Lemma \ref{star product of two equivalences} $\mathcal{T}$ is $\star$-closed iff $\star\vert_{\mathcal{T}}$ is commutative. Moreover, $\star$ is associative and idempotent on the whole of $Eq(U)$.
\end{proof}

We will refer to such semigroups shortly as {\em $\star$-semigroups} in $Eq(U)$. For an arbitrary such $\star$-semigroup $\underline{\mathcal{T}}$ let $Sat(\mathcal{T}) = \{\sigma_{\Theta}: \Theta\in\mathcal{T}\}$ and put $\underline{Sat}(\mathcal{T}) = (Sat(\mathcal{T});\circ)$.

\begin{proposition}  \label{semigroup of saturation operators}
$\underline{Sat}(\mathcal{T})$ is a commutative idempotent semigroup isomorphic to $\underline{\mathcal{T}}$.
\end{proposition}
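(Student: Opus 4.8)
The plan is to show that the obvious candidate map $\Psi\colon\mathcal{T}\longrightarrow Sat(\mathcal{T})$ given by $\Psi(\Theta)=\sigma_\Theta$ is an isomorphism of semigroups; the assertion that $\underline{Sat}(\mathcal{T})$ is itself a commutative idempotent semigroup then follows automatically, since this property is transported across any semigroup isomorphism from $\underline{\mathcal{T}}$, which is such a semigroup by Lemma \ref{star-closed}.

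First I would check that $\Psi$ is a homomorphism. Since $\mathcal{T}$ is $\star$-closed, any two $\Theta,\Gamma\in\mathcal{T}$ satisfy $\Theta\star\Gamma\in\mathcal{T}\subseteq Eq(U)$, so by Lemma \ref{star product of two equivalences} they commute. Hence Lemma \ref{commuting equivalences}.2 applies and gives $\Psi(\Theta\star\Gamma)=\sigma_{\Theta\star\Gamma}=\sigma_\Theta\circ\sigma_\Gamma=\Psi(\Theta)\circ\Psi(\Gamma)$. This simultaneously shows that $Sat(\mathcal{T})$ is closed under $\circ$, so that $(Sat(\mathcal{T});\circ)$ is genuinely a semigroup and $\Psi$ respects its operation. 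Surjectivity of $\Psi$ is immediate from the very definition $Sat(\mathcal{T})=\{\sigma_\Theta:\Theta\in\mathcal{T}\}$.

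The only real work lies in injectivity, which I expect to be the main (though mild) obstacle. The key observation is that $\Theta$ can be reconstructed from $\sigma_\Theta$: for any $u\in U$, the set $\sigma_\Theta(\{u\})$ is precisely the $\Theta$-block containing $u$, so that $u'\,\Theta\,u$ holds iff $u'\in\sigma_\Theta(\{u\})$. Thus if $\sigma_\Theta=\sigma_\Gamma$, then $\Theta$ and $\Gamma$ have the same block through every point, forcing $\Theta=\Gamma$. This makes $\Psi$ injective, hence bijective, and therefore an isomorphism. Finally, pulling the commutative and idempotent semigroup structure of $\underline{\mathcal{T}}$ back along $\Psi^{-1}$ shows that $\underline{Sat}(\mathcal{T})$ shares these properties, completing the argument.
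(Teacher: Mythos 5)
Your proposal is correct and follows essentially the same route as the paper: the homomorphism property comes from Lemma \ref{commuting equivalences}, and injectivity comes from recovering $\Theta$ via $u'\,\Theta\,u$ iff $u'\in\sigma_\Theta(\{u\})$, exactly as in the paper's proof. Your version merely spells out the surjectivity and the transport of the commutative idempotent structure, which the paper leaves implicit.
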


\begin{proof}
The map $\Theta\longmapsto\sigma_{\Theta}$ from  $\underline{\mathcal{T}}$ to $\underline{Sat}(\mathcal{T})$ is one-to-one and onto as $\Theta$ may be recovered from $\sigma_{\Theta}$ by $u\Theta u'$ iff $u'\in \sigma_{\Theta}(\{u\})$. It is a semigroup homomorphism by Lemma \ref{commuting equivalences}.
\end{proof}

Our interest in $\star$-semigroups and their associated semigroups of saturation operators is based on the following case:

Let $(\underline{\Phi};\underline{E})$ be any  information algebra. For $\eps\in E$ define an equivalence relation $\equiv_\epsilon$ in $Eq(\Phi)$ by $\phi \equiv_\epsilon \psi$ if $\epsilon(\phi) = \epsilon(\psi)$, that is, $\equiv_\epsilon$ is   $ker\, \epsilon$, the {\em kernel} of $\epsilon$.

\begin{theorem} \label{kernels commute}
For any $\eps,\eta\in E$, we have $ker\,\eps\: \star\: ker\,\eta = ker\,(\eps\circ \eta)$, that is, $\underline{\mathcal{E}} = (\{ker\,\eps: \eps\in E\},\star)$ is a $\star$-semigroup in $Eq(\Phi)$.
\end{theorem}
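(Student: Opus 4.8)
The plan is to establish the asserted identity $ker\,\epsilon \star ker\,\eta = ker\,(\epsilon \circ \eta)$ as an equality of binary relations on $\Phi$, proving both inclusions separately, and then to read off the $\star$-semigroup claim. Unfolding the definitions of the kernel and of the relational product, $\phi\,(ker\,\epsilon \star ker\,\eta)\,\psi$ means that there exists some $\chi \in \Phi$ with $\epsilon(\phi) = \epsilon(\chi)$ and $\eta(\chi) = \eta(\psi)$, whereas $\phi\,(ker\,(\epsilon\circ\eta))\,\psi$ means simply $\epsilon(\eta(\phi)) = \epsilon(\eta(\psi))$. Throughout I will use that, by condition (C), $\epsilon$ and $\eta$ commute as operators, i.e. $\epsilon \circ \eta = \eta \circ \epsilon$, together with (A) and (Q).

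For the inclusion $ker\,\epsilon \star ker\,\eta \subseteq ker\,(\epsilon\circ\eta)$, suppose $\chi$ witnesses the left-hand relation. From $\epsilon(\phi) = \epsilon(\chi)$, applying $\eta$ and using (C), I obtain $\epsilon(\eta(\phi)) = \eta(\epsilon(\phi)) = \eta(\epsilon(\chi)) = \epsilon(\eta(\chi))$; from $\eta(\chi) = \eta(\psi)$, applying $\epsilon$ gives $\epsilon(\eta(\chi)) = \epsilon(\eta(\psi))$. Chaining these yields $\epsilon(\eta(\phi)) = \epsilon(\eta(\psi))$, as required. This direction is routine and uses only (C).

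The substantive direction is $ker\,(\epsilon\circ\eta) \subseteq ker\,\epsilon \star ker\,\eta$, where a witness $\chi$ must be manufactured out of the bare hypothesis $\epsilon(\eta(\phi)) = \epsilon(\eta(\psi))$; this is where I expect the main difficulty to lie, since nothing in the hypothesis names a candidate. The key idea is to take $\chi := \epsilon(\phi) \cdot \eta(\psi)$. To verify $\epsilon(\chi) = \epsilon(\phi)$, I apply (Q) to get $\epsilon(\epsilon(\phi)\cdot\eta(\psi)) = \epsilon(\phi)\cdot\epsilon(\eta(\psi))$, and then observe that $\epsilon(\eta(\psi)) \le \epsilon(\phi)$: indeed $\epsilon(\eta(\psi)) = \epsilon(\eta(\phi)) = \eta(\epsilon(\phi)) \le \epsilon(\phi)$, using the hypothesis, then (C), then (A); hence the product collapses to $\epsilon(\phi)$. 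Symmetrically, to verify $\eta(\chi) = \eta(\psi)$, I rewrite $\chi = \eta(\psi)\cdot\epsilon(\phi)$ and apply (Q) to get $\eta(\chi) = \eta(\psi)\cdot\eta(\epsilon(\phi))$, then note $\eta(\epsilon(\phi)) = \epsilon(\eta(\phi)) = \epsilon(\eta(\psi)) \le \eta(\psi)$ (again by (C), the hypothesis and (A)), so this product collapses to $\eta(\psi)$. Thus $\chi$ is the desired witness, and the inclusion holds.

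Having the identity, the $\star$-semigroup claim follows quickly. The relation $ker\,(\epsilon\circ\eta)$ is the kernel of a map and hence automatically an equivalence relation, so $ker\,\epsilon \star ker\,\eta \in Eq(\Phi)$; moreover $\epsilon\circ\eta \in E$ because $E$ is closed under composition, so this relation again belongs to $\mathcal{E} = \{ker\,\epsilon : \epsilon \in E\}$. Thus $\mathcal{E}$ is $\star$-closed, and by Lemma~\ref{star-closed} the pair $(\mathcal{E};\star\vert_{\mathcal{E}})$ is a commutative idempotent semigroup, i.e. a $\star$-semigroup in $Eq(\Phi)$, as claimed.
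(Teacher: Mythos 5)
Your proof is correct and follows essentially the same route as the paper's: the easy inclusion is the same chain of applications of (C), and for the substantive inclusion you construct exactly the paper's witness $\chi=\epsilon(\phi)\cdot\eta(\psi)$ and verify $\epsilon(\chi)=\epsilon(\phi)$ and $\eta(\chi)=\eta(\psi)$ by the same (Q)/(C)/(A) computation, merely phrased via the information order instead of explicit products. The closing remarks on $\star$-closure of $\mathcal{E}$ are also fine.
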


\begin{proof}
Assume first that $(\phi,\psi)\in ker\,\eps\: \star\: ker\,\eta$. So there is $\chi\in\Phi$ such that $\eps\phi = \eps\chi$ and $\eta\chi = \eta\psi$. Now $\eta\eps\phi \overset{ \text{by ass.} }{=} \eta\eps\chi \overset{ \text{(C)} }{=} \eps\eta\chi        \overset{ \text{by ass.} }{=} \eps\eta\psi \overset{ \text{(C)} }{=} \eta\eps\psi$. It follows that $(\phi,\psi)\in ker\,\eta\circ\eps = ker\,\eps\circ\eta$.

Conversely, assume that $(\phi,\psi)\in ker\,(\eps\circ\eta)$. So $\eps\eta\phi = \eps\eta\psi$ and using (C), we obtain
\begin{eqnarray} \label{instead of lozenge}
\eta\eps\phi  = \eps\eta\phi = \eps\eta\psi = \eta\eps\psi.
\end{eqnarray}
Put $\xi := \eps\phi\cdot\eta\psi$. Then $\eps\xi = \eps(\eps\phi\cdot\eta\psi) \overset{ \text{(I),(Q)} }{=} \eps\phi\cdot\eps\eta\psi \overset{(\ref{instead of lozenge})}{=} \eps\phi\cdot \eta\eps\phi \overset{ \text{(A)} }{=} \eps\phi$. Similarly, one obtains $\eta\xi = \eta\psi$. So $(\phi,\psi)\in ker\,\eps\: \star\: ker\,\eta$.
\end{proof}

\bigskip

{\em Construction of set algebras}

\smallskip

We will construct an information algebra $SetAlg(\Phi,\mathcal{T})$ based on a join-subsemilattice $\underline{\Phi}$ of $\underline{P}^d(U)$ containing $U$ and $\emptyset$,  and a $\star$-semigroup $\underline{\mathcal{T}}$ in $Eq(U)$.

This means that $\Phi$ is closed under ordinary set intersection and that the semilattice operation $\vee$ on $\Phi$ is given by $\phi\vee\psi = \phi\cap\psi = \phi\cdot\psi$, for all $\phi,\psi\subseteq U$ belonging to $\Phi$. The corresponding order $\leq$ on $\Phi$ is then $\phi\leq\psi$ iff $\phi\supseteq\psi$, and $U\leq \phi\leq \emptyset$ for all $\phi\in\Phi$.

Note that for any $\star$-semigroup $\underline{\mathcal{T}}$ in $Eq(U)$ we have $\sigma_{\Theta}(U) = U$ and $\sigma_{\Theta}(\emptyset) = \emptyset$ for all $\sigma_{\Theta}\in Sat(\mathcal{T})$. $\underline{\mathcal{T}}$ will be called $\Phi$-{\em compatible} (or just compatible, if $\Phi$ is clear from the context) iff $\Phi$ is closed under all $\sigma_{\Theta}\in Sat(\mathcal{T})$, that is, $\sigma_{\Theta}(\phi)\in \Phi$ for all $\phi\in\Phi, \Theta\in \mathcal{T}$.

Define $Sat_{\Phi}(\mathcal{T}) = \{\sigma_{\Theta}\vert_{\Phi}: \Theta\in\mathcal{T}\}$, and $\underline{Sat}_{\Phi} =(Sat_{\Phi}(\mathcal{T}),\circ)$.

\begin{theorem}  \label{def set algebras}
Let $\underline{\Phi}$ be a be a $(\cap,U,\emptyset)$-subsemilattice of $\underline{P}^d(U)$ containing $U$ and $\emptyset$, and $\underline{\mathcal{T}}$\hspace{0.5mm} a  $\Phi-compatible$ $\star$-semigroup in $Eq(U)$. Then $SetAlg(\underline{\Phi},\underline{\mathcal{T}}) := (\underline{\Phi};\underline{Sat}_{\Phi})$  is an information algebra.
\end{theorem}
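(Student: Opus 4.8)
The plan is to verify, one component at a time, each clause in the definition of a commutative domain-free information algebra for the two-sorted structure $(\underline{\Phi};\underline{Sat}_\Phi)$, drawing on the lemmas already established for saturation operators and $\star$-semigroups. The whole argument is essentially one of assembling earlier results; the only place where genuine care is needed is well-definedness of the restricted operators, which is precisely what the $\Phi$-compatibility hypothesis delivers.

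First the semigroup of pieces of information. By hypothesis $\underline{\Phi}$ is a $(\cap,U,\emptyset)$-subsemilattice of $\underline{P}^d(U)$ containing $U$ and $\emptyset$. Since $\cap$ is associative, commutative and idempotent, and since under information order $U$ is the least and $\emptyset$ the greatest element (so that $U\cap\phi=\phi$ and $\emptyset\cap\phi=\emptyset$), the reduct $(\Phi;\cap,U,\emptyset)$ is at once a commutative idempotent semigroup with unit $1=U$ and null $0=\emptyset$. This step requires nothing beyond invoking the subsemilattice hypothesis.

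Next the extraction operators. Fix $\Theta\in\mathcal{T}$. I would first record that $\sigma_\Theta\vert_\Phi$ really is an operator \emph{on} $\Phi$: this is exactly the content of $\Phi$-compatibility, namely $\sigma_\Theta(\phi)\in\Phi$ for all $\phi\in\Phi$. Granting this, conditions (N), (A), (Q) for $\sigma_\Theta\vert_\Phi$ are inherited verbatim from Corollary \ref{saturation and extraction} together with Lemma \ref{saturation operators}, since items 1, 2 and 5 there hold for all subsets of $U$ and in particular for members of $\Phi$; moreover every set appearing in (Q), namely $\sigma_\Theta(\phi)$, $\sigma_\Theta(\psi)$ and their intersection, lies in $\Phi$ by compatibility and closure under $\cap$. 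Thus each $\sigma_\Theta\vert_\Phi$ is an extraction operator on $\underline{\Phi}$.

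It then remains to treat $\underline{Sat}_\Phi$ as a semigroup under composition, i.e. conditions (C) and closure. Since $\mathcal{T}$ is $\star$-closed, Lemma \ref{star-closed} (via Lemma \ref{star product of two equivalences}) yields that any $\Theta,\Gamma\in\mathcal{T}$ commute, whence condition (C), $\sigma_\Theta\vert_\Phi\circ\sigma_\Gamma\vert_\Phi=\sigma_\Gamma\vert_\Phi\circ\sigma_\Theta\vert_\Phi$, follows by restricting part 3 of Lemma \ref{commuting equivalences} to $\Phi$. For closure the key computation is that, for $\phi\in\Phi$, $(\sigma_\Theta\vert_\Phi\circ\sigma_\Gamma\vert_\Phi)(\phi)=\sigma_\Theta(\sigma_\Gamma(\phi))=\sigma_{\Theta\star\Gamma}(\phi)$ by part 2 of Lemma \ref{commuting equivalences} --- a step that again needs $\sigma_\Gamma(\phi)\in\Phi$, supplied by compatibility. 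As $\Theta\star\Gamma\in\mathcal{T}$, this gives $\sigma_\Theta\vert_\Phi\circ\sigma_\Gamma\vert_\Phi=\sigma_{\Theta\star\Gamma}\vert_\Phi\in Sat_\Phi(\mathcal{T})$, so the family is closed under $\circ$; idempotency of the composition semigroup drops out from $\Theta\star\Theta=\Theta$ (equivalently from (I)). The upshot is that the point demanding attention --- and the reason the compatibility hypothesis is present --- is ensuring that every restricted operator and every composite of such operators again maps $\Phi$ into $\Phi$; once this is secured, (N), (A), (Q), (C) and closure all descend from identities proved in Lemmas \ref{saturation operators} and \ref{commuting equivalences} for arbitrary subsets of $U$.
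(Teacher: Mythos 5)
Your proof is correct and follows essentially the same route as the paper, which simply cites Corollary \ref{saturation and extraction}, Lemma \ref{commuting equivalences} and Proposition \ref{semigroup of saturation operators}; you have merely unpacked those citations, correctly isolating $\Phi$-compatibility as the one point needing explicit attention for well-definedness of the restricted operators and their composites.
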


\begin{proof}
Corollary \ref{saturation and extraction}, Lemma \ref{commuting equivalences} and Proposition \ref{semigroup of saturation operators}.
\end{proof}

The algebras described by Theorem \ref{def set algebras} will be called {\em set algebras} in the sequel. As it will turn out, they are the archetypes of  information algebras.

\bigskip

{\em A special type of set algebras}

\smallskip

Any set algebra  $SetAlg(\underline{\Phi},\underline{\mathcal{T}})$  is forced by definition to contain, as members of $\Phi$,  many sets which are unions of blocks of some equivalence $\Theta\in\mathcal{T}$. Is it possible to have a set algebra where $\Phi$ consists precisely of {\em all} possible unions of this type? The following proposition shows that is the case iff $\mathcal{T}$ satisfies a simple property:

\begin{theorem}   \label{SQ-algebras}
Let $\underline{\mathcal{T}}$ be a $\star$-semigroup in $Eq(U)$ and put $\Phi=\{X\subseteq U: X = \bigcup_i B_i\text{ where } B_i\in \Theta \text{ for some } \Theta\in\mathcal{T}\}$. Then $SetAlg(\underline{\Phi},\underline{\mathcal{T}})$ is a set algebra if and only if $\mathcal{T}$ is downwards directed in $Eq(U)$ ordered by standard set inclusion.
\end{theorem}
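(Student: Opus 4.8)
The plan is to reduce the statement to a single closure property and then treat the two implications separately, the converse being the delicate one.

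First I would record that, by Theorem \ref{def set algebras}, the only clause that can possibly fail is that $\underline{\Phi}$ be a $(\cap,U,\emptyset)$-subsemilattice of $\underline{P}^d(U)$. Indeed, $\emptyset$ is the empty union of blocks and $U$ is the union of all blocks of any $\Theta\in\mathcal{T}$, so $U,\emptyset\in\Phi$; and $\Phi$-compatibility is automatic, since for $\phi\in\Phi$ and $\Theta\in\mathcal{T}$ the set $\sigma_{\Theta}(\phi)$ is by its very definition a union of $\Theta$-blocks, hence a member of $\Phi$. Consequently $SetAlg(\underline{\Phi},\underline{\mathcal{T}})$ is a set algebra if and only if $\Phi$ is closed under intersection. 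For bookkeeping I would write $\Phi_{\Theta}=\{X\subseteq U: \sigma_{\Theta}(X)=X\}$ for the $\sigma_{\Theta}$-saturated sets, so that $\Phi=\bigcup_{\Theta\in\mathcal{T}}\Phi_{\Theta}$, and note the order-reversing dictionary $\Lambda\subseteq\Theta\iff\Phi_{\Theta}\subseteq\Phi_{\Lambda}$ (a $\Lambda$-refinement of $\Theta$ turns every $\Theta$-block, and hence every union of $\Theta$-blocks, into a union of $\Lambda$-blocks).

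For the implication from directedness to closure, let $X,Y\in\Phi$, say $X\in\Phi_{\Theta}$ and $Y\in\Phi_{\Gamma}$. Choosing, by downwards directedness, some $\Lambda\in\mathcal{T}$ with $\Lambda\subseteq\Theta$ and $\Lambda\subseteq\Gamma$, the dictionary above gives $X,Y\in\Phi_{\Lambda}$; now item 4 of Lemma \ref{saturation operators} shows that $\Phi_{\Lambda}$ is closed under intersection, whence $X\cap Y\in\Phi_{\Lambda}\subseteq\Phi$. This direction is routine.

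The substance of the proof lies in the converse. Fixing $\Theta,\Gamma\in\mathcal{T}$, I would first observe that every nonempty intersection $B\cap C$ of a $\Theta$-block $B$ with a $\Gamma$-block $C$ is a single block of the meet $\Theta\cap\Gamma$ (if $w\in B\cap C$ then $B\cap C=[w]_{\Theta}\cap[w]_{\Gamma}=[w]_{\Theta\cap\Gamma}$); since $B,C\in\Phi$ and $\Phi$ is closed under intersection, each such $(\Theta\cap\Gamma)$-block lies in $\Phi$. What remains --- and this is the crux --- is to pass from the statement ``each $(\Theta\cap\Gamma)$-block is $\sigma_{\Lambda}$-saturated for some $\Lambda\in\mathcal{T}$'' to the existence of one single $\Lambda\in\mathcal{T}$ refining $\Theta\cap\Gamma$ globally, i.e.\ with $\Lambda\subseteq\Theta$ and $\Lambda\subseteq\Gamma$. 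The main obstacle is exactly this local-to-global step: a single witnessing block forces $\Lambda$ to be fine only on that block, and one cannot in general combine the various local witnesses by the semigroup operation, since $\star$ produces joins (coarser equivalences) rather than the meets one would need. I expect to overcome this by exploiting that all members of $\mathcal{T}$ commute pairwise (Lemma \ref{star product of two equivalences}): permutability severely restricts which equivalences can saturate a prescribed block, and --- as the smallest non-directed examples already show --- typically forces the required common refinement back into $\mathcal{T}$. Concretely I would argue the contrapositive: assuming $\Theta,\Gamma$ have no common lower bound in $\mathcal{T}$, I would use, for each candidate $\Lambda$, a pair of points witnessing $\Lambda\not\subseteq\Theta\cap\Gamma$, and assemble from the grid structure of the commuting pair $\Theta,\Gamma$ (Lemma \ref{commuting equivalences}) two saturated sets $X\in\Phi_{\Theta}$ and $Y\in\Phi_{\Gamma}$ whose intersection is $\sigma_{\Lambda}$-saturated for no $\Lambda\in\mathcal{T}$, thereby breaking closure.
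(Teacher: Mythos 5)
Your reduction to closure of $\Phi$ under intersection is correct, and your proof that downward directedness implies this closure is complete and agrees in substance with the paper (which decomposes $X\cap Y$ into pairwise block intersections where you use the dictionary $\Lambda\subseteq\Theta\iff\Phi_{\Theta}\subseteq\Phi_{\Lambda}$). The gap is in the converse, and your text stops exactly where the proof would have to begin: ``I expect to overcome this by exploiting\dots'' and ``I would argue the contrapositive\dots and assemble\dots two saturated sets'' announce a strategy without executing it. This is not a stylistic omission, because the step you yourself single out as the crux --- passing from ``each $(\Theta\cap\Gamma)$-block is saturated by \emph{some} member of $\mathcal{T}$'' to ``\emph{one} member of $\mathcal{T}$ refines $\Theta\cap\Gamma$ globally'' --- is where all the content of the ``only if'' direction lives, and your sketch does not say how a \emph{single} pair $X\in\Phi_{\Theta}$, $Y\in\Phi_{\Gamma}$ is to defeat \emph{every} candidate $\Lambda\in\mathcal{T}$ simultaneously (there may be infinitely many).

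Moreover, the two tools you name do not suffice as described. Take $U=\{1,2,3,4\}$, $\Theta$ with blocks $\{1\},\{2\},\{3,4\}$, $\Gamma$ with blocks $\{1,2\},\{3\},\{4\}$, and $\mathcal{T}=\{\Theta,\Gamma,\Theta\star\Gamma\}$, which is $\star$-closed, pairwise commuting and not downward directed. Every intersection of a $\Theta$-block with a $\Gamma$-block already lies in $\Phi$ (each is a block of $\Theta$ or of $\Gamma$), so the local data you extract from single blocks detects nothing; and each $\Theta\star\Gamma$-class here is a degenerate grid (a single row or a single column), so the permutability forcing you appeal to, which operates only inside one $\Theta\star\Gamma$-class, also detects nothing. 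Closure in fact fails, but only for intersections of genuine unions straddling several $\Theta\star\Gamma$-classes, e.g.\ $\{1,3,4\}\cap\{1,2,3\}=\{1,3\}\notin\Phi$. So the witnessing sets must be assembled across classes, and one must then argue that no single $\Lambda\in\mathcal{T}$ saturates the result; this argument is missing. (In fairness, the paper's own proof is equally terse at this point --- its final ``This is clearly the case exactly if\dots'' establishes only the direction needed for sufficiency --- so you have correctly located the real difficulty of the theorem, but you have not resolved it.)
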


\begin{proof}
We only need to show that $\underline{\Phi}$ is a  join-subsemilattice of $\underline{P}^d(U)$ containing $U$ and $\emptyset$. Now $\emptyset\in\Phi$ as the set union of the empty collection of blocks from any $\Theta\in\mathcal{T}$, and $U\in\Phi$ as the set union of {\em all} blocks of any $\Theta\in\mathcal{T}$. It remains to show that $\Phi$ is closed under set intersection. Then $\bigcup_i B_i\cap\bigcup_jB_j' = \bigcup_{i,j}(B_i\cap B_j')$ for blocks $B_i\in \Theta$, $ B_j'\in \Theta'$, so it suffices to show that $B_i\cap B_j'$ is a union of blocks of some $\Theta''\in\mathcal{T}$ for all $i,j$. This is clearly the case exactly if for any $\Theta,\Theta'\in\mathcal{T}$ there exists $\Theta''\in\mathcal{T}$ such that $\Theta''\subseteq\Theta$ and $\Theta''\subseteq\Theta'$.
\end{proof}

\bigskip

\subsection{A general representation theorem}  \label{GenRep Thm}

\smallskip

If $(X,\leq)$ is any ordered set, $V\subseteq X$ is called an {\em up-set} if $v\in V$ and $v\leq x$ jointly imply that $x\in V$. For any $x\in V$, the {\em principal up-set} generated by $x$ is given by  $\uparrow\!x =\{y\in V: x\leq y\}$. We write $\mathcal{U}_p(X)$ for the collection of all principal up-sets in $X$, considered as an ordered set under ordinary set inclusion.

In the following, $(\underline{\Phi};\underline{E})$ will denote an arbitrary but fixed  information algebra. Put $\Phi_0 := \Phi\setminus\{0\}$. Our aim is to construct a set algebra based on the universe $\Phi_0$.

For any $\phi,\psi\in \Phi_0$, we have $\uparrow\!\phi\;\cap\uparrow\!\psi = \uparrow\!(\phi\vee \psi)$ if $\phi\vee\psi\in\Phi_0$, and $\uparrow\!\phi\;\cap\uparrow\!\psi = \emptyset$ if $\phi\vee\psi = 0$. Let $\mathcal{U}_p^+(\Phi_0) := \mathcal{U}_p(\Phi_0)\cup\{\emptyset\}$.

It follows that $\underline{\mathcal{U}}_p^+(\Phi_0) = (\mathcal{U}_p^+(\Phi_0); \cap, \Phi_0,\emptyset)$ is a ($\cap,\Phi_0,\emptyset$)-subsemilattice of $\underline{P}^d(\Phi_0)$.

While $\Phi_0$ is not be closed under the join operation of $\underline{\Phi}$ (unless $\Phi_0$ happens to contain a greatest element), it is closed under all extractions $\eps\in E$ since $\eps\phi = 0$ implies $\phi = 0$ as $\eps\phi\leq\phi$. Similarly, if $\eps\phi = \eps 0$ for some $\phi\in\Phi$ and $\eps\in E$, then $\eps\phi = 0$ and again $\phi = 0$. In other words, the $\equiv_{\eps}$-class of $0$ is $\{0\}$, where $\equiv_{\eps}$ is infix for the kernel $ker\, \eps$ of $\eps$. It follows that $\Phi_0$ is also closed under $ker\, \eps$. Abusing notation in a trivial way, we do not distinguish  between $\eps$ resp. $\equiv_{\eps}$ and their restrictions to $\Phi_0$. Let $\mathcal{E} = \{ker\,\eps: \eps\in E\}$. So $\mathcal{E}$ is a $\star$-semigroup in $Eq(\Phi_0)$ by Theorem \ref{kernels commute}. Denote the saturation operator on $\underline{P}^d(\Phi_0)$ associated with $\equiv_{\eps}$ by $\sigma_{\eps}$, and put $Sat(\mathcal{E}) := \{\sigma_{\eps}: \eps\in E\}$

\begin{lemma}
$\mathcal{U}_p(\Phi_0)$ is closed under all $\sigma_\eps\in Sat(\mathcal{E})$.
\end{lemma}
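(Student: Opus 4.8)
The plan is to show that the saturation operator $\sigma_\eps$ maps each principal up-set $\uparrow\!\phi$ (for $\phi \in \Phi_0$) to another principal up-set, namely $\uparrow\!\eps\phi$. Since $\sigma_\eps$ is defined as $\sigma_\eps(X) = \bigcup\{B : B \in\, \equiv_\eps \text{ and } B \cap X \neq \emptyset\}$, and since the block of $\equiv_\eps$ containing a given $\psi$ is $\{\chi : \eps\chi = \eps\psi\}$, the natural guess is that $\sigma_\eps(\uparrow\!\phi)$ equals the principal up-set generated by $\eps\phi$. I would first verify that $\eps\phi \in \Phi_0$, which is immediate from the remarks preceding the lemma (since $\eps\phi = 0$ forces $\phi = 0$).

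The core of the argument is the set equality
\begin{equation*}
\sigma_\eps(\uparrow\!\phi) = \uparrow\!\eps\phi.
\end{equation*}
For the inclusion $\supseteq$, I would take $\psi \in\, \uparrow\!\eps\phi$, so $\eps\phi \leq \psi$, and show $\psi$ lies in the block of some element of $\uparrow\!\phi$. The natural candidate element of $\uparrow\!\phi$ to combine $\psi$ with is $\phi \vee \psi$: one checks that $\phi \leq \phi \vee \psi$ (so $\phi \vee \psi \in\, \uparrow\!\phi$) and that $\eps(\phi \vee \psi) = \eps\psi$, which would place $\psi$ and $\phi\vee\psi$ in the same $\equiv_\eps$-block. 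The identity $\eps(\phi \vee \psi) = \eps\phi \vee \eps\psi = \eps\psi$ follows from condition (Q) together with the hypothesis $\eps\phi \leq \psi$ (which gives $\eps\phi \vee \eps\psi = \eps\psi$ after noting $\eps$ is order-preserving, Lemma~\ref{extraction preserves order}). Here one must be careful that $\phi \vee \psi \neq 0$, i.e.\ that it remains in $\Phi_0$; this needs checking, since otherwise the relevant block is empty.

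For the reverse inclusion $\subseteq$, I would take $\psi \in \sigma_\eps(\uparrow\!\phi)$, meaning there is some $\chi \geq \phi$ with $\eps\chi = \eps\psi$, and deduce $\eps\phi \leq \psi$. Since $\phi \leq \chi$ and $\eps$ preserves order, $\eps\phi \leq \eps\chi = \eps\psi \leq \psi$ (the last step by condition (A)), giving $\psi \in\, \uparrow\!\eps\phi$ as required. This direction is the cleaner of the two.

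The main obstacle I anticipate is the bookkeeping around the null element $0$: the lemma concerns $\mathcal{U}_p(\Phi_0)$ over the restricted universe $\Phi_0$, so throughout I must ensure that the elements produced ($\eps\phi$, and especially $\phi\vee\psi$) actually belong to $\Phi_0$ rather than collapsing to $0$, and that the $\equiv_\eps$-blocks in question are blocks of the \emph{restricted} relation on $\Phi_0$. The paper has already isolated the key fact that the $\equiv_\eps$-class of $0$ is the singleton $\{0\}$, so $\Phi_0$ is $\equiv_\eps$-saturated and the restriction behaves well; invoking this should neutralize the difficulty. Assuming the equality $\sigma_\eps(\uparrow\!\phi) = \uparrow\!\eps\phi$ is established, closure of $\mathcal{U}_p(\Phi_0)$ under every $\sigma_\eps$ is immediate, since the right-hand side is again a principal up-set in $\Phi_0$.
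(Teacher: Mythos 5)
Your overall strategy is the paper's: prove the set identity $\sigma_\eps(\uparrow\!\phi)=\;\uparrow\!\eps\phi$ and conclude. The $\subseteq$ direction is correct and matches the paper. But the $\supseteq$ direction has a genuine gap: the witness $\phi\vee\psi$ is the wrong element, and the identity $\eps(\phi\vee\psi)=\eps\phi\vee\eps\psi$ that you claim ``follows from (Q)'' does not hold. Condition (Q) reads $\eps(\eps\phi\vee\psi)=\eps\phi\vee\eps\psi$; it only applies when one of the two arguments already lies in the image of $\eps$, and extraction operators are \emph{not} join-homomorphisms in general. A concrete counterexample inside a set algebra: take $U=\{1,2,3\}$, $\Theta$ with blocks $\{1,2\}$ and $\{3\}$, $\phi=\{1,3\}$ and $\psi=\{2,3\}$ (recall combination is intersection and information order is reverse inclusion). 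Then $\eps\phi=\sigma_\Theta(\phi)=U$, so $\eps\phi\leq\psi$, but $\eps(\phi\vee\psi)=\sigma_\Theta(\{3\})=\{3\}$ while $\eps\psi=U$. So $\phi\vee\psi$ is \emph{not} $\equiv_\eps$-equivalent to $\psi$, and your argument breaks. Your worry about $\phi\vee\psi$ collapsing to $0$ is also well founded and is a second, independent failure mode of this witness (take $\psi=\{2\}$ in a similar example); it cannot be ``neutralized,'' it simply shows the candidate is wrong.

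The repair is exactly the paper's choice of witness: take $\chi=\phi\vee\eps\psi$ instead of $\phi\vee\psi$. Then $\chi\geq\phi$, and (Q) applies legitimately to give $\eps\chi=\eps(\eps\psi\vee\phi)=\eps\psi\vee\eps\phi=\eps\psi$, the last step using $\eps\phi\leq\eps\psi$ (from $\eps\phi\leq\psi$, Lemma~\ref{extraction preserves order} and idempotence). This also settles the $0$-bookkeeping automatically: if $\chi$ were $0$ then $\eps\chi=0$, but $\eps\chi=\eps\psi\neq 0$ since $\psi\in\Phi_0$ and the $\equiv_\eps$-class of $0$ is $\{0\}$. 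In the counterexample above this witness is $\{1,3\}\cap\sigma_\Theta(\{2,3\})=\{1,3\}$, which does satisfy $\sigma_\Theta(\{1,3\})=U=\sigma_\Theta(\psi)$. With that substitution your proof goes through and coincides with the paper's.
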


\begin{proof}
We have to show that that $\sigma_\eps(\uparrow\!\phi) \in \mathcal{U}_p(\Phi_0)$ for all $\uparrow\!\phi\in\mathcal{U}_p(\Phi_0)$. We have $\phi\equiv_{\eps}\epsilon(\phi)$ since $\epsilon(\phi) = \eps(\epsilon(\phi))$. Let $\psi\geq\epsilon(\phi)$ and consider $\chi=\phi\vee\epsilon(\psi)\in\ \uparrow\!\phi$: Using (Q), we obtain $\epsilon(\chi) = \epsilon(\phi\vee\epsilon(\psi)) = \epsilon(\phi)\vee\epsilon(\psi)$. But $\psi\geq\epsilon(\phi)$ implies $\epsilon(\psi)\geq\epsilon(\phi)$, so we get $\epsilon(\chi) = \epsilon(\psi)$, that is, $\chi\equiv_{\eps}\psi$ and thus $\psi\in\sigma_{\eps}(\uparrow\!\phi)$. Conversely, if $\chi\geq\phi$ and $\chi\equiv_{\eps}\psi$, then $\epsilon(\psi) = \epsilon(\chi) \geq \epsilon(\phi)$. Summing up, we obtain
\begin{eqnarray} \label{eq:HomOfExtr}
\sigma_{\eps}(\uparrow\!\phi) =\ \uparrow\!(\epsilon(\phi))
\end{eqnarray}
so indeed $\sigma_{\eps}(\uparrow\!\phi)\in\mathcal{U}_p(\Phi_0)$.
\end{proof}

Put $\underline{Sat}(\mathcal{E}) := (Sat(\mathcal{E}),\circ)$. Consequently,

\begin{lemma}
$(\underline{\mathcal{U}}_p^+(\Phi_0);\underline{Sat}(\mathcal{E}))$ is a set algebra, to be called the {\em principal up-set algebra} associated with  $(\underline{\Phi};\underline{E})$.
\end{lemma}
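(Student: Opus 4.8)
The plan is to recognize this lemma as a direct instance of the construction in Theorem \ref{def set algebras}: it suffices to verify that the triple $U := \Phi_0$, $\underline{\Phi} := \underline{\mathcal{U}}_p^+(\Phi_0)$ and $\underline{\mathcal{T}} := \underline{\mathcal{E}}$ satisfies the three hypotheses of that theorem, whereupon $(\underline{\mathcal{U}}_p^+(\Phi_0);\underline{Sat}(\mathcal{E}))$ is a set algebra by definition. No new computation is required; the content lies entirely in the results already assembled above.

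First I would record the semilattice hypothesis. The running text immediately preceding this lemma already notes that $\underline{\mathcal{U}}_p^+(\Phi_0) = (\mathcal{U}_p^+(\Phi_0);\cap,\Phi_0,\emptyset)$ is a $(\cap,\Phi_0,\emptyset)$-subsemilattice of $\underline{P}^d(\Phi_0)$; it contains $\emptyset$ by the very definition of $\mathcal{U}_p^+(\Phi_0)$, and it contains the whole universe $\Phi_0$ because $1$ is the least element of $\underline{\Phi}$, so $\uparrow\!1 = \Phi_0 \in \mathcal{U}_p(\Phi_0)$ (here we tacitly assume $1 \neq 0$, the trivial algebra being of no interest). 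Next, Theorem \ref{kernels commute} supplies exactly the second hypothesis: $\underline{\mathcal{E}} = (\{ker\,\eps : \eps \in E\},\star)$ is a $\star$-semigroup in $Eq(\Phi_0)$.

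It remains to check $\Phi$-compatibility, i.e. that $\mathcal{U}_p^+(\Phi_0)$ is closed under every $\sigma_\eps \in Sat(\mathcal{E})$. This is where the preceding lemma does the work: it establishes $\sigma_\eps(\uparrow\!\phi) =\ \uparrow\!(\eps\phi) \in \mathcal{U}_p(\Phi_0)$ for every principal up-set via the identity (\ref{eq:HomOfExtr}). For the added element $\emptyset$ we invoke item 1 of Lemma \ref{saturation operators}, giving $\sigma_\eps(\emptyset) = \emptyset \in \mathcal{U}_p^+(\Phi_0)$. Hence $\mathcal{U}_p^+(\Phi_0)$ is closed under all of $Sat(\mathcal{E})$, so $\underline{\mathcal{E}}$ is $\Phi$-compatible.

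With all three hypotheses in hand, Theorem \ref{def set algebras} yields that $SetAlg(\underline{\mathcal{U}}_p^+(\Phi_0),\underline{\mathcal{E}}) = (\underline{\mathcal{U}}_p^+(\Phi_0);\underline{Sat}_\Phi(\mathcal{E}))$ is an information algebra of set-algebra type. The only point needing a word is notational: $Sat(\mathcal{E})$ consists of the saturation operators on all of $P(\Phi_0)$, whereas Theorem \ref{def set algebras} uses their restrictions $Sat_\Phi(\mathcal{E})$ to $\Phi = \mathcal{U}_p^+(\Phi_0)$; by the compatibility just verified these restrictions are well-defined endomaps of $\mathcal{U}_p^+(\Phi_0)$, and the two descriptions determine the same algebra. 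No genuine obstacle arises here — the real mathematical labor (the commutation of kernels and the closure identity $\sigma_\eps(\uparrow\!\phi) =\ \uparrow\!(\eps\phi)$) has already been discharged in Theorem \ref{kernels commute} and the preceding lemma, so what is left is simply to confirm that the ingredients match the interface of Theorem \ref{def set algebras}.
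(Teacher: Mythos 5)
Your proposal is correct and follows exactly the route the paper intends: the lemma is stated with no separate proof (the paper simply writes ``Consequently,'' after the closure lemma), because it is precisely the instantiation of Theorem \ref{def set algebras} with the subsemilattice property noted in the text, the $\star$-semigroup property from Theorem \ref{kernels commute}, and the compatibility supplied by the identity $\sigma_{\eps}(\uparrow\!\phi) = \uparrow\!(\eps\phi)$ together with $\sigma_\eps(\emptyset)=\emptyset$. Your additional remarks on the $\emptyset$ case and on the restriction $Sat(\mathcal{E})$ versus $Sat_\Phi(\mathcal{E})$ are accurate and fill in details the paper leaves tacit.
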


Consider the maps  $\verb"i":\Phi\longrightarrow\mathcal{U}_p^+(\Phi_0)$ given by $\phi\longmapsto\uparrow\!\phi$ for $\phi\in\Phi_0$ and $\verb"i" 0 =\emptyset$, resp. $\verb"j":E\longrightarrow Sat(\mathcal{E})$ given by $\eps\longmapsto\sigma_\eps$.

\begin{lemma}
The pair $(\verb"i",\verb"j")$ provides an isomorphism between $(\underline{\Phi};\underline{E})$ and $(\underline{\mathcal{U}}_p^+(\Phi_0);\underline{Sat}(\mathcal{E}))$.
\end{lemma}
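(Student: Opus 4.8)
The plan is to invoke Corollary \ref{alg isos 2}, which reduces the claim to showing that the pair $(\mathtt{i},\mathtt{j})$ is a homomorphism in the sense of Definition \ref{CDF homos} and that both $\mathtt{i}$ and $\mathtt{j}$ are bijective. Most of the homomorphism conditions are already in hand from the preceding development. Condition (1) amounts to $\mathtt{i}(\phi\vee\psi) = \mathtt{i}(\phi)\cap\mathtt{i}(\psi)$, which is exactly the computation $\uparrow\!\phi\;\cap\uparrow\!\psi = \uparrow\!(\phi\vee\psi)$ recorded just before the definition of $\mathcal{U}_p^+(\Phi_0)$, together with the boundary cases in which $\phi$, $\psi$, or $\phi\vee\psi$ equals $0$, where both sides collapse to $\emptyset$. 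Condition (2) is immediate, since $\mathtt{i}0 = \emptyset$ by definition and $\mathtt{i}1 = \uparrow\!1 = \Phi_0$ because $1$ is the least element of $\underline{\Phi}$. Condition (3), namely $\sigma_{\eps\circ\eta} = \sigma_\eps\circ\sigma_\eta$, follows by combining Theorem \ref{kernels commute} (giving $ker(\eps\circ\eta) = ker\,\eps\star ker\,\eta$) with item 2 of Lemma \ref{commuting equivalences} (giving $\sigma_{\Theta\star\Gamma} = \sigma_\Theta\circ\sigma_\Gamma$). Condition (4) is precisely the identity \eqref{eq:HomOfExtr}, $\sigma_\eps(\uparrow\!\phi) = \uparrow\!(\eps\phi)$, supplemented by the trivial case $\phi = 0$, where both sides are $\emptyset$ by (N) and item 1 of Lemma \ref{saturation operators}.

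For bijectivity of $\mathtt{i}$, surjectivity holds by construction, as $\mathcal{U}_p^+(\Phi_0)$ consists exactly of the principal up-sets $\uparrow\!\phi = \mathtt{i}\phi$ (for $\phi\in\Phi_0$) together with $\emptyset = \mathtt{i}0$. Injectivity follows from antisymmetry of $\leq$: a principal up-set determines its generator as its least element, so $\uparrow\!\phi = \uparrow\!\psi$ forces $\phi = \psi$, while $\mathtt{i}\phi = \emptyset$ forces $\phi = 0$. Surjectivity of $\mathtt{j}$ is immediate from the definition $Sat(\mathcal{E}) = \{\sigma_\eps:\eps\in E\}$.

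The one point that deserves genuine attention, and which I expect to be the main obstacle, is the injectivity of $\mathtt{j}$. A priori the operator $\sigma_\eps$ remembers only the kernel $ker\,\eps$, so it is not obvious that $\sigma_\eps = \sigma_\eta$ should force $\eps = \eta$ as operators on all of $\Phi$. This is exactly where \eqref{eq:HomOfExtr} does the decisive work: it exhibits $\eps\phi$ as the generator (least element) of the principal up-set $\sigma_\eps(\uparrow\!\phi)$. Consequently $\sigma_\eps = \sigma_\eta$ yields $\uparrow\!(\eps\phi) = \uparrow\!(\eta\phi)$, hence $\eps\phi = \eta\phi$, for every $\phi\in\Phi_0$; since also $\eps 0 = 0 = \eta 0$ by (N), we conclude $\eps = \eta$. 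With both maps bijective and all four homomorphism conditions verified, Corollary \ref{alg isos 2} delivers the desired isomorphism. Thus the real content of the proof is not any single hard calculation but the conceptual observation, encapsulated in \eqref{eq:HomOfExtr}, that the order structure on $\Phi_0$ allows the saturation operator to recover the \emph{full} extraction operator rather than merely its kernel.
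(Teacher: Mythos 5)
Your proof is correct and follows essentially the same skeleton as the paper's: verify the homomorphism conditions of Definition \ref{CDF homos}, with the identity (\ref{eq:HomOfExtr}) doing the work for condition (4), and then appeal to Corollary \ref{alg isos 2}. Two local differences are worth recording. First, for $\mathtt{j}(\eps\circ\eta)=\mathtt{j}(\eps)\circ\mathtt{j}(\eta)$ you route through Theorem \ref{kernels commute} and item 2 of Lemma \ref{commuting equivalences}, which gives $\sigma_{\eps\circ\eta}=\sigma_\eps\circ\sigma_\eta$ on all of $P(\Phi_0)$; the paper instead argues directly on principal up-sets, observing that the least element of $\eps(\uparrow\!\eta(\phi))$ is $\eps(\eta(\phi))$ and applying (\ref{eq:HomOfExtr}) twice. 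Both are fine; yours reuses machinery already established, the paper's is self-contained at the level of up-sets. Second, the paper dismisses bijectivity of $\mathtt{i}$ and $\mathtt{j}$ as obvious, whereas you correctly identify injectivity of $\mathtt{j}$ as the one point that is not immediate --- $\sigma_\eps$ is a priori built only from $\ker\eps$ --- and close it by reading $\eps\phi$ off as the least element of $\sigma_\eps(\uparrow\!\phi)$ via (\ref{eq:HomOfExtr}). That is a genuine (if small) gap in the paper's exposition that your write-up fills.
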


\begin{proof}
Note first that both $\verb"i"$ and $\verb"j"$ obviously are one-to-one and onto. Further, $\verb"i"$ preserves $\vee$, $1$ and $0$: If $\phi\vee\psi\neq 0\in\Phi$, then $\uparrow\!(\phi\vee\psi) = \uparrow\!\phi\ \cap\uparrow\!\psi$; if $\phi\vee\psi = 0\in\Phi$ then $\uparrow\!\phi\ \cap\uparrow\!\psi = \emptyset$. Further, $\uparrow\!1 = \Phi_0$, and $\verb"i" 0 =\emptyset$ by definition.

Also, $\verb"i"$ and $\verb"j"$ satisfy Def. 2.6.(4): By (\ref{eq:HomOfExtr}) above, we have $\verb"j"(\eps)(\verb"i"\phi)= \newline \sigma_\eps(\uparrow\phi) = \uparrow\!(\eps(\phi)) = \verb"i"(\epsilon(\phi))$ for $\phi\in\Phi_0$, and $\verb"j"(\eps)(\verb"i"0) = \verb"j"(\eps)(\emptyset) = \sigma_\eps(\emptyset) = \newline = \emptyset = \verb"i"(0) = \verb"i"(\eps0)$.

Finally, $\verb"j"$ preserves $\circ$: We have to show that  $\verb"j"(\eps \circ\eta)(\uparrow\phi) =(\verb"j"(\eps) \circ \verb"j"(\eta))(\uparrow\phi)$ for all $\uparrow\!\phi\in\mathcal{U}_p(\Phi_0)$. Observe first that the least element of $\eps(\uparrow\!\eta(\phi))$ is $\eps(\eta(\phi))$ (since $\eps$ is order-preserving). Hence $\uparrow\!\eps(\uparrow\!\eta(\phi)) = \uparrow\!\eps(\eta(\phi))$, that is, $\sigma_\epsilon(\sigma_\eta(\uparrow\!\phi)) = \sigma_{\epsilon\circ\eta}(\uparrow\!\phi)$, using (\ref{eq:HomOfExtr}). The case $\phi = 0$ is trivial.
\end{proof}

\begin{theorem} \label{basic repr thm}
Every  information algebra is isomorphic to a set algebra, more precisely, to its principal up-set algebra.
\end{theorem}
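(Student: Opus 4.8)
The plan is to recognize that this theorem is the capstone of the entire development in Subsection~\ref{GenRep Thm}: for an arbitrary information algebra $(\underline{\Phi};\underline{E})$ every ingredient needed has already been assembled in the preceding lemmas, so the proof amounts to verifying that the construction genuinely applies to \emph{any} such algebra and then invoking the isomorphism already established. My first step would be to fix an arbitrary information algebra $(\underline{\Phi};\underline{E})$ and form $\Phi_0 = \Phi\setminus\{0\}$ together with the candidate carrier $\mathcal{U}_p^+(\Phi_0)$ and the candidate semigroup $Sat(\mathcal{E})$ of saturation operators coming from the kernels $ker\,\eps$.

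Next I would check the three hypotheses of Theorem~\ref{def set algebras}, namely that $\underline{\mathcal{U}}_p^+(\Phi_0)$ is a $(\cap,\Phi_0,\emptyset)$-subsemilattice of $\underline{P}^d(\Phi_0)$, that $\mathcal{E}$ is a $\star$-semigroup in $Eq(\Phi_0)$, and that $\mathcal{E}$ is $\Phi$-compatible, i.e.\ the principal up-sets are closed under the associated saturation operators. These are precisely the contents of the remark preceding the subsection's first lemma, of Theorem~\ref{kernels commute}, and of the closure lemma, respectively; crucially, none of them imposes any restriction on $(\underline{\Phi};\underline{E})$ beyond the axioms of an information algebra, so the construction is available unconditionally. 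This certifies that $(\underline{\mathcal{U}}_p^+(\Phi_0);\underline{Sat}(\mathcal{E}))$ — the principal up-set algebra — really is a set algebra in the sense of Theorem~\ref{def set algebras}.

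Finally I would invoke the last lemma of the subsection, which exhibits the pair $(\verb"i",\verb"j")$ as an isomorphism between $(\underline{\Phi};\underline{E})$ and its principal up-set algebra. Since $(\underline{\Phi};\underline{E})$ was arbitrary, this yields that every information algebra is isomorphic to a set algebra, and in fact to the explicitly described principal up-set algebra, which is exactly the assertion of the theorem.

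The only genuine content — and hence the step I would flag as the real crux — is the identity $\sigma_{\eps}(\uparrow\!\phi) = \uparrow\!(\eps(\phi))$ recorded as equation~(\ref{eq:HomOfExtr}). Everything downstream rests on it: that the saturation operators preserve the up-set structure (giving compatibility), that $\verb"i"$ and $\verb"j"$ satisfy the compatibility condition Def.~2.6.(4) of a homomorphism, and that $\verb"j"$ respects composition. Condition (Q) is what powers this identity, via the computation $\epsilon(\phi\vee\epsilon(\psi))=\epsilon(\phi)\vee\epsilon(\psi)$, so if I were building the argument from scratch rather than citing the lemmas, this is where the information-algebra axioms would actually be consumed; the remaining verifications — bijectivity of $\verb"i"$ and $\verb"j"$ and preservation of $\vee$, $0$, $1$ — are routine bookkeeping.
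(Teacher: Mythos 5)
Your proposal is correct and follows essentially the same route as the paper: the theorem is simply the aggregation of the preceding lemmas of Subsection~\ref{GenRep Thm} (closure of $\mathcal{U}_p(\Phi_0)$ under the $\sigma_\eps$, the set-algebra structure of $(\underline{\mathcal{U}}_p^+(\Phi_0);\underline{Sat}(\mathcal{E}))$, and the isomorphism via $(\verb"i",\verb"j")$), and you correctly single out the identity $\sigma_{\eps}(\uparrow\!\phi)=\uparrow\!(\eps(\phi))$ as the point where the axioms, in particular (Q), are actually used. Nothing is missing.
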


The semilattice part of this result is not very surprising, since any ordered set is order-isomorphic to the collection of all its principal up-sets. The point of working with $\Phi_0$ instead of $\Phi$ is to have $\emptyset$ as the image of $0\in\Phi$ (instead of $\uparrow\!0)$. So the main content of Thm. \ref{basic repr thm} is that the extraction part of any  information algebra may be modeled by (the saturation operators of) a $\star$-semigroup of compatible equivalence relations on the underlying semilattice.

Note that using $\Phi_0$ in order to obtain a set algebra representation is not compulsory. Indeed, much of the rest of this paper is devoted to showing that using other base sets, possibly equipped with additional structure, will produce representations offering deeper insight into the properties of not only the information algebras concerned, but also of the their morphisms.


\subsection{Examples} \label{subsec:Expl}


\textit{Algebra of Strings}

\smallskip

Consider a finite alphabet $\Sigma$, the set $\Sigma^{*}$ of finite strings over $\Sigma$, including the empty string $\epsilon$, and the set $\Sigma^{\omega}$ of infinite strings over $\Sigma$. Let $\Sigma^{**} = \Sigma^{*} \cup \Sigma^{\omega} \cup \{0\}$, where $0$ is a symbol not contained in $\Sigma$. For two strings $r,s \in \Sigma^{**}$, define $r \leq s$, if $r$ is a prefix of $s$ or if $s = 0$. The empty string is a prefix of any string. Define a combination operation $\cdot$  in $\Sigma^{**}$ as follows:

\begin{eqnarray}
r \cdot s = \left\{ \begin{array}{ll} s, & \textrm{if}\ r \leq s, \\ r & \textrm{if}\ s \leq r, \\ 0, & \textrm{otherwise}. \end{array} \right.
\nonumber
\end{eqnarray}

Clearly, $\underline{\Sigma}^{**} = (\Sigma^{**},\cdot, \varepsilon,0)$ is a commutative idempotent semigroup with $\varepsilon$ as unit element and $0$ as null element of combination. For extraction, we define operators $\epsilon_{n}$ for any $n \in \mathbb{N}$ and also for $n = \infty$. Let $\epsilon_{n}(s)$ be the prefix of length $n$ of string $s$, if the length of $s$ is at least $n$, and let $\epsilon_{n}(s) = s$ otherwise. In particular, define $\epsilon_{\infty}(s) = s$ for any string $s$ and $\epsilon_{n}(0) = 0$ for any $n$. It is easy to verify that any $\epsilon_{n}$ maps $\Sigma^{**}$ into itself, and that it satisfies conditions (N), (A), and (Q) for an extraction operator. Moreover,  $\underline{E} =( \{\epsilon_{n}: n \in \mathbb{N}�\cup \{\infty\}\},\circ)$ is a commutative and idempotent semigroup under composition $\circ$ of maps. It follows that  the so-called \textit{string algebra} $(\underline{\Sigma}^{**},\underline{E})$ is an instance of a  information algebra.

\bigskip

\textit{Multivariate Algebras}

\smallskip

 In many applications a set of variables is considered and the information one is interested in concerns the values of certain groups of variables, similar to ordinary relational algebra in database theory (see \cite{kohlas03} for more general relational information algebras). So, consider a countable family of variables $ X = \{X_i: i\in \mathbb{N}\}$,  and let $V_{i}$ denote the set of possible values of the variable $X_i$. For a subset $s\subseteq X$ of variables consider
\begin{eqnarray}
V_s = \prod_{X_i \in s} V_i
\nonumber
\end{eqnarray}
as the set of possible answers relative to $s$.
Let
\begin{eqnarray}
V_{\omega} = \prod_{i=1}^{\infty} V_{i}.
\nonumber
\end{eqnarray}
and put $\Phi = P(V_{\omega})$, the powerset of $V_{\omega}$. Note that the elements of $V_{\omega}$ are the sequences $t = (t_{1},t_{2},\ldots)$ with $t_{i} \in V_{i}$. An element $\phi \in�\Phi$ may be interpreted as a piece information, which states that a generic element $t\in V_{\omega} $ belongs to the set $\phi$. Within $\Phi$ we define combination by  set intersection, which represents aggregation of information:
\begin{eqnarray}
\phi \cdot \psi = \phi \cap \psi.
\nonumber
\end{eqnarray}
Equipped with this operation, $\Phi$ becomes an idempotent commutative semigroup $\underline{\Phi}$ with least element $V_{\omega}$ and greatest element $\emptyset$ under the associated information order (given by $\psi \leq \phi$ iff $\phi \subseteq \psi$). The smaller the subset representing a piece of information about elements of $V_{\omega}$ is, the more information it contains.

Let $s$ be any subset of $X$. Define, for any sequence $t$ in $V_{\omega}$, its {\em restriction} to $ s$, denoted by  $t \vert s$, as follows: If $s = \{X_{i_1},X_{i_2},\dots\}$, then  $t \vert s = (t_{i_1},t_{i_2},\dots)$. Also, define an equivalence relation $\equiv_s$ in $V_{\omega}$ by
\begin{eqnarray}
t \equiv_{s} t' \textrm{ iff}\ t \vert s = t' \vert s.
\nonumber
\end{eqnarray}

It is easy to see that the relational product $\equiv_s\star \equiv_{s'}$is $ \equiv_{s\cap s'}$. It follows that any two of such equivalence relations commute, and thus so do their associated saturation operators. Let $\mathcal{E}= \{\equiv_s: s\subseteq X\}$, $Sat(\mathcal{E})$ be the set of all saturation operators $\sigma_s$  associated with $\equiv_s$ for $s\subseteq X$ and finally $\underline{Sat}(\mathcal{E}) = (Sat(\mathcal{E}),\circ)$.

It is immediate that $\sigma_s$ maps $\Phi$ into $\Phi$ and that $\sigma_s(\emptyset) = \emptyset$ for all $s\subseteq X$. So $(\underline{\Phi};\underline{Sat}(\mathcal{E}))$ is an information algebra - in fact a set algebra - by Theorem \ref{def set algebras}. It is commonly called  the \textit{multivariate} algebra; also, the sets $\sigma_s(\phi)$ are called \textit{cylindric} over $s$.

The multivariate algebra is an information algebra closely related to \textit{relational algebras} as used in relational database systems (see \cite{kohlas03,kohlasschmid14}). We may also consider the set $\Phi'$ consisting of all subsets of $V_{\omega}$ which are cylindric over some \textit{finite} $s\subseteq X$  (plus $V_{\omega}$) and limit ourselves to operators $\sigma_s$ for finite $s$. The resulting system $(\underline{\Phi}';\underline{Sat}(\mathcal{E})')$
is an information algebra, in fact a subalgebra of $(\underline{\Phi};\underline{Sat}(\mathcal{E}))$  (this is the case since the set intersection of sets cylindric over $r$ respectively $s$ is cylindric over $r \cup s$).

\bigskip

\textit{Lattice-Valued Algebras}

\smallskip

Similar to the multivariate model consider a finite family of variables $X_{i}$, $i = 1,\ldots,n$ with variable $X_{i}$ taking values in a finite set $V_{i}$, and let $V$ be the cartesian product $V_{1} \times \cdots \times V_{n}$. Further let $\Lambda$ be a \textit{bounded distributive lattice} with greatest element $\top$ and smallest element $\bot$. Consider the set $\Phi$ of all maps $\phi : V \rightarrow \Lambda$, and define an operation of \textit{combination} $\phi \cdot \psi$ on $\Phi$ by
\begin{eqnarray}
(\phi \cdot \psi)(t) = \phi(t) \wedge \psi(t) \textrm{ for all}\ t \in V.
\nonumber
\end{eqnarray}
Obviously, this defines an idempotent semigroup $\underline{\Phi}$ with unit element $1$ given by $1(t) = \top$ for all $t \in \Lambda$ and null element $0$ by $0(t) = \bot$ for all $t$. Note that under the information order we have $\phi \leq \psi$ in $\Phi$ iff  $\phi(t) \geq \psi(t)$ for all $t \in V$.

For any subset $s$ of the index set $r = \{1,\ldots,n\}$ we introduce an operator $\epsilon_{s}$ mapping $\Phi$ into $\Phi$, defined by
\begin{eqnarray}
\epsilon_s(\phi)(t) = \bigvee\{\phi(u_1,\ldots,u_n): u_i = t_i \text{ for } i\in s \text{ and } u_i\in V_i \text{ for } i\notin s\}
\nonumber
\end{eqnarray}
Using the distributivity of the lattice $\Lambda$ it is easy to verify that all of these operators $\epsilon_{s}$ are extraction operators on $\underline{\Phi}$ . Further, the set $E = \{\epsilon_{s}:s \subseteq r\}$ is a commutative, idempotent semigroup under composition. Thus $(\underline{\Phi};\underline{E})$  is an information algebra; in fact it is a distributive information algebra as will be considered in Section \ref{sec:DistrLattAlg}. Also, it is a particular case of a semiring induced valuation algebra as considered in \cite{kohlaswilson08}.

There are many more instances  of  information algebras related to algebraic logic, graph theory,  linear algebra and convex sets, and other topics as well. We refer to \cite{kohlas03,poulykohlas11} for further examples.

\bigskip

\textit{Ideal Completions}

\smallskip

In an algebra $(\underline{\Phi};\underline{E})$, a {\em consistent} set of pieces of information $I$ is a nonempty
subset $I\subseteq \Phi$ such that (i) with any element $\phi \in I$ also all elements $\psi \leq \phi$ implied by $\phi$ (or contained in $\phi$) belong to $I$, and (ii) with any two elements $\phi,\psi \in I$ also their combination $\phi \cdot \psi$ belongs to $I$. Such sets are just \textit{ideals} in the context of the semilattice $\underline{\Phi}$. Ideals not equal to $\Phi$ are called proper. Consistent sets (also called {\em theories}) may also be thought of as pieces of information. In fact, we may define among them operations of combination and extraction.

 Let $I(\Phi)$ denote the family of all ideals contained in $\Phi$. We define the following two operations for ideals $I_{1},I_{2},I\in I(\Phi)$:

\begin{enumerate}
\item \textit{Combination:} $I_1\cdot I_2 = \{\phi \in \Phi:\phi \leq \phi_1 \cdot \phi_2 \textrm{ for some} \phi_1\in I_1,\phi_2\in I_2\}$,
\item \textit{Extraction:} $\hat{\epsilon}(I) = \{\phi \in \Phi:\phi \leq \epsilon(\psi) \textrm{ for some}\ \psi \in I\}$.
\end{enumerate}

Let $\hat{E} = \{\hat{\eps}; \eps\in E\}$, $\underline{\hat{E}}=(\hat{E},\circ)$ and  $\underline{I}(\Phi)  = ( I(\Phi), \cdot, \{1\}, \Phi)$. It is not hard to check that $(\underline{I}(\Phi); \underline{\hat{E}})$  is an information algebra \cite{kohlas03,kohlasschmid14}, called the {\em  ideal completion} of $(\underline{\Phi};\underline{E})$. Moreover, $(\underline{\Phi};\underline{E})$ embeds into $(\underline{I}(\Phi); \underline{\hat{E}})$
 by the pair of maps $\phi \mapsto \downarrow\!\phi$ and $\epsilon\mapsto\hat{\epsilon}$, where the {\em down-set} $\downarrow\!\phi = \{\psi: \psi\leq \phi\}$ is the principal ideal generated by $\phi$. Ideal completions play an important role for the discussion of compact information algebras, see  \cite{kohlas03}. It is well-known that $I(\Phi)$, ordered by ordinary set inclusion, is a complete lattice.


\section{Atomic Algebras} \label{sec:AtomAlg}


In many examples of information algebras there exist maximally informative elements. The concept of such elements is captured by the notion of an atom.  Here is the formal definition:

\begin{definition} \label{def:DomFreeAtoms}
Let $\underline{A} = (\underline{\Phi};\underline{E})$. An element $\alpha\in\Phi$ is called an atom, if
\begin{enumerate}
\item $\alpha \not= 0$,
\item If $\phi \in \Phi$, then $\alpha \leq \phi$ implies either $\alpha = \phi$ or $\phi = 0$.
\end{enumerate}
\end{definition}

The following lemma lists some properties of atoms: \footnote{We remark that in order theory an atom usually is a minimal, not a maximal element. The present concept corresponds, in a natural way, to our use of information order.}

\begin{lemma} \label{le:PropOfDomFreeAtoms}
\begin{enumerate}
Let $\underline{A} = (\underline{\Phi};\underline{E})$.
\item If $\alpha$ is an atom and $\phi \in \Phi$, then either $\alpha \cdot \phi = \alpha$ or $\alpha \cdot \phi = 0$.
\item If $\alpha$ is an atom and $\phi \in \Phi$, then either $\phi \leq \alpha$ or $\alpha \cdot \phi = 0$.
\item If $\alpha$ and $\beta$ are atoms, then either $\alpha = \beta$ or $\alpha \cdot \beta = 0$.
\end{enumerate}
\end{lemma}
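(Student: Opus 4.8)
The plan is to reduce all three parts to the defining property of an atom (Definition \ref{def:DomFreeAtoms}) together with the key fact, noted right after Definition \ref{DefInfOrder}, that combination is the supremum in the information order: $\phi\cdot\psi = \sup_{\leq}\{\phi,\psi\}$. In particular $\alpha \leq \alpha\cdot\phi$ for every $\phi\in\Phi$, which is the single observation driving the whole argument.

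First I would prove part 1. Since $\alpha\leq\alpha\cdot\phi$, applying the atom property of $\alpha$ to the element $\alpha\cdot\phi$ gives at once that either $\alpha\cdot\phi = \alpha$ or $\alpha\cdot\phi = 0$. That is the complete argument for part 1. Part 2 then follows by merely unwinding the definition of information order: by part 1 we are in one of two cases, and if $\alpha\cdot\phi = \alpha$ then by commutativity $\phi\cdot\alpha = \alpha$, which is precisely $\phi\leq\alpha$ by Definition \ref{DefInfOrder}; otherwise $\alpha\cdot\phi = 0$.

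For part 3 I would apply part 2 with $\beta$ playing the role of $\phi$, obtaining that either $\beta\leq\alpha$ or $\alpha\cdot\beta = 0$. In the second case we are done. In the first case the point requiring the most care is to invoke the maximality property for the correct element: applying the defining property of the atom $\beta$ to the relation $\beta\leq\alpha$ yields $\beta = \alpha$ or $\alpha = 0$, and since $\alpha$, being an atom, satisfies $\alpha\neq 0$, we conclude $\alpha = \beta$.

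I do not expect any genuine obstacle here; everything becomes immediate once one reads off $\alpha\leq\alpha\cdot\phi$ from the supremum description of combination. The only subtlety is the bookkeeping in part 3, namely applying the maximality property to $\beta$ rather than to $\alpha$ and using $\alpha\neq 0$ to rule out the degenerate alternative.
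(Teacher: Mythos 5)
Your proposal is correct and follows essentially the same route as the paper: derive $\alpha\leq\alpha\cdot\phi$ from combination being the supremum, apply the atom property to get part 1, unwind the order definition for part 2, and specialize to an atom $\beta$ for part 3. Your treatment of part 3 is in fact slightly more explicit than the paper's (which silently passes from $\beta\leq\alpha$ to $\alpha=\beta$), since you spell out that the atom property of $\beta$ gives $\beta=\alpha$ or $\alpha=0$ and that the latter is excluded by $\alpha\neq 0$.
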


\begin{proof}
Let $\alpha$ be an atom and $\phi \in \Phi$. Then $\alpha \leq \alpha \cdot \phi$. Since $\alpha$ is an atom we have either $\alpha \cdot \phi = \alpha$ or $\alpha \cdot \phi = 0$. In the first case $\phi \leq \alpha$. This proves the first two items.

Assume $\alpha$ and $\beta$ are atoms. Then $\alpha \leq \alpha \cdot \beta$, hence either $\alpha \cdot \beta = 0$ or $\alpha = \alpha \cdot \beta$, which means that $\beta \leq \alpha$, thus $\alpha = \beta$.
\end{proof}

Let $At(\Phi)$ (or just $At\Phi$ if it improves readability)  denote the set of all atoms of $\Phi$. If $\phi \leq \alpha$, this means that $\alpha$ implies $\phi$. Let $At(\phi) = \{\alpha \in At\Phi:\phi \leq \alpha\}$ be the set of all atoms implying $\phi$. We define different types of  information algebras, depending on the occurrence of atoms:

\begin{definition} \label{def:DomFreeAtom}
Let $\underline{A} = (\underline{\Phi};\underline{E})$.
\begin{enumerate}
\item $\underline{A}$ is called atomic if $At(\phi)\neq\emptyset$ for all $0\neq\phi \in\Phi$.
\item $\underline{A}$ is called atomistic if $\phi = \inf At(\phi)$ for all $0\neq\phi\in\Phi$.
\item $\underline{A}$ is called completely atomistic, if it is atomistic and if for all $\emptyset\neq A\subseteq At\Phi$ there exists $\phi \in\Phi$ such that $A = At(\phi)$.
\end{enumerate}
\end{definition}

If $\underline{A}$ is an atomic  information algebras, we will construct an associated set algebra (see Section \ref{subsec:SetAlgs}) based on the set of atoms $At\Phi$. Recall (see Lemma \ref{extraction subalgebra}) that $\eps\underline{A} := (\eps\underline{\Phi};\underline{E})$ is a subalgebra of $\underline{A}$.

\begin{lemma}  \label{preservation of atoms}
If $\underline{A}$ is atomic and $\eps\in E$, then $\eps\underline{A}$ is atomic and $At(\eps\Phi) = \eps(At(\Phi)) = \{\eps\phi: \phi\in At(\Phi)\}$.
\end{lemma}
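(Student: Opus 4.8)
The plan is to prove the set equality $At(\eps\Phi) = \eps(At\Phi)$ by establishing both inclusions, and then to deduce atomicity of $\eps\underline{A}$ as an easy consequence. Throughout I will lean on two facts: every $\phi \in \eps\Phi$ is $\eps$-\emph{saturated}, i.e. $\eps\phi = \phi$ (since $\phi = \eps\psi$ gives $\eps\phi = \eps\eps\psi = \eps\psi = \phi$ by (I)); and $\eps\alpha \neq 0$ whenever $\alpha \neq 0$, because $\eps\alpha \leq \alpha$ means $\eps\alpha \cdot \alpha = \alpha$, so $\eps\alpha = 0$ would force $\alpha = 0\cdot\alpha = 0$.

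First I would settle the inclusion $At(\eps\Phi) \subseteq \eps(At\Phi)$. Let $\beta$ be an atom of the subalgebra $\eps\underline{\Phi}$. Since $\beta \neq 0$ and $\underline{A}$ is atomic, pick an atom $\alpha$ of $\Phi$ with $\beta \leq \alpha$. Applying Lemma~\ref{extraction preserves order} together with $\eps\beta = \beta$ yields $\beta \leq \eps\alpha$, where $\eps\alpha \in \eps\Phi$ and $\eps\alpha \neq 0$. As $\beta$ is an atom of $\eps\Phi$, this forces $\beta = \eps\alpha$, so $\beta \in \eps(At\Phi)$.

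The harder inclusion is $\eps(At\Phi) \subseteq At(\eps\Phi)$: I must show $\eps\alpha$ is an atom of $\eps\Phi$ for each atom $\alpha$ of $\Phi$. We already know $\eps\alpha \neq 0$, so suppose $\phi \in \eps\Phi$ satisfies $\eps\alpha \leq \phi$ with $\phi \neq 0$; the goal is $\eps\alpha = \phi$. Here I would invoke Lemma~\ref{le:PropOfDomFreeAtoms}(1), which gives either $\alpha \cdot \phi = \alpha$ or $\alpha \cdot \phi = 0$. In the first case $\phi \leq \alpha$, whence $\phi = \eps\phi \leq \eps\alpha$ by Lemma~\ref{extraction preserves order}, and combined with $\eps\alpha \leq \phi$ this gives $\eps\alpha = \phi$, as wanted. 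The crux, and the main obstacle, is to rule out the second case. The trick is to apply $\eps$ to $\alpha \cdot \phi$ and exploit saturation of $\phi$: using (Q) in the form $\eps(\eps\phi \cdot \alpha) = \eps\phi \cdot \eps\alpha$, together with $\eps\phi = \phi$ and $\eps\alpha \leq \phi$ (so that $\phi \cdot \eps\alpha = \phi$), one computes $\eps(\alpha \cdot \phi) = \phi$. But $\alpha \cdot \phi = 0$ would give $\eps(\alpha \cdot \phi) = \eps 0 = 0$, forcing $\phi = 0$, a contradiction. Hence only the first case survives and $\eps\alpha = \phi$.

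Finally, atomicity of $\eps\underline{A}$ follows directly: given $0 \neq \psi \in \eps\Phi$, atomicity of $\underline{A}$ supplies an atom $\alpha$ of $\Phi$ with $\psi \leq \alpha$, and then $\psi = \eps\psi \leq \eps\alpha$ by Lemma~\ref{extraction preserves order}, where $\eps\alpha$ is an atom of $\eps\Phi$ by the inclusion just proved. Thus every nonzero element of $\eps\Phi$ lies below an atom of $\eps\Phi$. I expect the only genuinely delicate point to be the elimination of the case $\alpha\cdot\phi = 0$; everything else is routine bookkeeping with the order, the saturation identity $\eps\phi = \phi$, and the axioms (A), (Q), (I).
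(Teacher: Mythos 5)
Your proof is correct and follows essentially the same route as the paper's: both hinge on the dichotomy $\alpha\cdot\phi\in\{\alpha,0\}$ for an atom $\alpha$, the computation $\eps(\alpha\cdot\eps\phi)=\eps\alpha\cdot\eps\phi=\eps\phi$ via (Q) and saturation, and lifting atoms of $\Phi$ to atoms of $\eps\Phi$ to get atomicity and the reverse inclusion. The only difference is presentational (you eliminate the case $\alpha\cdot\phi=0$ by contradiction, whereas the paper simply observes that it yields $\eps\phi=0$), which is immaterial.
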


\begin{proof}
Let $\alpha\in At\Phi$. Then $\alpha\neq 0$ and thus $\eps\alpha\neq 0 $. Assume $\eps\alpha\leq \eps\phi$ for some $\phi\in\Phi$. Then $\eps(\alpha\cdot\eps\phi) = \eps\alpha\cdot\eps\phi = \eps\phi$. Since $\alpha$ is an atom in $\Phi$ ,we have either $\alpha\cdot\eps\phi = \alpha$ or $\alpha\cdot\eps\phi = 0$. In the first case, $\eps\phi = \eps(\alpha\cdot\eps\phi) = \eps\alpha$, in the second, $\eps\phi = \eps(\alpha\cdot\eps\phi) = \eps0 = 0$. So $\eps\phi$ is an atom in $\eps\Phi$.

Conversely, assume $0\neq\eps\phi\in\eps\Phi$. Since $\underline{A}$ is atomic, there exists $\xi\in At\Phi$ such that $\eps\phi\leq \xi$ and thus $\eps\phi\leq\eps\xi$. As shown above, $\eps\xi$ is an atom in $\eps\Phi$, so $\eps\underline{A}$ is atomic. If $\eps\phi$ is an atom in $\eps\Phi$ itself, then obviously $\eps\phi = \eps\xi$ and thus $\eps\phi\in \eps At\Phi$.
\end{proof}

Let $\equiv'_\eps$ be the restriction of $\equiv_\eps$ (that is, $ker\,\eps$) to $At\Phi$.

\begin{lemma}  \label{restricted kernels commute}
Assume $\underline{A}$ is atomic. Then $\equiv'_\eps$ and $\equiv'_\eta$ commute for $\eps,\eta\in E$.
\end{lemma}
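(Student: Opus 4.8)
The plan is to establish commutation by showing that the two relational products $\equiv'_\eps\star\equiv'_\eta$ and $\equiv'_\eta\star\equiv'_\eps$ coincide as relations on $At\Phi$; since the roles of $\eps$ and $\eta$ are interchangeable, a single inclusion will suffice. Throughout I would keep in mind that these products are computed with intermediate elements \emph{drawn from} $At\Phi$, and this is precisely where the difficulty lies: Theorem \ref{kernels commute} tells us only that the unrestricted kernels $ker\,\eps$ and $ker\,\eta$ commute on $\Phi$, so the intermediate element it produces need not be an atom.

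First I would unwind the hypothesis. Suppose atoms $\alpha,\beta$ satisfy $\alpha\,(\equiv'_\eps\star\equiv'_\eta)\,\beta$, witnessed by an atom $\gamma$ with $\eps\alpha=\eps\gamma$ and $\eta\gamma=\eta\beta$. Applying $\eta$ to the first equation and $\eps$ to the second, and using (C) in the form $\eta\eps=\eps\eta$, I obtain $\eps\eta\alpha=\eps\eta\gamma=\eps\eta\beta$, so that $(\alpha,\beta)\in ker\,(\eps\circ\eta)=ker\,(\eta\circ\eps)$. Theorem \ref{kernels commute} then furnishes some $\chi\in\Phi$ with $\eta\chi=\eta\alpha$ and $\eps\chi=\eps\beta$ (concretely $\chi=\eta\alpha\cdot\eps\beta$ works, but only its existence matters here). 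Note that $\chi\neq 0$: since $\alpha\neq 0$ forces $\eta\alpha\neq 0$, the equation $\eta\chi=\eta\alpha$ excludes $\chi=0$.

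The crux is then to \emph{upgrade} $\chi$ to an atom without disturbing its $\eps$- and $\eta$-extractions. Using atomicity of $\underline{A}$, I would pick an atom $\delta\geq\chi$. Because extraction is order-preserving (Lemma \ref{extraction preserves order}), this gives $\eta\alpha=\eta\chi\leq\eta\delta$ and $\eps\beta=\eps\chi\leq\eps\delta$. Here Lemma \ref{preservation of atoms} does the decisive work: it guarantees $\eta\alpha\in\eta(At\Phi)=At(\eta\Phi)$, so $\eta\alpha$ is an atom of the subalgebra $\eta\underline{A}$; since $\delta\neq 0$ yields $\eta\delta\neq 0$, the defining property of atoms (Definition \ref{def:DomFreeAtoms}, read inside $\eta\underline{A}$) upgrades $\eta\alpha\leq\eta\delta$ to $\eta\alpha=\eta\delta$, and symmetrically $\eps\beta=\eps\delta$. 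Thus $\delta$ is an atom with $\alpha\equiv'_\eta\delta$ and $\delta\equiv'_\eps\beta$, giving $\alpha\,(\equiv'_\eta\star\equiv'_\eps)\,\beta$ and hence the desired inclusion $\equiv'_\eps\star\equiv'_\eta\subseteq\equiv'_\eta\star\equiv'_\eps$; interchanging $\eps$ and $\eta$ then yields equality, i.e. commutation.

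I expect this last step to be the main obstacle: passing from the inequalities $\eta\alpha\leq\eta\delta$ and $\eps\beta\leq\eps\delta$ to equalities is exactly where atomicity is indispensable, and it is the place where one must remember that $\eta\alpha$ and $\eps\beta$ are atoms in the \emph{extraction subalgebras} $\eta\underline{A}$ resp. $\eps\underline{A}$ rather than in $\Phi$ itself. The remaining ingredients — the kernel identity and the production of the (possibly non-atomic) witness $\chi$ — reduce to a direct appeal to Theorem \ref{kernels commute} together with the $(C)$-computation, and should cause no trouble.
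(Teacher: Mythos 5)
Your proof is correct and follows essentially the same route as the paper's: derive $\eps\eta\alpha=\eps\eta\beta$ from the witness atom $\gamma$, produce the (possibly non-atomic) witness $\eta\alpha\cdot\eps\beta$, observe it is nonzero, lift it to an atom above it, and use Lemma \ref{preservation of atoms} to upgrade the resulting inequalities $\eta\alpha\leq\eta\delta$ and $\eps\beta\leq\eps\delta$ to equalities. The only cosmetic difference is that you invoke Theorem \ref{kernels commute} to obtain the witness where the paper redoes that computation inline.
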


\begin{proof}
Let $\alpha,\beta\in At\Phi$ and assume $\alpha\equiv'_\eps\star\equiv'_\eta\beta$. So there exists $\gamma\in At\Phi$ such that $\alpha\equiv'_\eps\gamma$ and $\gamma\equiv'_\eta\beta$, that is, $\eps\alpha = \eps\gamma$ and $\eta\gamma = \eta\beta$. This implies $\eta\eps\alpha = \eta\eps\gamma$ and $\eps\eta\gamma = \eps\eta\beta$. By (C), this results in $\eps\eta\alpha = \eta\eps\beta$ (*).

Consider $\eta\alpha\cdot\eps\beta$: We see that $\eta(\eta\alpha\cdot\eps\beta) = \eta\alpha\cdot\eta\eps\beta  \overset{(*)}{=} \eta\alpha\cdot\eps\eta\alpha = \eta\alpha \neq 0$ since $\eta\alpha$ is an atom in $\eta\Phi$ by Lemma \ref{preservation of atoms}, so that $\eta\alpha\cdot\eps\beta\neq 0$. Thus there exists $\xi\in At\Phi$ such that $\xi\geq\eta\alpha\cdot\eps\beta$.  Now $\eta\xi\geq\eta(\eta\alpha\cdot\eps\beta) = \eta\alpha\cdot\eta\eps\beta \geq \eta\alpha$, that is, $\eta\xi\geq\eta\alpha$. But this implies $\eta\xi = \eta\alpha$ since both $\eta\xi$ and $\eta\alpha$ are atoms in $\eta\Phi$.

Analogously, one obtains $\eps\xi = \eps\beta$, and so $\alpha\equiv'_\eta\star\equiv'_\eps\beta$.
\end{proof}

Assume $\underline{A}$ is atomic and put $\mathcal{E}' := \{\equiv'_\eps: \eps\in E\}$, and let $Sat(\mathcal{E}')$ be the set of all of all saturation operators associated with the equivalences $\equiv'_\eps$. The preceding lemma shows that $(\mathcal{E}',\star)$ is a $\star$-semigroup $\underline{\mathcal{E}}'$, so $\underline{Sat}(\mathcal{E}') := (Sat(\mathcal{E}'),\circ)$ is a commutative idempotent semigroup by Prop. \ref{semigroup of saturation operators}. Let $\underline{P}^d_{red}(At\Phi)$ be the $(\cap,At\Phi,\emptyset)$-reduct of $\underline{P}^d(At\Phi)$.  The operators $\sigma'_\eps$ map $P(At\Phi)$ into itself by definition, and $\sigma'_\eps(At(\Phi)) = At(\Phi)$, so we have a set algebra based on $At\Phi$ at hand.

\begin{theorem} \label{th:DomFreeAtomicHomom}
Let $\underline{A}$ be atomic. Then $(\underline{P}^d_{red}(At\Phi);\underline{Sat}(\mathcal{E}'))$ is an information algebra isomorphic to $SetAlg(\underline{P}^d_{red}(At\Phi);\underline{\mathcal{E}}')$.
\end{theorem}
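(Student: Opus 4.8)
The plan is to read the assertion as a direct instance of the general set-algebra construction of Theorem~\ref{def set algebras}, specialised to the base set $U = At\Phi$. Concretely, I would apply that theorem with $\underline{\Phi}$ taken to be the full reduct $\underline{P}^d_{red}(At\Phi)$ and $\underline{\mathcal{T}}$ taken to be $\underline{\mathcal{E}}'$, verify its three hypotheses, and then observe that the resulting set algebra $SetAlg(\underline{P}^d_{red}(At\Phi);\underline{\mathcal{E}}')$ is nothing but the structure $(\underline{P}^d_{red}(At\Phi);\underline{Sat}(\mathcal{E}'))$ named in the statement.

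First I would dispose of the two easy hypotheses. Since $\underline{P}^d_{red}(At\Phi)$ is by definition the entire $(\cap,At\Phi,\emptyset)$-reduct of $\underline{P}^d(At\Phi)$, it is trivially a $(\cap,At\Phi,\emptyset)$-subsemilattice of $\underline{P}^d(At\Phi)$ containing both $At\Phi$ and $\emptyset$. Compatibility is equally automatic: because $\underline{\Phi}$ here is the \emph{full} powerset of $At\Phi$, every $\sigma'_\eps$ maps $P(At\Phi)$ into itself, so $\underline{\mathcal{E}}'$ is $P(At\Phi)$-compatible with no further work.

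The substantial hypothesis --- and what I regard as the real obstacle --- is that $\underline{\mathcal{E}}'$ is genuinely a $\star$-semigroup \emph{in} $Eq(At\Phi)$, i.e. that $\equiv'_\eps \star \equiv'_\eta$ lands back in $\mathcal{E}'$. Lemma~\ref{restricted kernels commute} supplies pairwise commutativity, but by Lemma~\ref{star product of two equivalences} this only guarantees that $\equiv'_\eps \star \equiv'_\eta$ is again an equivalence; to invoke Lemma~\ref{star-closed} (and hence Proposition~\ref{semigroup of saturation operators} for the composition semigroup) I must show $\star$-closure, which I would do by proving the clean identity $\equiv'_\eps \star \equiv'_\eta = \equiv'_{\eps \circ \eta}$. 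This mirrors Theorem~\ref{kernels commute}, but with the crucial difference that the intermediate witness must itself be an atom. The inclusion $\subseteq$ is a one-line computation with (C). For the reverse inclusion I would start from atoms $\alpha,\beta$ with $\eps\eta\alpha = \eps\eta\beta$, set $\xi := \eps\alpha \cdot \eta\beta$, and check via (Q), (I) and the order inequality $\eps\eta\alpha \leq \eps\alpha$ that $\eps\xi = \eps\alpha \neq 0$, whence $\xi \neq 0$; atomicity then yields an atom $\gamma \geq \xi$, while Lemma~\ref{preservation of atoms} identifies $\eps\gamma,\eps\alpha$ as atoms of $\eps\Phi$ and $\eta\gamma,\eta\beta$ as atoms of $\eta\Phi$, so comparability of atoms forces $\eps\gamma = \eps\alpha$ and $\eta\gamma = \eta\beta$, i.e. $\alpha \equiv'_\eps \gamma \equiv'_\eta \beta$. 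This is precisely the step where the passage from $\Phi$ to $At\Phi$ would break down without atomicity, since the witness $\eps\alpha\cdot\eta\beta$ need not be an atom and must be replaced by one above it.

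With all hypotheses in hand, Theorem~\ref{def set algebras} gives at once that $SetAlg(\underline{P}^d_{red}(At\Phi);\underline{\mathcal{E}}')$ is an information algebra. To finish I would note that, $\underline{\Phi}$ being the full powerset, the restriction in $Sat_\Phi(\mathcal{E}') = \{\sigma'_\eps\vert_{P(At\Phi)} : \eps\in E\}$ is vacuous, so $Sat_\Phi(\mathcal{E}') = Sat(\mathcal{E}')$ and the constructed set algebra is literally $(\underline{P}^d_{red}(At\Phi);\underline{Sat}(\mathcal{E}'))$; the asserted isomorphism is therefore the identity. (If one prefers to keep the two copies formally distinct, Proposition~\ref{semigroup of saturation operators} already identifies $\underline{Sat}(\mathcal{E}')$ with $\underline{\mathcal{E}}'$ as semigroups, and the identity on $P(At\Phi)$ assembles these into the required isomorphism of two-sorted algebras.)
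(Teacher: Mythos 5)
Your proposal is correct and follows the paper's overall route: the theorem is obtained by specialising Theorem \ref{def set algebras} to the base set $At\Phi$, with the full reduct $\underline{P}^d_{red}(At\Phi)$ and the semigroup $\underline{\mathcal{E}}'$. The paper attaches no separate proof; the preceding paragraph simply asserts that Lemma \ref{restricted kernels commute} makes $(\mathcal{E}',\star)$ a $\star$-semigroup and that the $\sigma'_\eps$ map $P(At\Phi)$ into itself. Where you go beyond the paper is precisely at the point where its reasoning is thinnest: Lemma \ref{restricted kernels commute} only yields pairwise commutativity, which by Lemma \ref{star product of two equivalences} guarantees that $\equiv'_\eps\star\equiv'_\eta$ is again an equivalence, but not that it belongs to $\mathcal{E}'$ --- and without that, $\underline{Sat}(\mathcal{E}')$ need not be closed under $\circ$, so the structure would fail the paper's standing assumption on information algebras. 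Your identity $\equiv'_\eps\star\equiv'_\eta=\equiv'_{\eps\circ\eta}$ supplies exactly the missing closure, and your proof of it is sound: the forward inclusion is the (C)-computation from Theorem \ref{kernels commute} restricted to atoms, and for the reverse inclusion the witness $\xi=\eps\alpha\cdot\eta\beta$ satisfies $\eps\xi=\eps\alpha\neq 0$ and $\eta\xi=\eta\beta$, atomicity lifts $\xi$ to an atom $\gamma$, and comparability of the atoms $\eps\gamma,\eps\alpha$ in $\eps\Phi$ (resp. $\eta\gamma,\eta\beta$ in $\eta\Phi$, via Lemma \ref{preservation of atoms}) forces $\eps\gamma=\eps\alpha$ and $\eta\gamma=\eta\beta$. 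This is the same atom-lifting device the paper uses to prove Lemma \ref{restricted kernels commute}, redeployed to prove the stronger statement that the paper actually needs. Your remaining observations --- that the full powerset reduct trivially satisfies the subsemilattice and compatibility hypotheses, and that the restriction in $Sat_\Phi(\mathcal{E}')$ is vacuous so the asserted isomorphism is essentially the identity --- match the paper's intent.
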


This is the type of set algebras into which atomistic information algebras may be embedded or to which completely atomistic information algebras are isomorphic. In fact, we have the following representation theorem:

\begin{theorem} \label{th:AtomRepTheo}
Let $\underline{A} = (\underline{\Phi};\underline{E})$ be an atomic  information algebra. Then the pair of maps $ \verb"at": \phi \mapsto At(\phi)$ and  $\verb"j": \epsilon \mapsto \sigma'_\eps$ defines a homomorphism from $\underline{A}$ to $(\underline{P}^d_{red}(At\Phi);\underline{Sat}(\mathcal{E}'))$. If $\underline{A}$ is atomistic, the pair $(\verb"at",\verb"j")$ is an embedding, and if $\underline{A}$ is completely atomistic, it is an isomorphism.
\end{theorem}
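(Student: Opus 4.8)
The plan is to verify the four defining clauses of a homomorphism (Definition \ref{CDF homos}) for the pair \verb"at", \verb"j", then to strengthen this to an embedding under the atomistic hypothesis and to an isomorphism under the completely atomistic one, the latter through Corollary \ref{alg isos 2}. Throughout, the combination in the target algebra is set intersection, and the null resp.\ unit elements of $\underline{P}^d_{red}(At\Phi)$ are $\emptyset$ resp.\ $At\Phi$.

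Clauses (1) and (2) are immediate from the order structure. Since $\phi\cdot\psi=\sup\{\phi,\psi\}$, an atom $\alpha$ satisfies $\phi\cdot\psi\leq\alpha$ exactly when $\phi\leq\alpha$ and $\psi\leq\alpha$, so $At(\phi\cdot\psi)=At(\phi)\cap At(\psi)$; and because $0$ is the greatest and $1$ the least element of the information order while atoms are nonzero, $At(0)=\emptyset$ and $At(1)=At\Phi$, matching the null and unit of the target.

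The heart of the proof is clause (4), namely $At(\eps\phi)=\sigma'_\eps(At(\phi))$. The inclusion $\supseteq$ is routine: if $\beta$ is an atom with $\beta\equiv'_\eps\alpha$ for some $\alpha\in At(\phi)$, then $\eps\phi\leq\eps\alpha=\eps\beta\leq\beta$. For $\subseteq$ I would start from an atom $\beta\geq\eps\phi$ and set $\zeta:=\phi\cdot\eps\beta$. Applying $\eps$ to $\eps\phi\leq\beta$ gives $\eps\phi\leq\eps\beta$, so by (Q) one finds $\eps\zeta=\eps\phi\cdot\eps\beta=\eps\beta$, which is nonzero because $\beta$ is an atom; hence $\zeta\neq 0$ and atomicity provides an atom $\alpha\geq\zeta\geq\phi$. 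Then $\eps\alpha\geq\eps\zeta=\eps\beta$, and as both $\eps\alpha$ and $\eps\beta$ are atoms of $\eps\Phi$ by Lemma \ref{preservation of atoms}, this forces $\eps\alpha=\eps\beta$, i.e.\ $\alpha\in At(\phi)$ and $\beta\equiv'_\eps\alpha$, so $\beta\in\sigma'_\eps(At(\phi))$. I expect this construction to be the main obstacle, since it is the only point at which atomicity and atom-preservation are genuinely combined.

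Clause (3), $\sigma'_{\eps\circ\eta}=\sigma'_\eps\circ\sigma'_\eta$, then follows cheaply: saturation operators distribute over unions (item 6 of Lemma \ref{saturation operators}), so both sides agree as soon as they agree on singletons, and since every singleton is $\{\alpha\}=At(\alpha)$ for an atom $\alpha$, two applications of clause (4) yield $\sigma'_{\eps\circ\eta}(\{\alpha\})=At(\eps\eta\alpha)=\sigma'_\eps(\sigma'_\eta(\{\alpha\}))$. For the embedding I would observe that atomistic implies atomic (an empty $At(\phi)$ forces $\phi=0$), that $\phi=\inf At(\phi)$ makes \verb"at" injective, and that $\sigma'_\eps=\sigma'_\eta$ gives $At(\eps\phi)=At(\eta\phi)$ for every $\phi$ by clause (4), whence $\eps\phi=\eta\phi$ by injectivity of \verb"at", so \verb"j" is injective too. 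Finally, under complete atomisticity the defining surjectivity property (together with $\emptyset=At(0)$) makes \verb"at" onto $P(At\Phi)$, while \verb"j" is onto $Sat(\mathcal{E}')$ by construction, so Corollary \ref{alg isos 2} yields the isomorphism.
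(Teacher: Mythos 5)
Your proof is correct, and while the central construction coincides with the paper's, two of your steps take a genuinely different route that is worth noting. For the key identity $\sigma'_\eps At(\phi)=At(\eps\phi)$ you and the paper build the same auxiliary element (the combination of $\phi$ with the extraction of the given atom), show it is nonzero, and pick an atom above it; but where the paper then runs two symmetric sub-arguments of the form ``$\epsilon\alpha\cdot\epsilon\beta\neq 0$, hence $\epsilon\beta\leq\alpha$, hence $\epsilon\beta\leq\epsilon\alpha$'' to get the two inequalities separately, you close in one stroke by computing $\eps\zeta=\eps\beta$ via (Q) and invoking Lemma \ref{preservation of atoms} to conclude that the comparable atoms $\eps\alpha\geq\eps\beta$ of $\eps\Phi$ coincide --- noticeably shorter, and it puts Lemma \ref{preservation of atoms} to work where the paper proves it but does not exploit it here. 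The more substantial divergence is clause (3): the paper justifies $\sigma'_{\eps\circ\eta}=\sigma'_\eps\circ\sigma'_\eta$ by citing Theorem \ref{kernels commute} together with Lemma \ref{commuting equivalences}, which strictly speaking also needs the lifting argument of Lemma \ref{restricted kernels commute} to pass from kernels on $\Phi$ to their restrictions on $At\Phi$; you instead derive it from clause (4) by reduction to singletons $\{\alpha\}=At(\alpha)$, using that saturation operators preserve arbitrary unions. This makes clause (3) a formal consequence of clause (4) and bypasses the restricted-kernel commutation entirely --- a cleaner dependency structure, at the cost of having to observe that clause (4) applies to the composite operator $\eps\circ\eta\in E$. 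Your treatment of the embedding is also slightly more careful than the paper's, which asserts that \verb"j" is ``bijective by construction''; injectivity of \verb"j" does require the atomistic hypothesis, exactly as you argue via $At(\eps\phi)=At(\eta\phi)$ for all $\phi$.
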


\begin{proof}
We verify that
\begin{enumerate}
\item $At(\phi \cdot \psi) = At(\phi) \cap At(\psi)$, $At(1) = At\Phi$, and $At(0) = \emptyset$,
\item $\sigma'_{\eps\circ\eta} = \sigma'_\eps \circ \sigma'_\eta$, for all $\eps,\eta\in E$
\item $\sigma'_\eps At(\phi) = At(\eps\phi)$.
\end{enumerate}

For 1.: $At(1) = At\Phi$ and $At(0) = \emptyset$ are obvious. Since the algebra is atomic, $At(\phi)\neq\emptyset$ if $\phi \neq 0$. Assume $\phi \cdot \psi \neq 0$ and let $\alpha \in At(\phi\cdot \psi)$, thus $\phi,\psi \leq \phi \cdot \psi \leq \alpha$ and $\alpha \in At(\phi) \cap At(\psi)$. Conversely, let $\alpha  \in At(\phi) \cap At(\psi)$. Then $\phi,\psi \leq \alpha$, hence $\phi \cdot \psi \leq \alpha$ and therefore $\alpha \in At(\phi\cdot \psi)$. This shows that $At(\phi \cdot \psi) = At(\phi) \cap At(\psi)$. If $\phi \cdot \psi = 0$, then  $At(\phi \cdot \psi) = \emptyset$ and $At(\phi) \cap At(\psi) = \emptyset$.

For 2.: Theorem \ref{kernels commute} and Lemma \ref{commuting equivalences}

For 3. Assume first that $\alpha \in \sigma'_\eps At(\phi)$. So there exists $\beta \in At(\phi)$ such that $\epsilon\alpha = \epsilon\beta$. $\beta \in At(\phi)$ implies $\phi\leq \beta$, so $\eps\phi \leq \eps\beta = \eps\alpha \leq \alpha$ and thus $\alpha\in At(\eps\phi)$.

Conversely, let $\alpha\in At(\eps\phi)$, thus $\eps\phi \leq \alpha$. Recall that $\phi \leq \epsilon(\alpha) \cdot \phi$. We claim that $\epsilon\alpha \cdot \phi \not= 0$. Indeed, otherwise we would have $\epsilon(\alpha \cdot \epsilon\phi) = \epsilon\alpha \cdot \epsilon\phi = \epsilon(\epsilon\alpha \cdot \phi) = \eps0 = 0$, implying $\alpha \cdot \epsilon\phi = 0$ and contradicting $\alpha \in At(\epsilon\phi)$. So there exists $\beta \in At(\epsilon\alpha \cdot \phi)$, and thus $\phi \leq \epsilon\alpha \cdot \phi \leq \beta$. We conclude that $\beta \in At(\phi)$.

Further $\epsilon(\epsilon\alpha \cdot \phi) = \epsilon\alpha \cdot \epsilon\phi \leq \epsilon\beta$, hence $\epsilon\alpha \cdot \epsilon\beta \cdot \epsilon\phi = \epsilon\beta$. This implies $\epsilon\alpha \cdot \epsilon\beta \not=0$. Since $\epsilon\alpha \cdot \epsilon\beta =\epsilon(\alpha \cdot \epsilon\beta)$ we conclude that $\alpha \cdot \epsilon\beta \not= 0$, hence $\epsilon\beta \leq \alpha$ since $\alpha$ is an atom. We infer that $\epsilon\beta \leq \epsilon\alpha$.

Proceed in the same way from $\epsilon\alpha \cdot \epsilon\beta = \epsilon(\epsilon\alpha \cdot \beta)$ in order to obtain obtain $\epsilon\alpha \leq \epsilon\beta$, and so finally $\epsilon\alpha = \epsilon\beta$. But this means that $\alpha\in\sigma'_\eps At(\phi)$ and so $\sigma'_\eps At(\phi) = At(\eps\phi)$ as claimed.

Finally, the map $\verb"j": \epsilon \mapsto \sigma'_\eps$ is bijective by construction. The map $\verb"at" : \phi \mapsto At(\phi)$ is obviously one-to-one whenever $\underline{A}$ is atomistic, and even onto if $\underline{A}$ is completely atomistic,  concluding the proof.

\end{proof}

For completely atomistic  information algebra there is a much stronger result:.

\begin{theorem} \label{th:ComplAtomBoole}
Let $\underline{A} = (\underline{\Phi};\underline{E})$ be a completely atomistic  information algebra. Then $\underline{\Phi}$ is a complete Boolean lattice, and the map $\verb"at":  \phi \mapsto At(\phi)$ preserves arbitrary joins and meets (in the information order) as well as complements.
\end{theorem}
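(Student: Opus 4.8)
The plan is to transport the Boolean structure of the power set $P(At\Phi)$ back to $\Phi$ along the map $\verb"at"$, exploiting the fact (contained in Theorem~\ref{th:AtomRepTheo} and Corollary~\ref{alg isos 2}) that in the completely atomistic case $\verb"at"$ is a bijection. Item~1 in the proof of Theorem~\ref{th:AtomRepTheo} already supplies $At(\phi\cdot\psi)=At(\phi)\cap At(\psi)$, $At(1)=At\Phi$ and $At(0)=\emptyset$; here injectivity of $\verb"at"$ comes from atomisticity (since $\phi=\inf At(\phi)$ recovers $\phi$ from $At(\phi)$) and surjectivity onto all of $P(At\Phi)$ from complete atomisticity together with $At(0)=\emptyset$.

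First I would check that $\verb"at"$ is an order anti-isomorphism from $(\Phi,\le)$ onto $(P(At\Phi),\subseteq)$. Since $\phi\le\psi$ means $\phi\cdot\psi=\psi$, applying the injective map $\verb"at"$ and using $At(\phi\cdot\psi)=At(\phi)\cap At(\psi)$ yields the equivalence
\[
\phi\le\psi \quad\Longleftrightarrow\quad At(\psi)\subseteq At(\phi),
\]
so that $\verb"at"$ and its inverse both reverse order. Next I would invoke the standard fact that an order anti-isomorphism onto a complete lattice makes its domain a complete lattice and turns arbitrary suprema into infima and vice versa; applied to the complete lattice $(P(At\Phi),\subseteq)$, this shows that $(\Phi,\le)$ is complete and gives, for every family $\{\phi_i\}\subseteq\Phi$,
\[
At\Big(\bigvee_i\phi_i\Big)=\bigcap_iAt(\phi_i),\qquad At\Big(\bigwedge_i\phi_i\Big)=\bigcup_iAt(\phi_i),
\]
which is exactly the asserted preservation of arbitrary joins and meets.

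It then remains to obtain the Boolean structure. Since $(P(At\Phi),\subseteq)$ is a complete Boolean lattice and the order dual of a distributive (resp.\ complemented, bounded) lattice is again distributive (resp.\ complemented, bounded), the anti-isomorphic copy $(\Phi,\le)$ is a complete Boolean lattice with bottom $1$ and top $0$. For complements explicitly, given $\phi$ I would use surjectivity to pick the unique $\phi^{c}$ with $At(\phi^{c})=At\Phi\setminus At(\phi)$; then $At(\phi\cdot\phi^{c})=\emptyset=At(0)$ and $At(\phi\wedge\phi^{c})=At\Phi=At(1)$, so injectivity forces $\phi\vee\phi^{c}=0$ and $\phi\wedge\phi^{c}=1$, identifying $\phi^{c}$ as the Boolean complement and showing that $\verb"at"$ carries complementation to set complementation.

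The main obstacle is not any single computation but organizing the argument so that completeness, distributivity and complementation are all transported along the anti-isomorphism rather than re-proved by hand inside $\Phi$. The step where the hypothesis genuinely bites is surjectivity of $\verb"at"$: complete atomisticity is precisely what guarantees that the meets and complements living in $P(At\Phi)$ have preimages in $\Phi$, so that $\Phi$ is \emph{forced} to be a complete Boolean lattice and not merely a substructure of one.
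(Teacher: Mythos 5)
Your proposal is correct, and it reaches the same conclusion as the paper by a more conceptual route. The paper works entirely inside $\Phi$: it uses atomisticity to realize $\bigvee X$ as $\bigwedge A_X$ where $A_X=\bigcap_{\phi\in X}At(\phi)$, verifies by hand that this is the supremum, then constructs the complement $\bigwedge At^c(\phi)$ and checks $\phi\vee\phi^c=0$ and $\phi\wedge\phi^c=1$ directly, with a separate argument (explicitly flagged as the place where complete atomisticity is needed) that $At(\phi\wedge\psi)=At(\phi)\cup At(\psi)$. You instead package the hypotheses up front: injectivity of $\texttt{at}$ from $\phi=\inf At(\phi)$, surjectivity onto $P(At\Phi)$ from complete atomisticity plus $At(0)=\emptyset$, and the order anti-isomorphism property from $At(\phi\cdot\psi)=At(\phi)\cap At(\psi)$ together with injectivity; then the complete Boolean structure, the preservation of arbitrary joins and meets, and the identification of complements are all transported along the anti-isomorphism by the standard abstract principle. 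Both arguments rest on exactly the same facts from the proof of Theorem \ref{th:AtomRepTheo}; what your version buys is that distributivity, completeness and complementation need not be re-derived separately inside $\Phi$, while the paper's version makes visible precisely which computation each hypothesis supports. One small point worth making explicit in your write-up: injectivity for the pair $\{0,\phi\}$ needs the observation that $At(\phi)=\emptyset$ forces $\phi=0$ (since otherwise $\phi=\inf At(\phi)=\inf\emptyset$ would be the top element $0$), as atomisticity is only stated for $\phi\neq 0$; this is trivial but it is the case your appeal to $\phi=\inf At(\phi)$ does not literally cover.
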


\begin{proof}
Let $X$ be any subset of $\Phi$ and define

\begin{eqnarray}
A_X = \bigcap_{\phi \in X} At(\phi).
\nonumber
\end{eqnarray}

Assume $A_X \not= \emptyset$. Since the algebra is completely atomistic, there exists $\psi \in \Phi$ such that $A_X = At(\psi)$ and $\psi = \bigwedge A_X$. For any $\alpha \in A_X$ and $\phi \in X$ we have $\phi \leq \alpha$, therefore $\phi \leq \bigwedge A_X$ which shows that $\bigwedge A_X$ is  an upper bound of $X$. Let $\chi$ be any other such upper bound. Then $At(\chi) \subseteq At(\phi)$ for all $\phi \in X$, hence $\alpha \in At(\chi)$ implies $\alpha \in A_X$, and therefore $\chi =
\bigwedge At(\chi) \geq \bigwedge A_X$. It follows $\bigwedge A_X$ is the supremum of $X$, that is, $\bigvee X = \bigwedge A_X$. Consequently,

\begin{eqnarray}
At(\bigvee X) = \bigcap_{\phi \in X} At(\phi). \nonumber
\end{eqnarray}

If $A_X = \emptyset$ , then $\bigvee X = 0$ and $At(0) = \emptyset$. So join (in the information order) is preserved.

Consider $\phi \in \Phi$ and define $At^c(\phi) := At\Phi\setminus At(\phi)$. Since $\underline{A}$ is atomistic, $\psi = \bigwedge At^c(\phi)$ exists and belongs to $\Phi$. Moreover, $At(\psi) = At^c(\phi)$. We know that $At(\phi \cdot \psi) = At(\phi) \cap At(\psi)$, $At(1) = At\Phi$, and $At(0) = \emptyset$ (see the proof of \ref{th:AtomRepTheo}, item 1). Hence

\begin{eqnarray*}
&\phi \vee \psi = \bigwedge At(\phi \cdot \psi)  =  \bigwedge(At(\phi) \cap At(\psi)) =& \nonumber \\
& = \bigwedge (At(\phi) \cap At^{c}(\phi)) = \bigwedge \emptyset = 0&
\end{eqnarray*}

Not unexpectedly, it is true that $At(\phi \wedge \psi) = At(\phi) \cup At(\psi)$,  but this is  where we need $\underline{A}$  to be completely atomistic. Indeed, putting $A := At(\phi) \cup At(\psi)$, then certainly $A\subseteq At(\phi \wedge \psi)$, and there exists $\xi$ such that $A = At(\xi)$ and $\xi = \bigwedge A$. It is clear that $\xi\leq\phi$ and  $\xi\leq\psi$ since $\underline{A}$ is atomistic. If now $\chi\leq \phi$ and $\chi\leq\psi$ for some $\chi$, then certainly $At(\chi)\supseteq A$ and - using "atomistic" in the reverse way - we obtain $\xi = \phi\wedge \psi$ and

\begin{eqnarray*}
&\phi \wedge \psi = \bigwedge (At(\phi) \cup At^{c}(\phi)) = \bigwedge At(\Phi) = 1&
\end{eqnarray*}

So $\psi =: \phi^c$ is the complement of $\phi$ and $At(\phi^c) = At^c(\phi)$.

The map $\verb"at"$ thus preserves arbitrary joins and complements and consequently also arbitrary meets, completing the proof.
\end{proof}

As an illustration we consider string algebras (Section \ref{subsec:Expl}): The infinite strings in $\Sigma^{\omega}$ are the atoms of this algebra. If $s$ is a finite string, then the atoms in $At(s)$ are all infinite strings with $s$ as a prefix. These algebras are atomistic as any string $s$ is the infimum of the set of infinite strings with $s$ as a prefix. But it is not completely atomistic, since there are sets $A$ of infinite strings which do not arise as the set of all atoms over some string $s$, namely sets $A$ containing strings with different prefixes. Thus, the algebra of strings is embedded into the set algebra of its atoms $(\underline{P}^d_{red}(\Sigma^{\omega}),\underline{\underline{}\mathcal{E}}')$ (Theorem \ref{th:AtomRepTheo}) by the map $s \mapsto At(s)$ where, for any $n$, the saturation operator $\sigma'_n$ maps a set $S$ of infinite strings into the set of all infinite strings which have a common prefix of length $n$ with some string from $S$ (compare this representation of the string algebra by sets of infinite strings with the representation of the same algebra by truncated up-sets of arbitrary strings in Section \ref{GenRep Thm}).


\section{Distributive Information Algebras} \label{sec:DistrLattAlg}





In this section we consider   information algebras where $\Phi$ is a distributive lattice. For distributive lattices, there is a well established representation and duality theory, the so-called Priestley duality theory, generalizing Stone duality for Boolean algebras, see \cite{daveypriestley02}. It will be the base for developing a corresponding representation resp. duality theory for  information algebras based on distributive lattices. Moreover, Cignoli studied existential quantifiers on distributive lattices in \cite{cignoli91}. His results are exactly what we need to extend the representation theory of distributive lattices to  information algebras carried by a distributive lattice.

\begin{definition}  \label{def distr CDF}
An information algebra $\underline{A}=(\underline{\Phi};\underline{E})$ is called {\em distributive} iff
\begin{enumerate}
\item[(i)]in $\Phi$ the infimum (relative to the information order) $\phi\wedge\psi$ exists for all $\phi,\psi\in\Phi$, making $(\Phi;\cdot,\wedge,1,0)$ a lattice,
\item[(ii)] $(\Phi;\cdot,\wedge,1,0)$ is distributive and
\item[(iii)] $\eps(\phi\wedge\psi) =\eps\phi\wedge\eps\psi$ for all $\eps\in E$ and $\phi,\psi\in\Phi$.
\end{enumerate}
\end{definition}

We denote by $\mathbb{D}$ the category of all distributive  information together with homomorphisms $(f,g)$ according to Def. \ref{CDF homos} but {\em subject to the additional condition} that $f$ is also meet-preserving. In particular, the restriction $\eps\vert_{\eps\Phi} := \eps_r$ of $\eps$ to $\eps\Phi$ is such a morphism while $\eps$ is not, in general. This makes $\eps\underline{\Phi}$ a sublattice of $\underline{\Phi}$ - a fact that will play a central r\^{o}le.

\subsection{Adapting Priestley theory to $\mathbb{D}$}

We consider first  an arbitrary distributive lattice $\underline{K} = (K;\vee, \wedge, 0,1)$ in its standard order where $0$ is the least and $1$ the greatest element. The reader is referred to \cite{daveypriestley02} for background and for proofs of facts stated below without justification.

\begin{definition} \label{def:PrimeIdeal}

(i) An ideal $I$ in $K$ is a nonempty down-set $I\subseteq K$ which is closed under join (cf. ``Ideal Completions'' in Section \ref{subsec:Expl}). $I$ is called prime, if $I \neq K$ and whenever $x \wedge y \in I$, then either $x\in I$ or $y \in I$.

(ii) A filter $F$ in $K$ is a nonempty up-set $F\subseteq K$ which is closed under meet. $F$ is called prime, if $F\neq K$ and whenever $x \vee y \in F$, then either $x\in F$ or $y \in F$.
\end{definition}

It is easy to check that a subset $I\subseteq K$ is a prime ideal iff $K\setminus I$ is a prime filter, and vice versa.

\begin{definition} \label{def:MaxIdeal}
An ideal $I \subseteq K$ of is called maximal, if $I \not= K$, and whenever $I\subseteq J$ for some ideal $J\subseteq K$, then either $J = I$ or $J = K$.
\end{definition}

It is easy to check that in an arbitrary bounded distributive lattice $\underline{K}$ every maximal ideal is prime (the converse is not true in, in general - cf. Lemma \ref{prime=max}).

The existence of prime ideals (resp. filters) in arbitrary bounded distributive lattices is not trivial and needs some form of a set existence axiom like the Axiom of Choice (AC) or weaker forms thereof. This is not a real issue except for set theorists. So we take (AC) for granted without reservation and use it in the following (weak) version, tailor-made for our purpose, and labelled as (DPI) in \cite{daveypriestley02}:

(DPI) In a bounded distributive lattice $\underline{K}$, let $J$ be an ideal and $G$ a filter such that $J\cap G = \emptyset$. Then there exist a prime ideal $I\supseteq J$ and a prime filter $F\supseteq G$ such that $I\cap F = \emptyset$ (one my take $F = K \setminus I$, of course).

Let $X(K)$ (or $XK$ to improve readability of formulas) denote the set of all prime ideals of an arbitrary bounded distributive lattice $\underline{K} = (K; \vee,\wedge,0,1)$, and put $W_u = \{I \in XK:u \notin I\}$ for $u\in K$. Further, let $\mathcal{K}^- = \{W_u; u \in K\}$. We have $W_u\cup W_v = W_{u\vee v}$ by the ideal property of the members of $XK$ and $W_u\cap W_v = W_{u\wedge v}$ by their primeness. Also, $W_0 = \emptyset$ and $W_1 = XK$. It follows that $\underline{\mathcal{K}}^- = (\mathcal{K}^-;\cup,\cap,\emptyset,K)$ is a sublattice of the power set lattice $\underline{P}(XK)$.

The basic fact underlying \textit{Priestley duality theory} is that $\underline{K}$ is isomorphic with $\underline{\mathcal{K}}^-$, the isomorphism being given by $u\mapsto W_u$ for all $u\in K$.

We have to adjust the development to our use of information order: In  a set algebra the order is the reverse of set inclusion, and combination - that is: join - in the  information algebra should be represented by set intersection. Technically, this means replacing $\underline{K}$ by its order dual $\underline{K}^d$ and adjusting the isomorphism described above in order to have $\emptyset$ as the greatest and $XK$ as the least element. Consider a prime ideal $I\subseteq K$ and $u\in K$, $u\notin I$: In $\underline{K}^d$, $I$ is prime filter not containing $u$ and thus $K\setminus I$ a prime ideal containing $u$. So we put $X_u = \{I\in XK; u\in I\}$ and $\mathcal{K}^+ = \{X_u; u\in K\}$. We have $X_u\cap X_v = X_{u\vee v}$ by the ideal property and $X_u\cup X_v = X_{u\wedge v}$ by primeness, moreover $X_0 = XK$ and $X_1 = \emptyset$. So $\underline{K}^d$ is isomorphic with $\underline{\mathcal{K}}^+ = (\mathcal{K}^+;\cap,\cup,KX,\emptyset)$, a sublattice of the dual powerset lattice $\underline{P}(XK)^d$, the isomorphism being given by $u\mapsto X_u$ for all $u\in K$.

So far, we have a representation of an arbitrary bounded distributive lattice (resp. its order dual) as a lattice of sets with set intersection and union as operations. However, we don't have, at this point, much insight into the nature of the representing sets $X_u$ (resp. $W_u$). Introducing a suitable topology on $XK$ will provide a very satisfactory solution. Focussing on $\underline{\mathcal{K}}^+$, we consider the family of sets
\begin{eqnarray}
\mathfrak{B}_K = \{X_u\cap (XK \setminus X_v); u,v\in K\}
\nonumber
\end{eqnarray}
which is clearly closed under finite set intersection and thus may serve as an (open) base for a topology $\mathfrak{T}_K$ on $XK$, that is, $\mathfrak{T}_K$ is the collection of all set unions of members of $\mathfrak{B}_K$.  $\mathfrak{T}_K$ is compact and Hausdorff. Compactness implies that the \textit{clopen} (simultaneously open and closed) sets of $\mathfrak{T}_K$ are precisely the members of $\mathfrak
{B}_K$. In order to characterize the sets $X_u$ within the space $(XK,\mathfrak{T}_K)$, we use the natural order on $XK$ given by ordinary set inclusion between prime ideals:  The sets $X_u$  for some $u\in K$ are precisely the clopen \textit{up-sets} of the ordered space $\underline{XK} = (XK,\mathfrak{T}_K,\subseteq)$. Assuming (DPI) one shows that for $I,I'\in XK$ satisfying $I\not\subseteq I'$ there exists a clopen subset  $U\subseteq XK$ such that $I\in U$ but $I'\notin U$. This means that the ordered space $\underline{XK}$ is \textit{totally order-disconnected}. A compact totally order-disconnected ordered space is commonly referred to as a \textit{Priestley space}.

Going back to distributive information algebras $(\underline{\Phi};\underline{E})$ we obtain a {\em representation theorem} for their lattice parts:  $\underline{\Phi}$ is isomorphic with the lattice of all clopen up-sets of the Priestley space $\underline{X\Phi}$, the isomorphism being given by $\phi\mapsto X_\phi$ for all $\phi\in\Phi$.

Mapping $\phi\in\Phi$ to the set of prime ideals containing it, instead of the set of prime ideals excluding it makes sense from the information-theoretic point of view: Prime ideals are consistent complete theories or collections of information elements. Indeed, as ideals they are \textit{consistent} in the sense that they contain with any element all elements implied by it and with any two elements also their combination. Moreover, they are \textit{complete} theories in the sense that if they contain $\phi \wedge \psi$, they must contain $\phi$ or $\psi$. So the map $\phi\mapsto X_{\phi}$ assigns to $\phi$ all consistent and complete theories $I$ which are consistent with $\phi$ (that is, contain $\phi$).

\subsection{``Up-side down'' Priestley duality in a nutshell}   \label{``Up-side down'' Priestley duality}

To facilitate the discussion, we introduce the following notation: For any distributive lattice $\underline{\Phi} = (\Phi;\cdot,\wedge,1,0)$, let $\mathbf{X}\underline{\Phi}$ be the ordered topological space $\underline{X\Phi} = (X\Phi,\mathfrak{T}_{\Phi},\subseteq)$. On the other hand, for any compact totally order-disconnected topological space $\underline{Y} := (Y,\mathfrak{T}, \leq)$ let $L(Y)$ (or $LY$ to improve readability of formulas) be the collection of all clopen up-sets of $\underline{Y}$, and $\mathbf{L}\underline{Y}$ the sublattice of the dual power set lattice $\underline{P}(Y)^d$ induced by $LY$. So the representation for the lattice part $\underline{\Phi}$ of a distributive  information algebra $(\underline{\Phi};\underline{E})$ obtained above takes the simple form $\underline{\Phi} \cong \mathbf{L}\mathbf{X}\underline{\Phi}$. This isomorphism is given explicitly as $\kappa_{\Phi}: \phi\in\Phi \longmapsto X_\phi =\{I\in X\Phi: \phi\in I\}$ for all $\phi\in\Phi$.

Consider any abstract Priestley space $\underline{Y}$, and for any $p\in Y$, let $L_p = \{U\in LY: p\in U\}$. It is easy to check that $L_p$ is a prime ideal in $\mathbf{L}\underline{Y}$. Define a map $\lambda_Y: \underline{Y}\longrightarrow \mathbf{X}\mathbf{L}\underline{Y}$ by $\lambda_Y(p) = L_p$. Priestley duality shows that $\lambda_Y$ is in fact an order-homeomorphism between the spaces $\underline{Y}$ and $\mathbf{X}\mathbf{L}\underline{Y}$, so $\underline{Y}\cong \mathbf{X}\mathbf{L}\underline{Y}$ as Priestley spaces.

Summing up, this establishes a bijective correspondence between bounded distributive lattices on one hand and Priestley spaces, on the other - in fact, essentially the object part of a full categorical equivalence.

Turning to morphisms, consider first two bounded distributive lattices $\underline{\Phi}$ and $\underline{\Psi}$ and let $Hom(\underline{\Phi},\underline{\Psi}$) be the set of all $1$-$0$-preserving lattice homomorphisms from $\underline{\Phi}$ to $\underline{\Psi}$.
Similarly, for two Priestley spaces $\underline{Y}$ and $\underline{Z}$, let $Hom(\underline{Y},\underline{Z})$ be the set of all continuous order-preserving maps from $\underline{Y}$ to $\underline{Z}$.

For $f\in Hom(\underline{\Phi},\underline{\Psi})$ define $\mathbf{X}f\in Hom(\mathbf{X}\underline{\Psi},\mathbf{X}\underline{\Phi})$ by $\mathbf{X}f:I\in\mathbf{X}\underline{\Psi}\longmapsto f^{-1}(I)\in \mathbf{X}\underline{\Phi}$.

For $\alpha\in Hom(\underline{Y},\underline{Z})$ define $\mathbf{L}\alpha \in Hom(\mathbf{L}\underline{Z},\mathbf{L}\underline{Y})$ by $\mathbf{L}:
V\in \mathbf{L}\underline{Z} \longmapsto\alpha^{-1}(V)\in \mathbf{L}\underline{Y}$.

The maps $f\longmapsto \mathbf{X}f$ resp. $\alpha\longmapsto\mathbf{L}\alpha$ provide bijections from $Hom(\underline{\Phi},\underline{\Psi})$ to $Hom(\mathbf{X}\underline{\Psi},\mathbf{X}\underline{\Psi}$ resp. from $Hom(\underline{Y},\underline{Z})$ to $Hom(\mathbf{L}\underline{Z},\mathbf{L}\underline{Y})$.

$f\in Hom(\underline{\Phi},\underline{\Psi})$ is one-to-one iff $\mathbf{X}f\in Hom(\mathbf{X}\underline{\Psi},\mathbf{X}\underline{\Phi})$ is onto, and $f\in Hom(\underline{\Phi},\underline{\Psi})$ is onto iff $\mathbf{X}f\in Hom(\mathbf{X}\underline{\Psi},\mathbf{X}\underline{\Phi})$ is an order embedding.

Finally $\mathbf{L}\mathbf{X}f:\mathbf{L}\mathbf{X}\underline{\Phi}\longrightarrow\mathbf{L}\mathbf{X}\underline{\Psi}$ satisfies $\mathbf{L}\mathbf{X}f\circ\kappa_{\Phi} = f\circ\kappa_{\Psi}$. In particular, we have  $\mathbf{L}\mathbf{X}f(X_{\phi}) = X_{f
(\phi)}$ for all $\phi\in\Phi$.
Similarly, $\mathbf{X}\mathbf{L}\alpha:\mathbf{X}\mathbf{L}\underline{Y}\longrightarrow\mathbf{X}\mathbf{L}\underline{Z}$ satisfies $\mathbf{X}\mathbf{L}\alpha\circ\lambda_Y = \lambda_Z\circ\alpha$. Again, one has $\mathbf{X}\mathbf{L}\alpha(L_p) = L_{\alpha(p)}$ for all $p\in Y$.

\subsection{Extraction operators}

Consider any distributive  information algebra $\underline{A} = (\underline{\Phi};\underline{E})$. The key observation is that for any $\eps\in E$, the restriction $\eps_r := \eps\vert_{\eps\underline{\Phi}}$ of $\eps$ to $\eps\underline{\Phi}$ is a $1$-$0$-preserving lattice homomorphism while $\eps$ is not, in general. So we may use Priestley duality to model $\eps_r$ in Priestley spaces. Our exposition is based on the results of \cite{vranckenmawet84} and Cignoli \cite{cignoli91}.

Clearly, $\eps_r:\eps\underline{\Phi}\longrightarrow\underline{\Phi}$ is a one-to-one embedding of $\eps\underline{\Phi}$ into $\underline{\Phi}$. So $\mathbf{X}\eps_r =\eps_r^{-1}:\mathbf{X}(\underline{\Phi})\longrightarrow\mathbf{X}(\eps\underline{\Phi})$ is onto, continuous and order-preserving. Thus $\mathbf{L}\mathbf{X}\eps_r =(\eps_r^{-1})^{-1}: \mathbf{L}\mathbf{X}(\eps\underline{\Phi})\longrightarrow \mathbf{L}\mathbf{X}(\underline{\Phi})$ is a lattice embedding. Note that $\eps_r^{-1}$ takes a prime ideal $I\in X\Phi$ to $I\cap\eps\Phi$ which is a prime ideal in $\eps\Phi$. Also, $(\eps_r^{-1})^{-1}$ takes a clopen up-set $U\in
 L(X(\eps\Phi))$ to $\{I\in X\Phi; I\cap\eps\Phi\in U\}$, this latter being a clopen up-set since $\eps_r^{-1}$ is continuous and order-preserving. Since $\mathbf{L}\mathbf{X}\eps_r\circ \kappa_{\eps\Phi} = \kappa_{\Phi}\circ \eps_r$, we see that $\mathbf{L}\mathbf{X}\eps_r$ takes $X'_{\eps\phi} := \{I'\in X(\eps\Phi):\eps\phi\in I'\}$ to $X_{\eps\phi} = \{I\in X\Phi: \eps\phi\in I\}$ for all $\phi\in\Phi$.

Consider the kernel of $\mathbf{X}\eps_r$, that is $ker\,\mathbf{X} \eps_r = \{ (I,I')\in X\Phi\times X\Phi: I\cap \eps\Phi = I'\cap \eps\Phi\}$, shortly denoted  by $\cong_{\eps}$. We are interested in the saturation operator $\varsigma_{\eps}$ associated  with $\cong_{\eps}$ which for any subset $U\subseteq X\Phi$ returns $\{I'\in X\Phi: I'\cong_{\eps}I \textit{ for some } I\in U\}$, and particularly in the restriction of $\varsigma_{\eps}$ to $L(X\Phi)$.  Sets $U$ satisfying $U = \varsigma_\eps(U)$ will be called \textit{ $\eps$-saturated}. We write $L^\eps(X\Phi)$ for the family of all $\eps$-saturated sets in $L(X\Phi)$.

The key fact we need is the following lemma which is part of Thm. 2.2 in (Cignoli, 1991).

\begin{lemma}[Cignoli] \label{Cignolis lemma}
For any $I\in X\Phi$ containing $\eps\phi \in \Phi$, there exists $I'\in X\Phi$ such that $I'\cong_{\eps}I$ and $\phi\in I'$.
\end{lemma}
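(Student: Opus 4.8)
The plan is to build the required prime ideal $I'$ by hand, invoking the prime ideal existence principle (DPI) from the previous subsection. Write $D := I \cap \eps\Phi$. Because $\eps\underline{\Phi}$ is a sublattice of $\underline{\Phi}$ (meet inherited via distributivity condition (iii), join via (Q)) and $I$ is a prime ideal of $\underline{\Phi}$, the set $D$ is a proper prime ideal of $\eps\underline{\Phi}$; hence its complement $G_0 := \eps\Phi \setminus D$ is a prime filter of $\eps\underline{\Phi}$, in particular closed under $\wedge$. First I would form the ideal $J$ of $\underline{\Phi}$ generated by $\{\phi\} \cup D$, which (since $\cdot$ is the join and $1 \in D$) is $J = \{\psi \in \Phi : \psi \leq \phi \cdot d \text{ for some } d \in D\}$, and the filter $G$ of $\underline{\Phi}$ generated by $G_0$, which, as $G_0$ is $\wedge$-closed, is simply $G = \{\psi \in \Phi : c \leq \psi \text{ for some } c \in G_0\}$.

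The crux of the argument — and the step I expect to be the main obstacle — is to show $J \cap G = \emptyset$, since this is where the extraction axioms and the hypothesis $\eps\phi \in I$ must be combined. Suppose toward a contradiction that $\psi \in J \cap G$. Then there are $d \in D$ and $c \in G_0$ with $c \leq \psi \leq \phi \cdot d$, so $c \leq \phi \cdot d$. Applying $\eps$, which is order-preserving by Lemma \ref{extraction preserves order}, and using $\eps c = c$ (as $c \in \eps\Phi$ and $\eps$ is idempotent by (I)) together with $\eps d = d$, one computes $c = \eps c \leq \eps(\phi \cdot d) = \eps(\phi \cdot \eps d) = \eps\phi \cdot \eps d = \eps\phi \cdot d$, where the middle equality is exactly an instance of (Q). But $\eps\phi \in I$ by hypothesis and $d \in D \subseteq I$, so $\eps\phi \cdot d \in I$ since $I$ is closed under join; as $I$ is a down-set, $c \in I$, whence $c \in \eps\Phi \cap I = D$, contradicting $c \in G_0 = \eps\Phi \setminus D$.

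With disjointness secured, (DPI) yields a prime ideal $I' \supseteq J$ and a prime filter $F \supseteq G$ with $I' \cap F = \emptyset$, so in particular $I' \cap G = \emptyset$. It then remains only to read off the two desired properties. Since $\phi \in J \subseteq I'$, we have $\phi \in I'$. Since $D \subseteq J \subseteq I'$ we get $I \cap \eps\Phi = D \subseteq I' \cap \eps\Phi$, while $G_0 \subseteq G$ and $I' \cap G = \emptyset$ force $I' \cap \eps\Phi \subseteq D = I \cap \eps\Phi$; hence $I' \cap \eps\Phi = I \cap \eps\Phi$, that is, $I' \cong_{\eps} I$. Everything here beyond the disjointness claim is routine bookkeeping with ideals and filters; the real content sits in the identity $\eps(\phi \cdot d) = \eps\phi \cdot d$ for $d \in \eps\Phi$, which is what turns $\eps\phi \in I$ into the contradiction driving the separation.
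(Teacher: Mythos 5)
Your argument is correct. Note, however, that the paper itself does not prove this lemma at all: it is imported verbatim as part of Theorem 2.2 of \cite{cignoli91}, so there is no internal proof to compare against. What you have done is reconstruct a self-contained derivation from the paper's own primitives, and it holds up at every step: $D=I\cap\eps\Phi$ is indeed a proper prime ideal of the sublattice $\eps\underline{\Phi}$ (properness because $0\notin I$, meet-closure of $\eps\Phi$ from Def.~\ref{def distr CDF}(iii), join-closure from (Q)); the descriptions of $J$ and $G$ are right because $D$ is join-closed with $1\in D$ and $G_0$ is meet-closed; and the disjointness computation $c=\eps c\leq\eps(\phi\cdot d)=\eps\phi\cdot d$ correctly combines (I), order-preservation (Lemma \ref{extraction preserves order}) and (Q) with the hypothesis $\eps\phi\in I$ to force $c\in D$, the desired contradiction. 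The final bookkeeping ($\phi\in J\subseteq I'$, $D\subseteq I'\cap\eps\Phi$, and $I'\cap\eps\Phi\subseteq D$ via $I'\cap G=\emptyset$) is also sound. You have correctly identified that the only nontrivial input beyond (DPI) is the identity $\eps(\phi\cdot d)=\eps\phi\cdot d$ for $d\in\eps\Phi$; this is in fact the standard separation argument behind Cignoli's result, so your proof fills a gap the paper delegates to the literature rather than diverging from it.
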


Recall that the members of $L(X\Phi)$ are exactly the sets $X_\phi$ for $\phi\in\Phi$. We  have

\begin{corollary}  \label{saturation lemma}
For all $\phi\in\Phi$, $\varsigma_{\eps}X_{\phi} = X_{\eps\phi}$.
\end{corollary}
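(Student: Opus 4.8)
The plan is to prove the claimed set equality $\varsigma_{\eps}X_\phi = X_{\eps\phi}$ by establishing the two inclusions separately, simply unwinding the definitions of $\varsigma_\eps$, $\cong_\eps$ and $X_\psi$. Recall that $\varsigma_\eps X_\phi = \{I'\in X\Phi : I'\cong_\eps I \text{ for some } I\in X_\phi\}$, that $I\cong_\eps I'$ means $I\cap\eps\Phi = I'\cap\eps\Phi$, and that $X_\psi = \{I\in X\Phi:\psi\in I\}$.

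For the inclusion $\varsigma_{\eps}X_\phi \subseteq X_{\eps\phi}$, I would start from a pair $I'\cong_\eps I$ with $\phi\in I$ and show $\eps\phi\in I'$. Since $\eps\phi\leq\phi$ by (A) and $I$ is a down-set (ideal) containing $\phi$, we get $\eps\phi\in I$; as $\eps\phi\in\eps\Phi$, this places $\eps\phi$ in $I\cap\eps\Phi = I'\cap\eps\Phi$, whence $\eps\phi\in I'$, i.e. $I'\in X_{\eps\phi}$. This direction is routine and uses only the axiom (A) together with the ideal and kernel definitions.

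For the reverse inclusion $X_{\eps\phi}\subseteq\varsigma_{\eps}X_\phi$, I would take any $I'\in X_{\eps\phi}$, so $\eps\phi\in I'$, and must produce $I\in X_\phi$ (that is, $\phi\in I$) with $I\cong_\eps I'$. This is exactly the content of Cignoli's Lemma \ref{Cignolis lemma}: applied to the prime ideal $I'$ containing $\eps\phi$, it yields a prime ideal $I$ with $I\cong_\eps I'$ and $\phi\in I$. Then $I\in X_\phi$ and $I'\cong_\eps I$ by symmetry of $\cong_\eps$, so $I'\in\varsigma_\eps X_\phi$, completing the argument.

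The genuinely hard part of this corollary is precisely the reverse inclusion, but that difficulty has been entirely isolated into Cignoli's Lemma \ref{Cignolis lemma}, which we invoke as a black box. Granted that lemma, the corollary is essentially a bookkeeping exercise; the only subtlety to keep straight is the information-order convention, under which (A) reads $\eps\phi\leq\phi$ and ideals are down-sets in this order, so that $\eps\phi$ inherits membership from $\phi$ rather than the other way around.
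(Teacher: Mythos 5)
Your proof is correct and follows essentially the same route as the paper: the forward inclusion by unwinding the definitions (using $\eps\phi\leq\phi$ from (A) and the fact that prime ideals are down-sets in the information order, so $\eps\phi\in I\cap\eps\Phi=I'\cap\eps\Phi$), and the reverse inclusion by a direct appeal to Cignoli's Lemma \ref{Cignolis lemma}. The paper merely packages the forward direction slightly differently, first noting that $X_{\eps\phi}$ is $\varsigma_\eps$-saturated and that $X_\phi\subseteq X_{\eps\phi}$, then invoking monotonicity of $\varsigma_\eps$; this is the same argument in a different order of bookkeeping.
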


\begin{proof}
We first show that $X_{\eps\phi}$ is $\varsigma_\eps$-saturated for all $\phi\in\Phi$. Indeed, let $I\in X_{\eps\phi}$ and $I'\cong_{\eps}I$. Now $\eps\phi\in I\cap\eps\Phi = I'\cap\eps\Phi$ and so $\eps\phi\in I'$, that is, $I'\in X_{\eps\phi}$ and thus $\varsigma_\eps(X_{\eps\phi})= X_{\eps\phi}$.

Moreover, $X_{\phi}\subseteq X_{\eps\phi}$ since $\eps\phi\leq\phi$, so $\varsigma_{\eps}(X_{\phi}) \subseteq \varsigma_{\eps}(X_{\eps\phi}) = X_{\eps\phi}$. Let $I\in X_{\eps\phi}$. By Lemma \ref{Cignolis lemma} we find $I'\in X_{\phi}$ such that $I'\cap\eps\Phi = I\cap\eps\Phi$, implying $I\in \varsigma_{\eps}(X_{\phi})$ which shows that $X_{\eps\phi} \subseteq \varsigma_{\eps}(X_{\phi})$. This implies $\varsigma_{\eps}(X_{\eps\phi}) \subseteq \varsigma_{\eps}(X_{\phi})$ so finally $\varsigma_{\eps}(X_{\phi}) = \varsigma_{\eps}(X_{\eps\phi}) = X_{\eps\phi}$.
\end{proof}

The following proposition collects the main properties of the saturation operator $\varsigma_{\eps}$:

\begin{proposition} \label{properties of saturation operators}
Let $\cong_{\eps}$, $\varsigma_{\eps}$ and $L^\eps(X\Phi)$ be given as described above. Then:

(i) $\varsigma_{\eps}$ maps $L(X\Phi)$ into $L(X\Phi)$.

(ii) The members of $L^\eps(X\Phi)$ are exactly the sets $X_{\eps\phi}$ for $\phi\in\Phi$.

(iii) If $I,I'\in X\Phi$ and $I\not\cong_{\eps}I'$, there is $U\in L^\eps(X\Phi)$ containing  exactly one of $I,I'$.

(iv) $\varsigma_\eps$ is an extraction operator on $\mathbf{L}\mathbf{X}\underline{\Phi}$.

(v) $L^\eps(X\Phi)$ endowed with the operations inherited from $\mathbf{L}\mathbf{X}\underline{\Phi}$ is a sublattice $\mathbf{L}^\eps\mathbf{X}\underline{\Phi}$ of $\mathbf{L}\mathbf{X}\underline{\Phi}$ , and $\mathbf{L}^\eps\mathbf{X}\underline{\Phi}\cong \eps\underline{\Phi}$.

(vi) For any $I,I'\in X\Phi$, we have $I\cap\eps\Phi\subseteq I'\cap\eps\Phi$ iff for all $U\in L^\eps(X\Phi)$, $I\in U$ implies $I'\in U$.
\end{proposition}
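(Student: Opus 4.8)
The strategy is to reduce all six items to the single identity already established in Corollary~\ref{saturation lemma}, namely $\varsigma_{\eps}X_{\phi}=X_{\eps\phi}$, read against the basic properties of saturation operators in Lemma~\ref{saturation operators} and the fact that the clopen up-sets of $\mathbf{X}\underline{\Phi}$ are exactly the sets $X_{\phi}$, $\phi\in\Phi$. With this in hand, (i) is immediate: an arbitrary member of $L(X\Phi)$ is some $X_{\phi}$, and $\varsigma_{\eps}X_{\phi}=X_{\eps\phi}$ lies again in $L(X\Phi)$. For (ii) I would recall that the proof of Corollary~\ref{saturation lemma} already verifies that every $X_{\eps\phi}$ is $\cong_{\eps}$-saturated, giving one inclusion; conversely, if $U\in L^{\eps}(X\Phi)$ then $U=X_{\phi}$ for some $\phi$ and hence $U=\varsigma_{\eps}U=X_{\eps\phi}$, so $L^{\eps}(X\Phi)=\{X_{\eps\phi}:\phi\in\Phi\}$.

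Items (iii) and (iv) then follow formally. For (iii), $I\not\cong_{\eps}I'$ means $I\cap\eps\Phi\neq I'\cap\eps\Phi$; since every element of $\eps\Phi$ is of the shape $\eps\phi$, some $\eps\phi$ belongs to exactly one of the two intersections, and by (ii) the saturated set $X_{\eps\phi}$ contains exactly one of $I,I'$. For (iv), observe that $\varsigma_{\eps}$ is nothing but the saturation operator $\sigma_{\cong_{\eps}}$ of the equivalence $\cong_{\eps}$ on $X\Phi$; by Corollary~\ref{saturation and extraction} it is an extraction operator on the $(\cap,\emptyset)$-reduct of $\underline{P}(X\Phi)^{d}$, and because $L(X\Phi)$ is a $(\cap,\emptyset)$-subreduct (it contains $\emptyset=X_{0}$ and is closed under $\cap$ via $X_{\phi}\cap X_{\psi}=X_{\phi\cdot\psi}$) into which $\varsigma_{\eps}$ maps by (i), the conditions (N), (A), (Q) descend to $\mathbf{L}\mathbf{X}\underline{\Phi}$.

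The one item carrying genuine content is (v), and it is exactly here that the distributivity hypothesis~(iii) of Definition~\ref{def distr CDF} is indispensable. By (ii) we have $L^{\eps}(X\Phi)=\{X_{\eps\phi}\}$. Closure under combination is painless, since $X_{\eps\phi}\cap X_{\eps\psi}=X_{\eps\phi\cdot\eps\psi}$ and $\eps\Phi$ is closed under $\cdot$ (shown just before Lemma~\ref{extraction subalgebra}). Closure under meet, however, uses that $X_{\eps\phi}\cup X_{\eps\psi}=X_{\eps\phi\wedge\eps\psi}$ together with $\eps\phi\wedge\eps\psi=\eps(\phi\wedge\psi)\in\eps\Phi$, which is precisely axiom~(iii). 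Hence $L^{\eps}(X\Phi)$ is a sublattice, and the required isomorphism $\mathbf{L}^{\eps}\mathbf{X}\underline{\Phi}\cong\eps\underline{\Phi}$ is the corestriction of $\kappa_{\Phi}$ to the sublattice $\eps\underline{\Phi}$, which is injective and operation-preserving with image $L^{\eps}(X\Phi)$; equivalently, $\mathbf{L}^{\eps}\mathbf{X}\underline{\Phi}$ is the image of the lattice embedding $\mathbf{L}\mathbf{X}\eps_{r}$ and thus isomorphic to $\mathbf{L}\mathbf{X}(\eps\underline{\Phi})\cong\eps\underline{\Phi}$. Finally (vi) drops out of (ii): if $I\cap\eps\Phi\subseteq I'\cap\eps\Phi$ and $I\in U=X_{\eps\phi}$ then $\eps\phi\in I\cap\eps\Phi\subseteq I'\cap\eps\Phi$, so $I'\in U$; conversely, any $\eps\phi\in I$ puts $I$ in $X_{\eps\phi}\in L^{\eps}(X\Phi)$, whence $I'\in X_{\eps\phi}$, i.e.\ $\eps\phi\in I'$.

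I expect essentially no obstacle of a substantive kind: the hard surjectivity step --- that each prime ideal containing $\eps\phi$ is $\cong_{\eps}$-equivalent to one containing $\phi$ --- is already furnished by Cignoli's Lemma~\ref{Cignolis lemma} and packaged into Corollary~\ref{saturation lemma}. The only place to stay alert is (v), where one must not forget that closure under set union rests on distributivity~(iii); the remaining risk is purely bookkeeping, namely holding the ``upside down'' dictionary fixed throughout, so that combination is modelled by $\cap$, meet by $\cup$, the unit $1$ by $X\Phi$, the null $0$ by $\emptyset$, and information order by reverse inclusion.
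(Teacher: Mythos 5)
Your proposal is correct and follows essentially the same route as the paper: everything is reduced to the identity $\varsigma_{\eps}X_{\phi}=X_{\eps\phi}$ of Corollary~\ref{saturation lemma}, the general properties of saturation operators (Lemma~\ref{saturation operators} via Corollary~\ref{saturation and extraction}), and, for item (v), the identification of $L^{\eps}(X\Phi)$ with the image of the lattice embedding $\mathbf{L}\mathbf{X}\eps_r$, so that $\mathbf{L}^{\eps}\mathbf{X}\underline{\Phi}\cong\mathbf{L}\mathbf{X}(\eps\underline{\Phi})\cong\eps\underline{\Phi}$. You merely supply more detail than the paper's terse proof, in particular by making explicit where Definition~\ref{def distr CDF}(iii) enters the sublattice claim in (v); that is a faithful expansion, not a different argument.
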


\begin{proof}
Ad (i): Follows directly from Lemma \ref{saturation lemma}. Ad (ii):  $\varsigma_{\eps}X_{\phi} = X_{\phi}$ iff $X_{\eps\phi} = X_{\phi}$ iff $\eps\phi = \phi$. Ad (iii): Let $I\not\cong_{\eps}I'$, thus $I\cap\eps\Phi\neq I'\cap \eps\Phi$. Assume w.l.o.g. that $I\cap\eps\Phi\nsubseteq I'\cap \eps\Phi$. So there exists $\eps\phi \in I\cap\eps\Phi$ such that $\eps\phi \notin I\cap\eps\Phi$. This means $I\in X_{\eps\phi} \in L^\eps(X\Phi)$ but $I\notin X_{\eps\phi}$. Ad (iv): Lemma \ref{saturation operators}. Ad (v): Reformulates the description of the map $\mathbf{L}\mathbf{X}\eps_r$ given above in terms of $\varsigma_{\eps}$-saturated sets. Ad (vi): Using (ii), the assertion becomes $I\cap\eps\Phi\subseteq I'\cap\eps\Phi$ iff for all $X_{\eps\phi}$, $I\in X_{\eps\phi}$ implies $I'\in X_{\eps\phi}$ iff for all $\eps\phi$, $\eps\phi\in I$  implies $\eps\phi\in I'$ which is the same as  $I\cap\eps\Phi\subseteq I'\cap\eps\Phi$.
\end{proof}

Looking at the other end of the sought duality between algebras and spaces, the obvious question is now how to characterize equivalence relations $\Theta$ on a Priestley space $\underline{Y}$ such that the associated saturation operators $\sigma_{\Theta}$ induce quantifiers on the lattice $\mathbf{L}\underline{Y}$. Obviously, this requires that $\sigma_{\Theta}$ maps $LY$ into $LY$ (corresponding to Prop. \ref{properties of saturation operators} (i)). For any $p\in Y$, let $[p]_{\Theta}$ be the $\Theta$-class of $p$, and put $Y/\Theta =:\{[p]_{\Theta}: p\in Y\}$. Write $L^{\Theta}(Y)$ for the collection of all $\sigma_{\Theta}$-saturated clopen up-sets of $\underline{Y}$, and $\mathbf{L}^{\Theta}(\underline{Y})$ for the corresponding sublattice of $\mathbf{L}\underline{Y}$.

 We want to equip $Y/\Theta$ with an order and a topology making it a Priestley space such that the canonical projection $\pi_{\Theta}:\underline{Y}\longrightarrow Y/\Theta$ is continuous and order-preserving. Imitating Prop. \ref{properties of saturation operators} (vi), tentatively define $[p]_{\Theta}\leq_{\Theta}[q]_{\Theta}$ iff for all $U\in L^{\Theta}(Y)$, $p\in U$ implies $q\in U$. It is obvious that $\leq_{\Theta}$ is reflexive and transitive.

 The key fact we need here is contained in the following lemma, which is part of Lemma 1.6 in \cite{vranckenmawet84}).

 \begin{lemma}[Vrancken-Mawet]
  A Priestley structure on $Y/{\Theta}$ making $\pi_{\Theta}:\underline{Y}\longrightarrow Y/\Theta$ continuous and order-preserving exists exactly if  $\leq_{\Theta}$ is antisymmetric and thus an order. If this is the case, the sought topology on $Y/{\Theta}$ is uniquely determined as the quotient topology relative to $\underline{Y}$ and $\pi_{\Theta}$.
 \end{lemma}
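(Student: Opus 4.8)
The plan is to establish both implications together with the uniqueness claim, using the saturated clopen up-sets $L^\Theta(Y)$ as the bridge. For the ``only if'' direction I would assume a Priestley structure $(\mathfrak{T}',\preceq)$ on $Y/\Theta$ making $\pi_\Theta$ continuous and order-preserving, and first record that pulling back clopen up-sets lands inside $L^\Theta(Y)$: for a clopen up-set $V$ of $(Y/\Theta,\mathfrak{T}',\preceq)$ the preimage $\pi_\Theta^{-1}(V)$ is clopen by continuity, an up-set by order-preservation of $\pi_\Theta$, and saturated because it is a union of $\Theta$-classes. Exploiting total order-disconnectedness of the quotient (valid since it is assumed to be Priestley), I would show $[p]\leq_\Theta[q]$ implies $[p]\preceq[q]$: were $[p]\not\preceq[q]$, some clopen up-set $V$ would contain $[p]$ but not $[q]$, and then $U:=\pi_\Theta^{-1}(V)\in L^\Theta(Y)$ would contain $p$ but not $q$, contradicting $[p]\leq_\Theta[q]$. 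Antisymmetry of $\leq_\Theta$ is then inherited from that of $\preceq$.

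For uniqueness of the topology I would argue by compactness. Since the quotient topology $\mathfrak{T}_q$ is the finest topology making $\pi_\Theta$ continuous, $\mathfrak{T}'\subseteq\mathfrak{T}_q$. As $\underline{Y}$ is compact and $\pi_\Theta$ a continuous surjection, $(Y/\Theta,\mathfrak{T}_q)$ is compact, whereas $(Y/\Theta,\mathfrak{T}')$ is Hausdorff (every Priestley space is). The identity map from $(Y/\Theta,\mathfrak{T}_q)$ to $(Y/\Theta,\mathfrak{T}')$ is thus a continuous bijection from a compact space onto a Hausdorff space, hence a homeomorphism, forcing $\mathfrak{T}'=\mathfrak{T}_q$.

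For the ``if'' direction I would assume $\leq_\Theta$ antisymmetric --- hence a partial order, reflexivity and transitivity being already noted --- and equip $Y/\Theta$ with $\leq_\Theta$ and the quotient topology, then verify the Priestley axioms. Continuity of $\pi_\Theta$ is immediate, and $\pi_\Theta$ is order-preserving because every member of $L^\Theta(Y)$ is an up-set, so $p\leq q$ yields $[p]\leq_\Theta[q]$; compactness of the quotient follows from that of $\underline{Y}$. The decisive step is total order-disconnectedness: given $[p]\not\leq_\Theta[q]$, the definition of $\leq_\Theta$ supplies $U\in L^\Theta(Y)$ with $p\in U$ and $q\notin U$, and its push-forward $V:=\pi_\Theta(U)$ satisfies $\pi_\Theta^{-1}(V)=U$ by saturation, so $V$ is clopen and is a $\leq_\Theta$-up-set (from $[a]\in V$ and $[a]\leq_\Theta[b]$ one gets $a\in U$, whence $b\in U$), separating $[p]$ from $[q]$. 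Hausdorffness then follows from total order-disconnectedness together with antisymmetry.

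The main obstacle I anticipate lies in the two-directional bookkeeping of the correspondence $U\longleftrightarrow\pi_\Theta(U)$ between $L^\Theta(Y)$ and the clopen up-sets of the quotient: the ``only if'' part merely pulls clopen up-sets back, while the ``if'' part must push $U$ forward and confirm that $\pi_\Theta(U)$ is a genuine clopen up-set of $(Y/\Theta,\leq_\Theta)$ --- precisely the point where saturation of $U$ and the very definition of $\leq_\Theta$ are indispensable. The compactness argument securing uniqueness of the topology is routine by comparison.
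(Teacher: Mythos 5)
Your proposal is correct, but there is nothing in the paper to compare it against: the authors state this lemma as a quoted result (part of Lemma 1.6 of Vrancken-Mawet, 1984) and give no proof of their own. Your argument is a sound, self-contained proof along the standard lines one would expect. The ``only if'' direction correctly pulls clopen up-sets of the hypothesized quotient back into $L^\Theta(Y)$ (clopen by continuity, up-set by order-preservation, saturated as a union of $\Theta$-classes) and uses total order-disconnectedness of the quotient to get $\leq_\Theta\;\subseteq\;\preceq$, whence antisymmetry of $\leq_\Theta$ is inherited. The ``if'' direction correctly identifies the decisive step: for $[p]\not\leq_\Theta[q]$ the definition of $\leq_\Theta$ hands you $U\in L^\Theta(Y)$ separating $p$ from $q$, and saturation of $U$ (hence of its complement) makes $\pi_\Theta(U)$ clopen in the quotient topology and a $\leq_\Theta$-up-set. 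The compact-to-Hausdorff argument for uniqueness of the topology is the standard one and is complete. The only points you leave tacit --- that $\leq_\Theta$ is well defined on $\Theta$-classes because the members of $L^\Theta(Y)$ are saturated, and that Hausdorffness of the quotient needs antisymmetry together with total order-disconnectedness --- are both covered by remarks you do make, so I see no gap.
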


 Consider $p,q\in Y$ such that $(p,q)\not\in\Theta$, that is, $[p]_{\Theta} \neq [q]_{\Theta}$. Now $\Theta$ is antisymmetric iff this implies $[p]\not\leq_{\Theta}[q]$ or $[q]\not\leq_{\Theta}[p]$. According to the definition of $\leq_{\Theta}$ this means that there exists  $U\in L^{\Theta}(Y)$ containing exactly one of $p$ and $q$ (note that this corresponds to Prop. \ref{properties of saturation operators} (iii)).

 \begin{definition}  \label{separating equivalences}
 An equivalence $\Theta$ on a Priestley space $\underline{Y}$ is  {\em separating} iff  (i) $\sigma_{\Theta}$ maps $L(Y)$ into $L(Y)$ and (ii) for any $p,q\in Y$ with $(p,q)\not\in\Theta$ there exists $U\in L^{\Theta}(Y)$ containing exactly one of $p$ and $q$ .
 \end{definition}

 Note that the equivalences $\cong_\eps$ considered in Prop. \ref{properties of saturation operators} are kernels and separating. It remains to see that all separating equivalences on a Priestley space arise as kernels of this type.

Assume that a Priestley $\underline{Y}$ space carries a separating equivalence relation $\Theta$, so that $Y/\Theta$ ordered by $\leq_{\Theta}$ and equipped by the quotient topology relative to $\pi_{\Theta}$ is a Priestley space $\underline{Y}/\Theta$, and $\pi_{\Theta}$ is order-preserving and continuous. Putting Priestley duality to work, we see that $\mathbf{L}\pi_{\Theta} = \pi_{\Theta}^{-1}$ takes $U\in L(Y/\Theta)$ to $\pi_{\Theta}^{-1}(U)\in L^{\Theta}(Y)\subseteq LY$, is one-to-one and thus provides a lattice isomorphism between $\mathbf{L}(\underline{Y}/\Theta)$ and $\mathbf{L}^{\Theta}(\underline{Y})$. Note that $\pi_{\Theta}^{-1}\pi_{\Theta}$ is an extraction operator on $\mathbf{L}\underline{Y}$ with image $\mathbf{L}^{\Theta}(Y)$.

Now $\mathbf{X}\mathbf{L}\pi_{\Theta} = (\pi_{\Theta}^{-1})^{-1}$ takes a prime ideal $I\in XL(Y)$ to $I\cap L^{\Theta}(Y)\in XL(Y/\Theta)$ and is onto $XL(Y/\Theta)$. Moreover, $ker\, \mathbf{X}\mathbf{L}\pi_{\Theta} = \{(I,I')\in XL(Y)\times XL(Y): I\cap L^{\Theta}(Y) = I'\cap L^{\Theta}(Y)\}$. But the prime ideals in $XL(Y)$ are exactly the sets $L_p = \{U\in L(Y): p\in U\}$. So we obtain

\begin{lemma}  \label{extraction and equivalence}
$(p,q)\in\Theta$ iff $(L_p,L_q)\in ker\, \mathbf{X}\mathbf{L}\pi_{\Theta}$.
\end{lemma}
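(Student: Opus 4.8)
The plan is to unwind the kernel condition on the right-hand side into a concrete statement about $\Theta$-saturated clopen up-sets, and then to establish the resulting biconditional using, on one side, the block structure of $\sigma_{\Theta}$-saturated sets and, on the other, the separating property of $\Theta$.

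First I would translate the right-hand side. By the description of $\mathbf{X}\mathbf{L}\pi_{\Theta}$ recalled immediately before the statement, this map sends a prime ideal $I\in XL(Y)$ to $I\cap L^{\Theta}(Y)$, so $(L_p,L_q)\in ker\,\mathbf{X}\mathbf{L}\pi_{\Theta}$ means precisely $L_p\cap L^{\Theta}(Y) = L_q\cap L^{\Theta}(Y)$. Since $L_p\cap L^{\Theta}(Y) = \{U\in L^{\Theta}(Y): p\in U\}$, this equality says that for every $\sigma_{\Theta}$-saturated clopen up-set $U$ one has $p\in U$ iff $q\in U$. Thus the lemma reduces to showing that $(p,q)\in\Theta$ holds exactly when $p$ and $q$ lie in the same members of $L^{\Theta}(Y)$.

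For the forward direction I would use that a set $U$ is $\sigma_{\Theta}$-saturated iff it is a union of whole $\Theta$-blocks, as noted in the proof of Lemma \ref{saturation operators}. Concretely, if $U = \sigma_{\Theta}(U)$ and $p\in U$, then the block $[p]_{\Theta}$ meets $U$, hence $[p]_{\Theta}\subseteq\sigma_{\Theta}(U) = U$. Assuming $(p,q)\in\Theta$, we get $q\in[p]_{\Theta}\subseteq U$; symmetrically $q\in U$ forces $p\in U$. So $p\in U\Leftrightarrow q\in U$ for every $U\in L^{\Theta}(Y)$, which is the required equality of the intersected prime ideals.

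For the converse I would argue contrapositively: if $(p,q)\notin\Theta$, the separating property of $\Theta$ (Definition \ref{separating equivalences}(ii)) furnishes a member $U\in L^{\Theta}(Y)$ containing exactly one of $p,q$, whence $L_p\cap L^{\Theta}(Y)\neq L_q\cap L^{\Theta}(Y)$. I expect no genuine obstacle here: the substantive content of this direction has already been absorbed into the definition of \emph{separating}, so the lemma is essentially a bookkeeping exercise re-expressing that definition through the dual map $\mathbf{X}\mathbf{L}\pi_{\Theta}$. The only points demanding care are getting the translation of $ker\,\mathbf{X}\mathbf{L}\pi_{\Theta}$ exactly right and confirming that $\sigma_{\Theta}$-saturation indeed means ``union of blocks'', so that the forward implication goes through.
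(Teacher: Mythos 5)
Your proof is correct and follows essentially the same route as the paper's (very terse) one-line argument: both reduce the kernel condition to $L_p\cap L^{\Theta}(Y)=L_q\cap L^{\Theta}(Y)$ and then invoke the separating property of $\Theta$ for one direction and the block structure of $\sigma_{\Theta}$-saturated sets for the other. You have merely made explicit the details the paper leaves implicit.
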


\begin{proof}
We have $(L_p,L_q)\in ker\, \mathbf{X}\mathbf{L}\pi_{\Theta}$ iff $L_p\cap L^{\Theta} =  L_q\cap L^{\Theta}$ iff $(p,q)\in\Theta$ since $\Theta$ is separating.
\end{proof}

Since $\underline{Y}\cong\mathbf{X}\mathbf{L}(\underline{Y})$ as Priestley spaces, we conclude that $\Theta$ indeed corresponds to the kernel of $\mathbf{X}\mathbf{L}\pi_{\Theta}$ under this isomorphism.

\begin{corollary}  \label{quantifiers vs equivalences}
There is a bijective correspondence between meet-preserving extraction operators on a bounded distributive lattice and separating equivalence relations on its Priestley space.
\end{corollary}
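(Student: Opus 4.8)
The plan is to read the Corollary as the assembly of the two constructions carried out in this subsection into a single bijection, so the bulk of the work is already packaged in the preceding lemmas; what remains is to name the two maps, check they land in the correct classes, and verify mutual inverseness. In the forward direction I would send a meet-preserving extraction operator $\eps$ on $\underline{\Phi}$ to the equivalence $\cong_{\eps} = ker\,\mathbf{X}\eps_r$ on $\mathbf{X}\underline{\Phi}$. That this is a separating equivalence is exactly Proposition \ref{properties of saturation operators}: clause (i) of Definition \ref{separating equivalences} is part (i) there, and clause (ii) is part (iii) there (this is the remark made just after Definition \ref{separating equivalences}). In the backward direction I would send a separating equivalence $\Theta$ on $\mathbf{X}\underline{\Phi}$ to the operator $\eps_{\Theta}$ on $\underline{\Phi}$ obtained by transporting $\sigma_{\Theta}$ across the Priestley isomorphism $\kappa_{\Phi}$, i.e. the unique operator with $X_{\eps_{\Theta}\phi} = \sigma_{\Theta}X_{\phi}$ for all $\phi$. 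For well-definedness I would argue that, since $\Theta$ is separating, $\sigma_{\Theta}$ maps $L(X\Phi)$ into itself, and Lemma \ref{saturation operators}(1),(2),(5) shows it satisfies (N), (A), (Q) while Lemma \ref{saturation operators}(6) gives meet-preservation — recall that in $\mathbf{L}\mathbf{X}\underline{\Phi}\subseteq\underline{P}(X\Phi)^{d}$ the meet is set union, so clause 6 literally says $\sigma_{\Theta}$ preserves meets. Transporting across the lattice isomorphism $\kappa_{\Phi}$ then produces a meet-preserving extraction operator on $\underline{\Phi}$.

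For the composite that starts with $\eps$, I would observe that $\sigma_{\cong_{\eps}} = \varsigma_{\eps}$ and invoke Corollary \ref{saturation lemma}, which states $\varsigma_{\eps}X_{\phi} = X_{\eps\phi}$. Hence $X_{\eps_{\cong_{\eps}}\phi} = \varsigma_{\eps}X_{\phi} = X_{\eps\phi}$ for every $\phi$, and injectivity of $\kappa_{\Phi}$ forces $\eps_{\cong_{\eps}} = \eps$. This half is therefore immediate from the saturation lemma.

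The composite that starts with $\Theta$ is the delicate one, and I expect it to be the main obstacle: I must show $\cong_{\eps_{\Theta}} = \Theta$. By construction $\varsigma_{\eps_{\Theta}}X_{\phi} = X_{\eps_{\Theta}\phi} = \sigma_{\Theta}X_{\phi}$ for all $\phi$; since the clopen up-sets of $\mathbf{X}\underline{\Phi}$ are precisely the sets $X_{\phi}$, the operators $\varsigma_{\eps_{\Theta}}$ and $\sigma_{\Theta}$ coincide on all of $\mathbf{L}\mathbf{X}\underline{\Phi}$, so the two equivalences have exactly the same saturated clopen up-sets, $L^{\cong_{\eps_{\Theta}}}(X\Phi) = L^{\Theta}(X\Phi)$. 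The key point is that a separating equivalence is pinned down by its saturated clopen up-sets: if $(I,I')\in\Theta$ but $(I,I')\notin\cong_{\eps_{\Theta}}$, then since $\cong_{\eps_{\Theta}}$ is separating (it has the form $\cong_{\eps}$) there is $U\in L^{\cong_{\eps_{\Theta}}}(X\Phi)=L^{\Theta}(X\Phi)$ containing exactly one of $I,I'$, which contradicts that $U$ is $\Theta$-saturated while $I\,\Theta\,I'$; the symmetric argument excludes the reverse strict containment, giving $\Theta=\cong_{\eps_{\Theta}}$. Alternatively this can be read off Lemma \ref{extraction and equivalence} together with the Priestley isomorphism $\underline{Y}\cong\mathbf{X}\mathbf{L}\underline{Y}$. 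In either formulation the crux is that separation, Definition \ref{separating equivalences}(ii), is exactly what prevents two distinct equivalences from inducing the same quantifier; it is the hypothesis that makes the backward map injective, which is precisely why it was singled out in the definition.
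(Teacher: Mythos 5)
Your proposal is correct, and the forward direction ($\eps\mapsto\cong_\eps$, separating by Proposition \ref{properties of saturation operators}) is exactly the paper's. Where you diverge is in the backward direction and in the verification that the two maps are mutually inverse. The paper constructs the extraction operator associated with a separating $\Theta$ by first invoking the Vrancken-Mawet lemma to make $\underline{Y}/\Theta$ a Priestley space, then taking $\pi_\Theta^{-1}\pi_\Theta$ on $\mathbf{L}\underline{Y}$, and recovers $\Theta$ from its kernel via Lemma \ref{extraction and equivalence} together with $\underline{Y}\cong\mathbf{X}\mathbf{L}\underline{Y}$; the bijectivity itself is then left as an assembly of the surrounding discussion. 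You instead transport $\sigma_\Theta\vert_{L(X\Phi)}$ directly across $\kappa_\Phi$ and check (N), (A), (Q) and meet-preservation straight from Lemma \ref{saturation operators} (correctly noting that clause 6 is meet-preservation because meet in $\underline{P}(X\Phi)^d$ is union), which bypasses the quotient-space machinery entirely; and you make both composites explicit, using Corollary \ref{saturation lemma} for $\eps_{\cong_\eps}=\eps$ and a direct separation argument for $\cong_{\eps_\Theta}=\Theta$. Your separation argument is sound -- two separating equivalences with the same saturated clopen up-sets must coincide -- and is in substance a re-derivation of Lemma \ref{extraction and equivalence}, as you acknowledge. What your route buys is a more elementary and self-contained verification; what the paper's route buys is the structural identification of the image lattice $\mathbf{L}^\Theta(\underline{Y})$ with $\mathbf{L}(\underline{Y}/\Theta)$, which is reused elsewhere in the duality. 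Both are valid proofs of the corollary.
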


It should be noted (see \cite{cignoli91}) that condition (ii) in Def. \ref{separating equivalences} could be replaced by requiring the equivalence classes of $\Theta$ to be topologically closed in $\underline{Y}$. While undoubtedly more elegant, this approach does not immediately reveal how the condition actually is put to work.

\subsection{$Q$-Priestley spaces}    \label{$Q$-Priestley spaces}

In order to  extend Priestley duality theory of distributive lattices to lattices with a quantifier, \cite{cignoli91} introduced the concept of a $Q$-\textit{space}. We will extend this concept further in order to obtain, in the end, a full duality theory for distributive  information algebras. In view of Cor. \ref{quantifiers vs equivalences} the dual object of a distributive  information algebra $\underline{A} = (\underline{\Phi};\underline{E})$ should obviously be a Priestley space equipped with a collection of commuting separating equivalence relations.

\begin{lemma}  \label{varsigma is a homomorphism}
$\varsigma_\eps\circ\varsigma_\eta = \varsigma_{\eps\circ\eta}$ for any $\eps,\eta\in E$.
\end{lemma}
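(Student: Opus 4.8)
The plan is to exploit the explicit description of $\varsigma_\eps$ on $L(X\Phi)$ furnished by Corollary \ref{saturation lemma}, rather than to wrestle directly with the equivalence relations $\cong_\eps$ on the whole power set $P(X\Phi)$. Recall that, via the Priestley representation $\underline{\Phi}\cong\mathbf{L}\mathbf{X}\underline{\Phi}$, the clopen up-sets constituting $L(X\Phi)$ are exactly the sets $X_\phi$ for $\phi\in\Phi$. Hence it suffices to verify the claimed identity on each generator $X_\phi$.

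First I would note that $\eps\circ\eta$ again lies in $E$, since $E$ is closed under composition, so that $\varsigma_{\eps\circ\eta}$ is defined and Corollary \ref{saturation lemma} applies to it as well. Then the computation is immediate:
\[
\varsigma_\eps(\varsigma_\eta(X_\phi)) = \varsigma_\eps(X_{\eta\phi}) = X_{\eps(\eta\phi)} = X_{(\eps\circ\eta)\phi} = \varsigma_{\eps\circ\eta}(X_\phi),
\]
using Corollary \ref{saturation lemma} for the first and second equalities, the definition of composition for the third, and Corollary \ref{saturation lemma} applied to $\eps\circ\eta$ for the last. Since $\phi\in\Phi$ was arbitrary and the $X_\phi$ exhaust $L(X\Phi)$, this yields $\varsigma_\eps\circ\varsigma_\eta = \varsigma_{\eps\circ\eta}$ as operators on $L(X\Phi)$.

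The point I expect to require the most care is simply keeping straight that the identity is being asserted for the action on clopen up-sets — that is, on $\mathbf{L}\mathbf{X}\underline{\Phi}$ — which is exactly the setting relevant to the intended duality. Were one to demand the equality on all of $P(X\Phi)$, Corollary \ref{saturation lemma} would no longer suffice and one would instead have to establish that the kernels satisfy $\cong_\eps\star\cong_\eta = \cong_{\eps\circ\eta}$ and invoke Lemma \ref{commuting equivalences}; but that stronger statement is neither needed here nor obviously available, so restricting attention to $L(X\Phi)$ is both the natural and the economical route.
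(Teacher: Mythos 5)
Your proof is correct and coincides with the paper's own argument: both evaluate the operators on the generators $X_\phi$ of $L(X\Phi)$ and apply Corollary \ref{saturation lemma} (the identity $\varsigma_\eps X_\phi = X_{\eps\phi}$) three times to get the chain of equalities. Your closing remark about restricting attention to $L(X\Phi)$ rather than all of $P(X\Phi)$ correctly identifies the intended scope of the lemma.
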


\begin{proof}
By Corollary \ref{Cignolis lemma} we have $\varsigma_\eps X_\phi = X_{\eps\phi}$. So $(\varsigma_\eps \circ\varsigma_\eta) X_\phi = \varsigma_\eps(\varsigma_\eta X_\phi) = \varsigma_\eps X_{\eta\phi} = X_{\eps\circ\eta(\phi)} = \varsigma_{\eps\circ\eta}X_\phi$.
\end{proof}

\begin{lemma}   \label{varsigma is one-to-one}
If $\eps\neq\eta$ in $E$, then $\varsigma_\eps\neq\varsigma_\eta$.
\end{lemma}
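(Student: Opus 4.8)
The plan is to reduce the injectivity of the assignment $\eps\mapsto\varsigma_\eps$ to the injectivity of the Priestley isomorphism $\kappa_\Phi:\phi\mapsto X_\phi$ between $\underline{\Phi}$ and $\mathbf{L}\mathbf{X}\underline{\Phi}$, exploiting the fact that each $\varsigma_\eps$ is already pinned down by its action on the generating clopen up-sets $X_\phi$. The only genuine input is Corollary \ref{saturation lemma}, which computes this action explicitly as $\varsigma_\eps X_\phi = X_{\eps\phi}$.

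I would argue by contraposition. Assuming $\eps\neq\eta$ in $E$, and recalling that members of $E$ are operators $\Phi\longrightarrow\Phi$, distinctness means there is some $\phi\in\Phi$ with $\eps\phi\neq\eta\phi$. Since $\kappa_\Phi$ is a lattice isomorphism, it is in particular one-to-one, so $\eps\phi\neq\eta\phi$ gives $X_{\eps\phi}\neq X_{\eta\phi}$. Applying Corollary \ref{saturation lemma} twice then yields $\varsigma_\eps X_\phi = X_{\eps\phi}\neq X_{\eta\phi} = \varsigma_\eta X_\phi$, so the two saturation operators already disagree on the single element $X_\phi\in L(X\Phi)$, whence $\varsigma_\eps\neq\varsigma_\eta$.

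There is essentially no obstacle: the argument is a one-line transport of a witness through the injective map $\kappa_\Phi$ and Corollary \ref{saturation lemma}. The only points deserving a word of care are, first, that distinct operators in $E$ must differ at some concrete argument $\phi$, providing the witness we transport, and second, the appeal to injectivity of $\kappa_\Phi$. Should a direct (rather than contrapositive) phrasing be preferred, one may equally observe that $\varsigma_\eps=\varsigma_\eta$ forces $X_{\eps\phi}=X_{\eta\phi}$, hence $\eps\phi=\eta\phi$, for every $\phi\in\Phi$, so that $\eps=\eta$.
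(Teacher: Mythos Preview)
Your proof is correct and essentially the same as the paper's: both pick $\phi$ with $\eps\phi\neq\eta\phi$, invoke Corollary \ref{saturation lemma} to get $\varsigma_\eps X_\phi = X_{\eps\phi}$ and $\varsigma_\eta X_\phi = X_{\eta\phi}$, and then conclude these differ. The only cosmetic difference is that the paper explicitly appeals to (DPI) to find a prime ideal containing exactly one of $\eps\phi,\eta\phi$, whereas you package this same fact as the injectivity of $\kappa_\Phi$.
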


\begin{proof}
If $\eps\neq\eta$, there is $\phi\in\Phi$ such that $\eps\phi\neq\eta\phi$. By (DPI), we find $I\in X\Phi$ such that w.l.o.g. $\eps\phi\in I$ but $\eta\phi\notin I$. This means that $I\in X_{\eps\phi} = \varsigma_\eps X_\phi$ but $I\notin X_{\eta\phi} = \varsigma_\eta X_\phi$ using Lemma  \ref{varsigma is a homomorphism}, so $\varsigma_\eps  X_\phi \neq\varsigma_\eta X_\phi $.
\end{proof}

{\bf Note}: $X_\phi$ used in the proof above is a member of $L(X\Phi)$. This means that even the restrictions $\varsigma_{\eps}\vert_{L(X\Phi)}$ and $\varsigma_{\eta}\vert_{L(X\Phi)}$ differ whenever $\eps\neq\eta$.

We extend the $\mathbf{X}$-$\mathbf{L}$-machinery in order to include extraction and define $X(E)$ (or just $XE$ for short) for $\underline{A} = (\underline{\Phi};\underline{E})$ by $XE = \{\cong_\eps:\eps\in E\}$ and $\mathbf{X}\underline{E} := (XE,\star)$. Also, let $Sat_{L(X\Phi)}(XE) := \{\varsigma_{\eps}\vert_{L(X\Phi)}: \eps\in E\}$ and $\underline{Sat}_{L(X\Phi)}(XE) := (Sat_{L(X\Phi)}(XE),\circ)$.

\begin{theorem}  \label{ E is EX is SatEX}
$\underline{E}$, $\underline{Sat}_{L(X\Phi)}(XE)$ and $\mathbf{X}\underline{E}$ are isomorphic as semigroups.
\end{theorem}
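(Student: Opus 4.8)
The plan is to route the argument through the middle semigroup $\underline{Sat}_{L(X\Phi)}(XE)$, establishing $\underline E\cong\underline{Sat}_{L(X\Phi)}(XE)$ and $\mathbf X\underline E\cong\underline{Sat}_{L(X\Phi)}(XE)$ and then invoking transitivity. The first isomorphism is the easy half: the map $\eps\mapsto\varsigma_\eps\vert_{L(X\Phi)}$ is onto by the very definition of $Sat_{L(X\Phi)}(XE)$, it is one-to-one by Lemma \ref{varsigma is one-to-one} together with the Note following it (which asserts that the restrictions to $L(X\Phi)$ already separate distinct $\eps$), and it is a semigroup homomorphism by Lemma \ref{varsigma is a homomorphism}, since $\varsigma_\eta$ maps $L(X\Phi)$ into itself by Proposition \ref{properties of saturation operators}(i), whence $(\varsigma_\eps\vert_L)\circ(\varsigma_\eta\vert_L)=(\varsigma_\eps\circ\varsigma_\eta)\vert_L=\varsigma_{\eps\circ\eta}\vert_L$.

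For the second isomorphism the single fact that does all the work is that $XE=\{\cong_\eps:\eps\in E\}$ is a $\star$-semigroup. Once this is known, Proposition \ref{semigroup of saturation operators} delivers $\mathbf X\underline E=\underline{XE}\cong(\{\sigma_{\cong_\eps}\},\circ)$, and since $\sigma_{\cong_\eps}=\varsigma_\eps$ by the definition of $\varsigma_\eps$, restricting to $L(X\Phi)$ (injective again by the Note after Lemma \ref{varsigma is one-to-one}, and a homomorphism because $\varsigma_\eta$ preserves $L(X\Phi)$) identifies this with $\underline{Sat}_{L(X\Phi)}(XE)$. Concretely I would prove the closure formula
\[
\cong_\eps\;\star\;\cong_\eta\;=\;\cong_{\eps\circ\eta}\qquad(\eps,\eta\in E),
\]
which simultaneously shows that $XE$ is closed under $\star$, that $\star$ is commutative on $XE$ (because $\eps\circ\eta=\eta\circ\eps$ by (C), so the right-hand side is symmetric in $\eps,\eta$), whence by Lemma \ref{star product of two equivalences} the products are genuine equivalences and $\mathbf X\underline E$ is a commutative idempotent semigroup, and finally that $\eps\mapsto\cong_\eps$ is a semigroup homomorphism; its injectivity follows from Proposition \ref{properties of saturation operators}(vi), by which $\cong_\eps$ is recoverable from the family $L^\eps(X\Phi)$, i.e. from $\varsigma_\eps\vert_L$.

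The forward inclusion $\cong_\eps\star\cong_\eta\subseteq\cong_{\eps\circ\eta}$ is routine once one notes that $(\eps\circ\eta)\Phi=\eps\Phi\cap\eta\Phi$ (the inclusion $\subseteq$ uses (C), the reverse uses idempotence (I)): if $I\cap\eps\Phi=J\cap\eps\Phi$ and $J\cap\eta\Phi=I'\cap\eta\Phi$ for some prime ideal $J$, then intersecting with $\eps\Phi\cap\eta\Phi$ gives $I\cap(\eps\circ\eta)\Phi=I'\cap(\eps\circ\eta)\Phi$. The hard part, and the main obstacle, is the reverse inclusion: given prime ideals $I,I'$ with $I\cap(\eps\circ\eta)\Phi=I'\cap(\eps\circ\eta)\Phi$, one must produce a prime ideal $J$ of $\Phi$ realizing the prescribed restrictions $J\cap\eps\Phi=I\cap\eps\Phi$ and $J\cap\eta\Phi=I'\cap\eta\Phi$ simultaneously. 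This is an amalgamation statement, the agreement hypothesis guaranteeing only the compatibility of the two demands on the overlap $\eps\Phi\cap\eta\Phi$. I would obtain $J$ by applying (DPI) to the ideal generated by $(I\cap\eps\Phi)\cup(I'\cap\eta\Phi)$ and the filter generated by $(\eps\Phi\setminus I)\cup(\eta\Phi\setminus I')$; the crux is verifying that these are disjoint, for which the relevant tools are the meet-preservation of the extractions (Definition \ref{def distr CDF}(iii)), the quantifier law (Q), commutativity (C), Cignoli's realization Lemma \ref{Cignolis lemma}, and the primeness of $I$ and $I'$. Note that this cannot be reduced to exhibiting a single ``witness point'' as one could in the Boolean case, since general distributive lattices lack complements; the argument parallels the purely algebraic computation of Theorem \ref{kernels commute}, now transported to the space of prime ideals.

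Finally I would assemble the two isomorphisms: composing $\eps\mapsto\cong_\eps$ with $\cong_\eps\mapsto\varsigma_\eps\vert_L$ recovers the isomorphism $\eps\mapsto\varsigma_\eps\vert_L$ of the first paragraph, so the three semigroups $\underline E$, $\underline{Sat}_{L(X\Phi)}(XE)$ and $\mathbf X\underline E$ are pairwise isomorphic, as claimed.
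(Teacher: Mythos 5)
Your treatment of the first isomorphism $\underline{E}\cong\underline{Sat}_{L(X\Phi)}(XE)$ is complete and coincides with the paper's (Lemma \ref{varsigma is a homomorphism}, Lemma \ref{varsigma is one-to-one} together with the Note following it). For the second isomorphism you correctly put your finger on the point the paper passes over in silence: the cited Proposition \ref{semigroup of saturation operators} applies only once one knows that $XE$ is closed under $\star$, and the natural way to get this is the closure formula $\cong_\eps\star\cong_\eta\;=\;\cong_{\eps\circ\eta}$, whose forward inclusion you prove correctly via $(\eps\circ\eta)\Phi=\eps\Phi\cap\eta\Phi$.

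The gap is the reverse inclusion. You reduce it to the disjointness of the ideal generated by $(I\cap\eps\Phi)\cup(I'\cap\eta\Phi)$ from the filter generated by $(\eps\Phi\setminus I)\cup(\eta\Phi\setminus I')$, and then merely list the tools that should establish it. That verification is not routine, and it is exactly where the difficulty lives. Suppose $c\wedge d\le a\vee b$ with $a\in I\cap\eps\Phi$, $b\in I'\cap\eta\Phi$, $c\in\eps\Phi\setminus I$, $d\in\eta\Phi\setminus I'$. Applying $\eps$ (meet-preservation plus (Q) with $a=\eps a$) gives $c\wedge\eps d\le a\vee\eps b$; since $\eps b\in(\eps\circ\eta)\Phi$ and $\eps b\le b\in I'$, the hypothesis $I\cap(\eps\circ\eta)\Phi=I'\cap(\eps\circ\eta)\Phi$ puts $\eps b\in I$, so $c\wedge\eps d\in I$, and primeness of $I$ with $c\notin I$ yields only $\eps d\in I'$. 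The symmetric computation with $\eta$ yields only $\eta c\in I$. Neither conclusion contradicts $c\notin I$ or $d\notin I'$: a prime ideal may contain $\eps d$ without containing $d\ge\eps d$. So the tools you list, deployed in the obvious way, do not close the argument; some further idea (or a different choice of ideal and filter) is required, and until the amalgamation is actually established you have not even shown that $\mathbf{X}\underline{E}=(XE,\star)$ is a semigroup, let alone isomorphic to $\underline{E}$. To be fair, the paper's own one-line proof simply presupposes the $\star$-closure of $XE$ --- the analogue on $X\Phi$ of Theorem \ref{kernels commute} --- as well as the passage from the full saturation operators of Proposition \ref{semigroup of saturation operators} to their restrictions to $L(X\Phi)$; you have located a genuine issue, but your proposal does not resolve it.
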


\begin{proof}
Lemma \ref{varsigma is a homomorphism} and Lemma \ref{varsigma is one-to-one} for the first isomorphism, and
Prop. \ref{semigroup of saturation operators} for the second.
\end{proof}

\begin{definition} \textbf{$Q$-Priestley Spaces:} \label{Q-Priestley Spaces}
A $Q$-Priestley space is a pair $(\underline{Y},\underline{\mathcal{T}})$ consisting of a Priestley space $\underline{Y}$ and a $\star$-semigroup $\underline{\mathcal{T}}$ in $Eq(Y)$ consisting of separating equivalence relations.
\end{definition}

Given a Q-Priestley space $(\underline{Y},\underline{\mathcal{T}})$, we extend notation again and write $L\mathcal{T} := \{\sigma_{\Theta}\vert_{LY}: \Theta\in\mathcal{T}\} = Sat_{LY}(\mathcal{T})$, and $\mathbf{L}\underline{\mathcal{T}} := (L\mathcal{T},\circ)$.

The task at hand is to find the appropriate morphisms between Q-Priestley spaces with the objective of obtaining a full duality between distributive  information algebras with their algebra homomorphisms and $Q$-Priestley spaces with the morphisms sought after.

So let $\underline{A} = (\underline{\Phi};\underline{E})$ and $\underline{B} = (\underline{\Psi};\underline{D})$ be two distributive  information algebras. Assume $(f,g):\underline{A}\longrightarrow\underline{B}$ is an information algebra homomorphism, which means that $f$ is lattice homomorphism and $g$ a semigroup homomorphism subject to the compatibility condition  Def. 2.6.(4). Going to spaces, we have $\mathbf{X}f = f^{-1}:\mathbf{X}\Psi\longrightarrow\mathbf{X}\Phi$ for the lattice part. For the extraction part, the canonical map naturally associated with $g$ is $\mathbf{X}g:\mathbf{X}\underline{E}\longrightarrow \mathbf{X}\underline{D}$ given by $\mathbf{X}g(\cong_{\eps}) := \ \cong_{g(\eps)}$.

\begin{lemma}   \label{Xg is semigroup homo}
$\mathbf{X}g$ is a semigroup homomorphism between the $\star$-semigroups $\mathbf{X}\underline{E}$ and $\mathbf{X}\underline{D}$.
\end{lemma}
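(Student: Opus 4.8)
The lemma asserts that the map $\mathbf{X}g:\mathbf{X}\underline{E}\longrightarrow\mathbf{X}\underline{D}$, defined by $\mathbf{X}g(\cong_{\eps}) = \cong_{g(\eps)}$, is a semigroup homomorphism between the $\star$-semigroups $\mathbf{X}\underline{E} = (XE,\star)$ and $\mathbf{X}\underline{D} = (XD,\star)$. So I need to check two things: that $\mathbf{X}g$ is well-defined as a map (the value $\cong_{g(\eps)}$ depends only on $\cong_\eps$, not on the particular $\eps$ representing it), and that it respects the $\star$-product, i.e. $\mathbf{X}g(\cong_{\eps}\star\cong_{\eta}) = \mathbf{X}g(\cong_{\eps})\star\mathbf{X}g(\cong_{\eta})$.

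Let me think about this carefully.

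**Well-definedness.** The elements of $XE$ are the equivalences $\cong_\eps$ on $X\Phi$. Different extraction operators $\eps,\eps'\in E$ could in principle yield the same equivalence $\cong_\eps = \cong_{\eps'}$, in which case I must ensure $\cong_{g(\eps)} = \cong_{g(\eps')}$ so that $\mathbf{X}g$ is unambiguous. But by Lemma~\ref{varsigma is one-to-one}, distinct extraction operators $\eps\neq\eps'$ give distinct saturation operators $\varsigma_\eps\neq\varsigma_{\eps'}$, and since $\varsigma_\eps$ is the saturation operator of $\cong_\eps$ (so $\cong_\eps$ and $\varsigma_\eps$ determine each other), distinct $\eps$ give distinct $\cong_\eps$. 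Thus the assignment $\cong_\eps\mapsto\eps$ is a genuine bijection between $XE$ and $E$, and there is no ambiguity: well-definedness is automatic.

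**The $\star$-product.** This is the substantive part. By Theorem~\ref{kernels commute} applied to $\underline{B}$, for the composite $g(\eps)\circ g(\eta)\in D$ we have $\cong_{g(\eps)}\star\cong_{g(\eta)} = \cong_{g(\eps)\circ g(\eta)}$ (here $\cong$ now denotes $ker$ of extraction operators on $X\Psi$, matching the notation $ker\,\eps\star ker\,\eta = ker\,(\eps\circ\eta)$). Since $(f,g)$ is an information algebra homomorphism, condition~3 in Def.~\ref{CDF homos} gives $g(\eps\circ\eta) = g(\eps)\circ g(\eta)$. Therefore $\cong_{g(\eps)}\star\cong_{g(\eta)} = \cong_{g(\eps)\circ g(\eta)} = \cong_{g(\eps\circ\eta)} = \mathbf{X}g(\cong_{\eps\circ\eta})$. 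On the other hand, again by Theorem~\ref{kernels commute} applied to $\underline{A}$, we have $\cong_{\eps}\star\cong_{\eta} = \cong_{\eps\circ\eta}$, whence $\mathbf{X}g(\cong_{\eps}\star\cong_{\eta}) = \mathbf{X}g(\cong_{\eps\circ\eta}) = \cong_{g(\eps\circ\eta)}$. Comparing the two chains yields $\mathbf{X}g(\cong_{\eps}\star\cong_{\eta}) = \mathbf{X}g(\cong_{\eps})\star\mathbf{X}g(\cong_{\eta})$, as required.

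**Where the work really sits.** The proof is essentially a two-line diagram chase once the right identities are invoked, so the only genuine obstacle is bookkeeping: I must be confident that $\cong_\eps$ on the space side corresponds exactly to $ker\,\eps$ on the algebra side, so that Theorem~\ref{kernels commute} (stated for kernels $ker\,\eps$ of extraction operators) can legitimately be read as the statement $\cong_\eps\star\cong_\eta = \cong_{\eps\circ\eta}$. The crucial link is that $\mathbf{X}\underline{E}$ is isomorphic to $\underline{E}$ as a semigroup (Theorem~\ref{ E is EX is SatEX}), which is precisely what lets me translate the homomorphism property $g(\eps\circ\eta) = g(\eps)\circ g(\eta)$ of $g$ into the homomorphism property of $\mathbf{X}g$. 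I would therefore open the proof by explicitly recording that $\cong_\eps\star\cong_\eta = \cong_{\eps\circ\eta}$ on both $X\Phi$ and $X\Psi$ (citing Theorem~\ref{kernels commute}), and then perform the substitution using Def.~\ref{CDF homos}(3); everything else is routine.
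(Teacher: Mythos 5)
Your proof is correct and follows essentially the same route as the paper's: the paper's proof is exactly the one-line chain $\mathbf{X}g(\cong_\eps\star\cong_{\eta}) = \mathbf{X}g(\cong_{\eps\circ\eta}) = \cong_{g(\eps\circ\eta)} = \cong_{g\eps\circ g\eta} = \cong_{g\eps}\star\cong_{g\eta}$, resting on $\cong_\eps\star\cong_\eta=\cong_{\eps\circ\eta}$ (Theorem \ref{ E is EX is SatEX}) and Def.\ \ref{CDF homos}(3). Your additional well-definedness check via Lemma \ref{varsigma is one-to-one} is a sensible extra that the paper leaves implicit, and you correctly note that the identity $\cong_\eps\star\cong_\eta=\cong_{\eps\circ\eta}$ on the space side is licensed by Theorem \ref{ E is EX is SatEX} rather than literally by Theorem \ref{kernels commute}.
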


\begin{proof}
We have $\mathbf{X}g(\cong_\eps\star\cong_{\eps'}) = \mathbf{X}g(\cong_{\eps\circ\eps'}) =\ \cong_{g(\eps\circ\eps')} =\  \cong_{g\eps\circ g\eps'} =\  \cong_{g\eps}\star\cong_{g\eps'} = \mathbf{X}g(\cong_{\eps})\star\mathbf{X}g(\cong_{\eps'})$.
\end{proof}

Proceeding in the obvious way, define $\mathbf{L}\mathbf{X}g: L(XE) \longrightarrow L(XD)$ by $\mathbf{L}\mathbf{X}g(\varsigma_\eps) := \varsigma_{g(\eps)}$.

\begin{lemma}   \label{LXg is semigroup homo}
$\mathbf{L}\mathbf{X}g$ is semigroup homomorphism between the semigroups $\mathbf{L}\mathbf{X}\underline{E} := (L(XE),\circ)$ and $\mathbf{L}\mathbf{X}\underline{D} := (L(XD),\circ)$.
\end{lemma}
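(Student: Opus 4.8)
The plan is to reduce the claim to a short chain of identities, sandwiching the semigroup-homomorphism clause (item 3) of Definition \ref{CDF homos} between two applications of Lemma \ref{varsigma is a homomorphism} --- one in the source algebra $\underline{A} = (\underline{\Phi};\underline{E})$ and one in the target algebra $\underline{B} = (\underline{\Psi};\underline{D})$. I would fix $\eps,\eta \in E$ and simply unwind both sides.

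First I would treat the left-hand side. The operation in $\mathbf{L}\mathbf{X}\underline{E}$ is ordinary composition $\circ$ of the maps $\varsigma_\eps,\varsigma_\eta$, and Lemma \ref{varsigma is a homomorphism} gives $\varsigma_\eps\circ\varsigma_\eta = \varsigma_{\eps\circ\eta}$. Hence, by the definition of $\mathbf{L}\mathbf{X}g$, $\mathbf{L}\mathbf{X}g(\varsigma_\eps\circ\varsigma_\eta) = \mathbf{L}\mathbf{X}g(\varsigma_{\eps\circ\eta}) = \varsigma_{g(\eps\circ\eta)}$. Next I would push $g$ through the composition: since $(f,g)$ is an information algebra homomorphism, item 3 of Definition \ref{CDF homos} yields $g(\eps\circ\eta) = g(\eps)\circ g(\eta)$, so $\varsigma_{g(\eps\circ\eta)} = \varsigma_{g(\eps)\circ g(\eta)}$. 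Finally I would apply Lemma \ref{varsigma is a homomorphism} once more, now in $\underline{B}$ (the lemma is stated for an arbitrary distributive information algebra, hence holds verbatim for $\underline{B}$), where $g(\eps),g(\eta)\in D$ live: $\varsigma_{g(\eps)\circ g(\eta)} = \varsigma_{g(\eps)}\circ\varsigma_{g(\eta)} = \mathbf{L}\mathbf{X}g(\varsigma_\eps)\circ\mathbf{L}\mathbf{X}g(\varsigma_\eta)$, which is exactly the right-hand side.

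I do not anticipate any genuine obstacle: the argument is purely formal bookkeeping, and the entire content is the two invocations of Lemma \ref{varsigma is a homomorphism} bracketing the homomorphism property of $g$. The only point worth a remark is the well-definedness of $\mathbf{L}\mathbf{X}g$, namely that $\varsigma_\eps = \varsigma_{\eps'}$ forces $\varsigma_{g(\eps)} = \varsigma_{g(\eps')}$; this follows from Lemma \ref{varsigma is one-to-one}, which recovers $\eps$ from $\varsigma_\eps$, so that $\varsigma_\eps = \varsigma_{\eps'}$ already gives $\eps = \eps'$ and hence $g(\eps) = g(\eps')$. Equivalently, one could read $\mathbf{L}\mathbf{X}g$ through the semigroup isomorphisms of Theorem \ref{ E is EX is SatEX}, whereupon the homomorphism property is inherited immediately from that of $g$; I would regard this as the conceptually cleaner phrasing, but the direct computation above is the most transparent.
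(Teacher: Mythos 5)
Your proof is correct and is essentially the paper's argument written out in full: the paper simply cites Theorem \ref{ E is EX is SatEX}, which packages Lemmas \ref{varsigma is a homomorphism} and \ref{varsigma is one-to-one} into the semigroup isomorphism $\eps\mapsto\varsigma_\eps$, and your explicit chain of identities (plus the well-definedness remark) is exactly the unwinding of that citation. Your closing observation that one could instead ``read $\mathbf{L}\mathbf{X}g$ through the isomorphisms of Theorem \ref{ E is EX is SatEX}'' is in fact precisely what the paper does.
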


\begin{proof}
Theorem \ref{ E is EX is SatEX}
\end{proof}

Next, we will show that the pair of maps $(\mathbf{L}\mathbf{X}f,\mathbf{L}\mathbf{X}g)$ is an information algebra homomorphism from the algebra $\mathbf{L}\mathbf{X}\underline{A} = (\mathbf{L}\mathbf{X}\underline{\Phi},\mathbf{L}\mathbf{X}\underline{E})$ to the algebra $\mathbf{L}\mathbf{X}\underline{B} = (\mathbf{L}\mathbf{X}\underline{\Psi},\mathbf{L}\mathbf{X}
\underline{D})$, which means that $(\mathbf{L}\mathbf{X}f,\mathbf{L}\mathbf{X}g)$ satisfies the compatibility condition Def. 2.6.(4). Let $U\in L(X\Phi)$. So $U = X_\phi$ for some uniquely determined $\phi\in\Phi$. Now
\begin{eqnarray}   \label{alghomo uno}
\mathbf{L}\mathbf{X}f(\varsigma_\eps X_\phi) = \mathbf{L}\mathbf{X}f(X_{\eps(\phi)}) = X_{f(\eps(\phi))}.
\end{eqnarray}
On the other hand,
\begin{eqnarray}   \label{alghomo dos}
\mathbf{L}\mathbf{X}g(\varsigma_\eps)(\mathbf{L}\mathbf{X}f(X_\phi)) = \varsigma_{h(\eps)}(X_{f(\phi)}) = X_{g(\eps)(f(\phi))}.
\end{eqnarray}
Hence $\mathbf{L}\mathbf{X}f(\varsigma_\eps X_\phi) = \mathbf{L}\mathbf{X}g(\varsigma_\eps)(\mathbf{L}\mathbf{X}f(X_\phi))$ iff
$X_{f(\eps(\phi))}) = X_{g(\eps)(f(\phi))}$.

\bigskip

\begin{proposition}   \label{f,g compatible iff LXf,LXg compatible}
$(\mathbf{L}\mathbf{X}f,\mathbf{L}\mathbf{X}g)$ satisfies Def. 2.6.(4) iff $(f,g)$ so does.
\end{proposition}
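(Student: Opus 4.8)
The plan is to show that the biconditional follows essentially by inspection from the two displayed computations (\ref{alghomo uno}) and (\ref{alghomo dos}) together with the injectivity of the lattice representation map $\kappa_\Phi$. Recall from the preceding discussion that $\kappa_\Psi:\psi\longmapsto X_\psi$ is an \emph{isomorphism} between $\underline{\Psi}$ and $\mathbf{L}\mathbf{X}\underline{\Psi}$, so in particular it is one-to-one: for $\psi,\psi'\in\Psi$ we have $X_\psi = X_{\psi'}$ iff $\psi = \psi'$. This single fact is what converts the set-theoretic identity of representing clopen up-sets into an identity of algebra elements, and vice versa.

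First I would fix an arbitrary $\eps\in E$ and an arbitrary $U\in L(X\Phi)$, writing $U = X_\phi$ for the uniquely determined $\phi\in\Phi$ guaranteed by $\kappa_\Phi$ being a bijection onto $L(X\Phi)$. By the computations already carried out in (\ref{alghomo uno}) and (\ref{alghomo dos}), the compatibility condition Def. 2.6.(4) for the pair $(\mathbf{L}\mathbf{X}f,\mathbf{L}\mathbf{X}g)$ evaluated at $\varsigma_\eps$ and $U = X_\phi$ reads
\begin{eqnarray*}
X_{f(\eps(\phi))} = X_{g(\eps)(f(\phi))}.
\end{eqnarray*}
Since every member of $L(X\Phi)$ is of the form $X_\phi$, the pair $(\mathbf{L}\mathbf{X}f,\mathbf{L}\mathbf{X}g)$ satisfies Def. 2.6.(4) in full precisely when this equation holds for all $\phi\in\Phi$ and all $\eps\in E$.

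Next I would invoke the injectivity of $\kappa_\Psi$ on the algebra $\underline{\Psi}$: the displayed equation $X_{f(\eps(\phi))} = X_{g(\eps)(f(\phi))}$ holds iff $f(\eps(\phi)) = g(\eps)(f(\phi))$. Quantifying over all $\phi\in\Phi$ and $\eps\in E$, the right-hand identity is exactly the statement that $(f,g)$ satisfies Def. 2.6.(4). This establishes both directions of the biconditional simultaneously, since each step is an equivalence: compatibility of $(\mathbf{L}\mathbf{X}f,\mathbf{L}\mathbf{X}g)$ is equivalent to the equality of the two families of clopen up-sets, which by injectivity of $\kappa_\Psi$ is equivalent to compatibility of $(f,g)$.

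I do not anticipate a genuine obstacle here; the proposition is a bookkeeping consequence of the representation isomorphism, and the only point requiring care is to make sure the quantification over $U = X_\phi$ really ranges over \emph{all} of $L(X\Phi)$ (which it does, because $\kappa_\Phi$ is onto $L(X\Phi)$) and that the translation via $\kappa_\Psi$ is an equivalence rather than a one-way implication (which it is, because $\kappa_\Psi$ is injective). The substantive work -- identifying $\mathbf{L}\mathbf{X}f(\varsigma_\eps X_\phi)$ with $X_{f(\eps(\phi))}$ and $\mathbf{L}\mathbf{X}g(\varsigma_\eps)(\mathbf{L}\mathbf{X}f(X_\phi))$ with $X_{g(\eps)(f(\phi))}$ -- has already been done in the two displays preceding the statement, so the proof proper is short.
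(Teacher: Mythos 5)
Your proposal is correct and follows essentially the same route as the paper: the paper also derives the biconditional directly from the displays (\ref{alghomo uno}) and (\ref{alghomo dos}), reducing Def.~2.6.(4) for $(\mathbf{L}\mathbf{X}f,\mathbf{L}\mathbf{X}g)$ to the equality $X_{f(\eps(\phi))} = X_{g(\eps)(f(\phi))}$ ranging over all $\phi\in\Phi$ and $\eps\in E$. Your explicit appeal to the surjectivity of $\kappa_\Phi$ onto $L(X\Phi)$ and the injectivity of $\kappa_\Psi$ merely makes precise the two steps the paper leaves implicit in its ``Hence \ldots iff \ldots'' line.
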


Turning to spaces, a morphism $(\alpha,\omega)$ from a $Q$-Priestley space $(\underline{Y},\underline{\mathcal{T}})$ to a $Q$-Priestley space $(\underline{Z},\underline{\mathcal{G}})$ should obviously be a pair $(\alpha,\omega)$ consisting of a continuous order-preserving map $\alpha: \underline{Y}\longrightarrow \underline{Z}$ and $\star$-homomorphism $\omega:\underline{\mathcal{G}}\longrightarrow\underline{\mathcal{T}}$. Define  $\mathbf{L}\omega:\mathbf{L}\underline{\mathcal{G}}\longrightarrow \mathbf{L}\underline{\mathcal{T}}$ by $\mathbf{L}\omega(\sigma_\Gamma) := \sigma_{\omega\Gamma}$. $\mathbf{L}\omega$ is a $\circ$-homomorphism by Prop. \ref{semigroup of saturation operators}.

Obviously, we want $(\mathbf{L}\alpha,\mathbf{L}\mathbf{\omega})$ to be an algebra homomomorphism from $\mathbf{L}\underline{Z}$ to $\mathbf{L}\underline{Y}$. This is the case exactly iff $(\mathbf{L}\alpha,\mathbf{L}\omega)$ satisfies Def. 2.6.(4), explicitly,
\begin{eqnarray}   \label{def Q-homos}
\mathbf{L}\alpha(\sigma_{\Gamma}(V)) = \sigma_{\omega\Gamma}(\mathbf{L}\alpha(V))
\end{eqnarray}
for all $V\in LZ$ and $\Gamma\in\mathcal{G}$.

Finally, put $\simeq_\Gamma := ker\ \sigma_\Gamma^{-1}$ for $\Gamma\in\mathcal{G}$ resp. $\simeq_\Theta := ker\ \sigma_\Theta^{-1}$ for $\Theta\in\mathcal{T}$ and let $\mathbf{X}\mathbf{L}\mathcal{G} := \{\simeq_\Gamma : \sigma_\Gamma\in\mathbf{L}\mathcal{G}\}$ resp. $\mathbf{X}\mathbf{L}\mathcal{T} := \{\simeq_\Theta : \sigma_\Theta\in\mathbf{L}\mathcal{T}\}$. Define a map $\mathbf{X}\mathbf{L}\omega: \mathbf{X}\mathbf{L}\underline{\mathcal{G}}\longrightarrow\mathbf{X}\mathbf{L}\underline{\mathcal{T}}$ by $\mathbf{X}\mathbf{L}\omega(\simeq_\Gamma) := \simeq_{\omega\Gamma}$ for all $\sigma_\Gamma\in\mathbf{L}\underline{\mathcal{G}}$. $\mathbf{X}\mathbf{L}\omega$ is a $\star$-homomorphism by Lemma \ref{extraction and equivalence}.

Note that (\ref{alghomo uno}) and (\ref{alghomo dos}) together just say that $\mathbf{X}f =: \alpha$ and $\mathbf{X}g =: \omega$ satisfy (\ref{def Q-homos}). So (\ref{def Q-homos}) is indeed the correct Q-Priestley space analogue of the algebra compatibility condition Def. 2.6.(4) and we formally define

\begin{definition}   \label{def Q-morphism}
A $Q$-morphism $(\alpha,\omega)$ from a $Q$-Priestley space $(\underline{Y},\underline{\mathcal{T}})$ to a $Q$-Priestley space $(\underline{Z},\underline{\mathcal{G}})$ is a pair $(\alpha,\omega)$ consisting of a continuous order-preserving map $\alpha:
\underline{Y}\longrightarrow \underline{Z}$ and $\star$-homomorphism $\omega:\underline{\mathcal{G}}\longrightarrow \underline{\mathcal{T}}$ satisfying  $\mathbf{L}\alpha(\sigma_{\Gamma}(V)) = \sigma_{\omega\Gamma}(\mathbf{L}\alpha(V))$ for all $V\in LZ$ and $\Gamma\in\mathcal{G}$.
\end{definition}

\subsection{Representation and Duality}   \label{Full duality}

Remember that $\mathbb{D}$ stands for the category of all distributive  information algebras with CDF homomorphisms, and write $\mathbb{Q}$ for the category of all $Q$-Priestley spaces with $Q$-morphisms. A full duality between $\mathbb{D}$ and $\mathbb{Q}$ will be established by two commutative diagrams generalizing these given in \cite{daveypriestley02} for the categories of distributive bounded lattices with $0$-$1$-preserving lattice homomorphisms and Priestley spaces with continuous order-preserving maps.

We start with the algebra point of view where the definition of an isomorphism is the natural one (see section  \ref{Homomorphisms and Subalgebras}).

For any  information algebra $\underline{A} = (\underline{\Phi},\underline{E})$, we have - by up-side down Priestley duality -
a natural $1$-$0$-preserving lattice isomorphism $\kappa_{\Phi}:  \underline{\Phi}\longrightarrow \mathbf{L}\mathbf{X}\underline{\Phi}$, given by $\kappa_{\Phi}(\phi) = X_{\phi}$ for all $\phi\in\Phi$ (see section \ref{``Up-side down'' Priestley duality}).
For the extraction part, define a map $\kappa_E: \underline{E}\longrightarrow \mathbf{L}\mathbf{X}\underline{E}$ by $\kappa_E(\eps) := \varsigma_{\eps}$, which is a semigroup isomorphism by Thm. \ref{ E is EX is SatEX}. It remains to show that $(\kappa_{\Phi},\kappa_E)$ satisfies Def. 2.6.(4): We have $\kappa_{\Phi}(\eps(\phi)) = X_{\eps(\phi)} = \varsigma_{\eps}(X_{\phi}) = \kappa_E(\eps)(\kappa_{\Phi}(\phi))$ by Corollary \ref{saturation lemma}, so $(\kappa_{\Phi},\kappa_E)$ is indeed an isomorphism of CDF information algebras by Corollary \ref{alg isos 2}. The same is true for $(\kappa_{\Psi},\kappa_D)$. $(\mathbf{L}\mathbf{X}f,\mathbf{L}\mathbf{X}g)$ is an algebra homomorphism by Prop. \ref{f,g compatible iff LXf,LXg compatible}, so the following diagram is commutative, providing the algebra half of sought duality:

\bigskip

\[
  \begin{CD}
      (\underline{\Phi};\underline{E})                                             @>f>g>                                            (\underline{\Psi};\underline{D})\\
      @V\kappa_{\Phi}V\kappa_EV                                                                         @V\kappa_{\Psi}V\kappa_DV\\
      (\mathbf{L}\mathbf{X}\underline{\Phi};\mathbf{L}\mathbf{X}\underline{E})     @>\mathbf{L}\mathbf{X}f>\mathbf{L}\mathbf{X}g>                (\mathbf{L}\mathbf{X}\underline{\Psi};\mathbf{L}\mathbf{X}\underline{D})
   \end{CD}
\]

\bigskip

Ignoring the horizontal arrows in the preceding diagram, we obtain a general representation theorem:

\begin{theorem}[Representation Theorem]  \label{Gen Repr Thm Distr InfAlgs}

Any distributive CFD information algebra $(\underline{\Phi};\underline{E})$ is isomorphic with the set algebra
$(\mathbf{L}\mathbf{X}\underline{\Phi};\mathbf{L}\mathbf{X}\underline{E})$.

\end{theorem}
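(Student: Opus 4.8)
The plan is to exhibit the pair $(\kappa_\Phi, \kappa_E)$ already assembled in the preceding discussion as the desired isomorphism, and then simply to check that it meets the definition of an isomorphism of information algebras via Corollary \ref{alg isos 2}. Concretely, $\kappa_\Phi$ sends $\phi$ to $X_\phi = \{I \in X\Phi : \phi \in I\}$ and $\kappa_E$ sends $\eps$ to $\varsigma_\eps$ (restricted to $L(X\Phi)$). Since the statement is obtained by ``ignoring the horizontal arrows'' of the commutative square, all I really need is to confirm that the two vertical maps jointly form an isomorphism.

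First I would recall that $\kappa_\Phi$ is a bijective, $0$-$1$-, meet- and join-preserving lattice homomorphism. This is nothing but the ``up-side down'' Priestley representation of the lattice reduct $\underline{\Phi}$ recorded in Subsection \ref{``Up-side down'' Priestley duality}, so conditions 1 and 2 of Definition \ref{CDF homos} hold for $\kappa_\Phi$. Next, condition 3 is the assertion that $\kappa_E$ is a $\circ$-homomorphism together with its bijectivity; both are furnished by Theorem \ref{ E is EX is SatEX}, which identifies $\underline{E}$ with $\mathbf{L}\mathbf{X}\underline{E}$ through $\eps \mapsto \varsigma_\eps$.

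The only genuinely new point to verify is the compatibility condition Def. 2.6.(4), namely $\kappa_\Phi(\eps(\phi)) = \kappa_E(\eps)(\kappa_\Phi(\phi))$ for all $\phi \in \Phi$ and $\eps \in E$. Here the heavy lifting has already been done: by Corollary \ref{saturation lemma} one has $\varsigma_\eps X_\phi = X_{\eps\phi}$, whence $\kappa_\Phi(\eps(\phi)) = X_{\eps(\phi)} = \varsigma_\eps(X_\phi) = \kappa_E(\eps)(\kappa_\Phi(\phi))$. With all four conditions of Definition \ref{CDF homos} in hand and both $\kappa_\Phi$ and $\kappa_E$ bijective, Corollary \ref{alg isos 2} immediately yields that $(\kappa_\Phi, \kappa_E)$ is an isomorphism, proving the theorem.

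I do not expect a real obstacle at this stage, since the substantive work sits upstream. The compatibility step rests on Corollary \ref{saturation lemma}, which in turn depends on Cignoli's Lemma \ref{Cignolis lemma}; the bijective-semigroup claim for $\kappa_E$ packages Lemmas \ref{varsigma is a homomorphism} and \ref{varsigma is one-to-one}. Thus the theorem is essentially the assembly of these previously proven facts into the commutative square. If anything requires care, it is merely confirming that the codomain $\mathbf{L}\mathbf{X}\underline{E}$ of $\kappa_E$ is indeed the extraction semigroup of the set algebra $(\mathbf{L}\mathbf{X}\underline{\Phi}; \mathbf{L}\mathbf{X}\underline{E})$ and that each $\varsigma_\eps$ genuinely acts there as an extraction (saturation) operator, which is guaranteed by Proposition \ref{properties of saturation operators}.
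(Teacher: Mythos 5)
Your proposal is correct and follows essentially the same route as the paper: the paper also exhibits $(\kappa_\Phi,\kappa_E)$, invokes up-side down Priestley duality for the lattice part, Theorem \ref{ E is EX is SatEX} for the semigroup part, Corollary \ref{saturation lemma} for the compatibility condition Def. 2.6.(4), and Corollary \ref{alg isos 2} to conclude. No discrepancies to report.
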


For the space analogue, we need a workable description of $Q$-isomorphisms, taking over the r\^{o}le of Corollary \ref{alg isos 2}. Such is provided by an appropriate extension of Corollary 2.9 in (Cignoli, 1991):

\begin{lemma}[Cignoli]    \label{Q-isos}
A $Q$-morphism $(\alpha,\omega):(\underline{Y},\underline{\mathcal{T}})\longrightarrow (\underline{Z},\underline{\mathcal{G}})$ is a $Q$-isomorphism iff $\alpha$ is an order-homeomorphism, $\omega$ is a semigroup isomorphism, and for all $\Theta\in\mathcal{T}$ and all $p,q\in Y$ we have $(p,q)\in\Theta$ iff $(\alpha(p),\alpha(q))\in\omega\Gamma$.
\end{lemma}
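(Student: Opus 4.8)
The plan is to treat ``$Q$-isomorphism'' categorically and then reduce everything to a single translation between the set-theoretic compatibility equation of Definition \ref{def Q-morphism} and the stated pointwise condition. First I record how $Q$-morphisms compose: if $(\alpha,\omega)\colon(\underline{Y},\underline{\mathcal{T}})\to(\underline{Z},\underline{\mathcal{G}})$ and $(\alpha',\omega')\colon(\underline{Z},\underline{\mathcal{G}})\to(\underline{W},\underline{\mathcal{H}})$, then the composite is $(\alpha'\circ\alpha,\,\omega\circ\omega')$, acting covariantly on the space part and contravariantly on the semigroup part (note $\omega\circ\omega'\colon\mathcal{H}\to\mathcal{T}$). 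Hence $(\alpha,\omega)$ is a $Q$-isomorphism exactly when there is a $Q$-morphism $(\beta,\tau)$ with $\beta\circ\alpha=\mathrm{id}_Y$, $\alpha\circ\beta=\mathrm{id}_Z$, $\omega\circ\tau=\mathrm{id}_{\mathcal{T}}$ and $\tau\circ\omega=\mathrm{id}_{\mathcal{G}}$. This forces $\beta=\alpha^{-1}$ and $\tau=\omega^{-1}$, so $\alpha$ must be an order-homeomorphism (its inverse being continuous and order-preserving, as $(\beta,\tau)$ is a $Q$-morphism) and $\omega$ a semigroup isomorphism; moreover both $(\alpha,\omega)$ and $(\alpha^{-1},\omega^{-1})$ must satisfy the compatibility equation (\ref{def Q-homos}). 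Thus the whole statement reduces to showing that, once $\alpha$ is an order-homeomorphism and $\omega$ a semigroup isomorphism, the compatibility equation for $(\alpha,\omega)$ is equivalent to the stated pointwise condition, which I write as $(p,q)\in\omega\Gamma\iff(\alpha(p),\alpha(q))\in\Gamma$ for all $\Gamma\in\mathcal{G}$ and $p,q\in Y$.

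The engine of this equivalence is the separating property (Definition \ref{separating equivalences}). Since any $U\in L^{\Theta}(Y)$ is $\sigma_{\Theta}$-saturated, it is a union of whole $\Theta$-classes, so $(p,q)\in\Theta$ gives $p\in U\iff q\in U$; conversely, if $(p,q)\notin\Theta$ then separating-ness produces some $U\in L^{\Theta}(Y)$ containing exactly one of $p,q$. Therefore, for any separating $\Theta$,
\[
(p,q)\in\Theta \iff \bigl(\forall U\in L^{\Theta}(Y)\colon p\in U\Leftrightarrow q\in U\bigr),
\]
and likewise for $\Gamma$ on $Z$. I will apply this characterisation on both spaces.

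Now assume $\alpha$ is an order-homeomorphism and $\omega$ a semigroup isomorphism. By Priestley duality $\mathbf{L}\alpha=\alpha^{-1}\colon LZ\to LY$ is a lattice isomorphism, and the compatibility equation (\ref{def Q-homos}) reads $\alpha^{-1}(\sigma_{\Gamma}(V))=\sigma_{\omega\Gamma}(\alpha^{-1}(V))$. The key observation is that this equation is equivalent to $\alpha^{-1}$ restricting to a bijection $L^{\Gamma}(Z)\to L^{\omega\Gamma}(Y)$: if $V\in L^{\Gamma}(Z)$ then $\alpha^{-1}(V)=\alpha^{-1}(\sigma_{\Gamma}(V))=\sigma_{\omega\Gamma}(\alpha^{-1}(V))$ is $\omega\Gamma$-saturated, and since $\alpha^{-1}$ is a lattice bijection, writing any $U\in L^{\omega\Gamma}(Y)$ as $\alpha^{-1}(V)$ and applying the equation forces $V=\sigma_{\Gamma}(V)$. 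Feeding this bijection into the displayed separating characterisation yields, for all $p,q\in Y$,
\[
(p,q)\in\omega\Gamma \iff \forall V\in L^{\Gamma}(Z)\colon\bigl(\alpha(p)\in V\Leftrightarrow\alpha(q)\in V\bigr) \iff (\alpha(p),\alpha(q))\in\Gamma,
\]
which is the pointwise condition. For the converse implication I verify directly that the pointwise condition forces the set equality $\alpha^{-1}(\sigma_{\Gamma}(V))=\sigma_{\omega\Gamma}(\alpha^{-1}(V))$, using surjectivity of $\alpha$ to realise any witness $z\in V$ with $(\alpha(p),z)\in\Gamma$ as $z=\alpha(q)$.

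Finally I assemble both directions, exploiting that the pointwise condition is symmetric: substituting $\Gamma=\omega^{-1}\Theta$ and $p=\alpha^{-1}(z)$, $q=\alpha^{-1}(w)$ turns it into $(z,w)\in\omega^{-1}\Theta\iff(\alpha^{-1}(z),\alpha^{-1}(w))\in\Theta$, the same condition for $(\alpha^{-1},\omega^{-1})$. Hence, if $(\alpha,\omega)$ is a $Q$-isomorphism, the three conditions of the statement follow (order-homeomorphism and semigroup isomorphism from the categorical analysis, the pointwise condition from compatibility of $(\alpha,\omega)$). Conversely, given the three conditions, compatibility holds for $(\alpha,\omega)$ and—by symmetry—for $(\alpha^{-1},\omega^{-1})$; since $\alpha^{-1}$ is continuous and order-preserving and $\omega^{-1}$ is a $\star$-homomorphism, $(\alpha^{-1},\omega^{-1})$ is a $Q$-morphism inverse to $(\alpha,\omega)$, so the latter is a $Q$-isomorphism. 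I expect the one genuinely delicate point to be the passage between the set-level compatibility equation and the pointwise relation: this is precisely where the separating hypothesis and the Priestley bijection $\mathbf{L}\alpha$ are indispensable, and it is the generalisation of Cignoli's single-quantifier argument to an entire $\star$-semigroup of equivalences.
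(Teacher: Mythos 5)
Your proof is correct, and it is worth noting that the paper itself gives \emph{no} proof of this lemma: it is simply attributed to Cignoli as ``an appropriate extension of Corollary 2.9'' of his 1991 paper, so your argument fills in a step the authors delegate to a citation. You also silently repair the garbled quantification in the statement ($\Theta\in\mathcal{T}$ versus $\omega\Gamma$, which does not typecheck since $\omega:\mathcal{G}\to\mathcal{T}$), reading it as $(p,q)\in\omega\Gamma\iff(\alpha(p),\alpha(q))\in\Gamma$ for all $\Gamma\in\mathcal{G}$; that is the only sensible reading. The substance of your argument --- using the separating property to get $(p,q)\in\Theta\iff\bigl(\forall U\in L^{\Theta}(Y):p\in U\Leftrightarrow q\in U\bigr)$, and then transporting saturated clopen up-sets along the lattice bijection $\mathbf{L}\alpha=\alpha^{-1}$ --- is exactly the mechanism the paper itself deploys elsewhere (Lemma \ref{extraction and equivalence} and the verification that $(\lambda_Y,\lambda_{\mathcal{T}})$ is a $Q$-isomorphism), so your proof is consonant with the paper's machinery rather than an alien route. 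Two small points: your claim that the compatibility equation is \emph{equivalent} to $\alpha^{-1}$ restricting to a bijection $L^{\Gamma}(Z)\to L^{\omega\Gamma}(Y)$ is asserted in both directions but only the forward direction is used (and only that direction is actually proved); and your categorical reading of ``$Q$-isomorphism'' as ``invertible $Q$-morphism'' is not explicitly defined in the paper, though it is the only reasonable interpretation and is consistent with how the paper uses the lemma. Neither point affects correctness.
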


For any $Q$-Priestley space $(\underline{Y},\underline{\mathcal{T}})$, we have - by up-side down Priestley duality -
a natural order homeomorphism $\lambda_Y: \underline{Y}\longrightarrow \mathbf{X}\mathbf{L}\underline{Y}$ given by $\lambda_Y(p) = L_p$ for all $p\in Y$ (see section \ref{``Up-side down'' Priestley duality}). For the extraction part, define a map $\lambda_{\mathcal{T}}: \underline{\mathcal{T}} \longrightarrow \mathbf{X}\mathbf{L}\underline{\mathcal{T}}$ by $\lambda_{\mathcal{T}}(\Theta) :=\, \simeq_{\Theta}$, which is a semigroup isomorphism (cf. the proof of Thm. \ref{ E is EX is SatEX}). Consider $p,q\in Y$. We have $\lambda_Y(p)\simeq\lambda_Y(q)$ iff $L_p\simeq_{\Theta}L_q$ iff $\sigma_{\Theta}^{-1}(L_p) = \sigma_{\Theta}^{-1}(L_q)$.  Now
$\sigma_{\Theta}^{-1}(L_p) = \{U\in LY: \sigma_{\Theta}(U)\in L_p\} = \{U\in LY: p\in \sigma_{\Theta}(U)\}$. So $L_p\simeq_{\Theta}L_q$ iff $\{U\in LY: p\in \sigma_{\Theta}(U)\}=\{U\in LY: q\in \sigma_{\Theta}(U)\}$. This is equivalent with $(p,q)\in\Theta$ since $\Theta$ is separating (cf. Lemma  \ref{extraction and equivalence}). By Cignoli's Lemma above it follows that $(\lambda_Y,\lambda_{\mathcal{T}})$ is a $Q$-isomorphism - and with that, also $(\lambda_Y^{-1},\lambda_{\mathcal{T}}^{-1})$. The same goes for $(\lambda_Z,\lambda_{\mathcal{G}})$ and $(\lambda_Z^{-1},\lambda_{\mathcal{G}}^{-1})$, of course. The diagram below is commutative by construction, so - using these isomorphisms - we see that $(\alpha,\omega)$ satisfies (\ref{def Q-homos}) iff $(\mathbf{X}\mathbf{L}\alpha,\mathbf{X}\mathbf{L}\omega)$ so does, establishing the space half of the sought duality.

\bigskip

\[
  \begin{CD}
      (\underline{Y},\underline{\mathcal{T}})      @>\alpha>\omega>    (\underline{Z},\underline{\mathcal{G}})\\
      @V\lambda_{Y}V\lambda_{\mathcal{T}}V                                                                       @V\lambda_{Z}V\lambda_{\mathcal{G}}V\\
      (\mathbf{X}\mathbf{L}\underline{Y},\mathbf{X}\mathbf{L}\underline{\mathcal{T}})     @>\mathbf{X}\mathbf{L}\alpha>\mathbf{X}\mathbf{L}\omega>                (\mathbf{X}\mathbf{L}\underline{Z},\mathbf{X}\mathbf{L}\underline{\mathcal{G}})
   \end{CD}
\]

\bigskip

So the two commutative diagrams together establish

\begin{theorem}
The functors $\mathbf{X}$ and $\mathbf{L}$ induce a full duality between the categories $\mathbb{D}$ and $\mathbb{Q}$.
\end{theorem}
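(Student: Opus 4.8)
The plan is to assemble the two commutative diagrams into a single statement about functors and natural isomorphisms, since essentially all of the substantive work has already been carried out in the preceding subsections; what remains is to verify well-definedness and functoriality and then to read off naturality from the diagrams. Concretely, I would show that $\mathbf{X}$ and $\mathbf{L}$ are contravariant functors $\mathbb{D}\to\mathbb{Q}$ resp. $\mathbb{Q}\to\mathbb{D}$, and that the families $(\kappa_\Phi,\kappa_E)$ and $(\lambda_Y,\lambda_{\mathcal{T}})$ constitute natural isomorphisms $\mathrm{Id}_{\mathbb{D}}\Rightarrow \mathbf{L}\mathbf{X}$ and $\mathrm{Id}_{\mathbb{Q}}\Rightarrow \mathbf{X}\mathbf{L}$; together these data are exactly a dual equivalence, i.e. a full duality.

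First I would check that the two object assignments land in the correct categories. For an algebra $(\underline{\Phi};\underline{E})\in\mathbb{D}$, the pair $\mathbf{X}(\underline{\Phi};\underline{E})=(\mathbf{X}\underline{\Phi},\mathbf{X}\underline{E})$ is a $Q$-Priestley space in the sense of Definition \ref{Q-Priestley Spaces}: $\mathbf{X}\underline{\Phi}$ is the Priestley dual of the lattice part, and $\mathbf{X}\underline{E}=(XE,\star)$ is a $\star$-semigroup of separating equivalences, the $\cong_\eps$ being separating kernels (as noted after Definition \ref{separating equivalences}) and $XE$ being $\star$-closed by Theorem \ref{ E is EX is SatEX}. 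Dually, for $(\underline{Y},\underline{\mathcal{T}})\in\mathbb{Q}$, the pair $\mathbf{L}(\underline{Y},\underline{\mathcal{T}})=(\mathbf{L}\underline{Y},\mathbf{L}\underline{\mathcal{T}})$ is a distributive information algebra: $\mathbf{L}\underline{Y}$ is a distributive lattice, and the operators $\sigma_\Theta\vert_{LY}$ are meet-preserving extraction operators (item 6 of Lemma \ref{saturation operators} supplies distributivity condition (iii) of Definition \ref{def distr CDF}) forming a commutative idempotent semigroup isomorphic to $\underline{\mathcal{T}}$ by Proposition \ref{semigroup of saturation operators}.

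Next I would treat morphisms. On $\mathbb{D}$, a CDF-homomorphism $(f,g)$ is sent to $(\mathbf{X}f,\mathbf{X}g)$, which is a $Q$-morphism: $\mathbf{X}f$ is continuous and order-preserving by Priestley duality, $\mathbf{X}g$ is a $\star$-homomorphism by Lemma \ref{Xg is semigroup homo}, and the compatibility condition (\ref{def Q-homos}) holds by the computation (\ref{alghomo uno})--(\ref{alghomo dos}) together with Proposition \ref{f,g compatible iff LXf,LXg compatible}. Dually $(\alpha,\omega)\mapsto(\mathbf{L}\alpha,\mathbf{L}\omega)$ produces a CDF-homomorphism. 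Functoriality (identities to identities, contravariance on composites) is inherited on the lattice resp. space parts from ordinary Priestley duality, while on the extraction parts the assignments $g\mapsto\mathbf{X}g$ and $\omega\mapsto\mathbf{L}\omega$ are defined on generators by $\cong_\eps\mapsto\cong_{g(\eps)}$ resp. $\sigma_\Gamma\mapsto\sigma_{\omega\Gamma}$ and hence respect composition. Finally, the families $\kappa=(\kappa_\Phi,\kappa_E)$ and $\lambda=(\lambda_Y,\lambda_{\mathcal{T}})$ are CDF- resp. $Q$-isomorphisms (established just before each diagram), and the commutativity of the first diagram says precisely that $\kappa$ is natural, while that of the second says $\lambda$ is natural. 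Thus $(\mathbf{X},\mathbf{L},\kappa,\lambda)$ is a dual equivalence between $\mathbb{D}$ and $\mathbb{Q}$.

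The only step requiring genuine care, rather than bookkeeping, is verifying functoriality simultaneously on the lattice and the extraction components and confirming that $(\mathbf{X}f,\mathbf{X}g)$ really satisfies the $Q$-morphism compatibility (\ref{def Q-homos}) and not merely a lattice-level condition. This is where Proposition \ref{f,g compatible iff LXf,LXg compatible}, which reduces (\ref{def Q-homos}) to the algebra condition Def. 2.6.(4), and the concrete characterization of $Q$-isomorphisms in Cignoli's Lemma \ref{Q-isos} do the decisive work, so the obstacle is dissolved by results already in hand and the argument closes.
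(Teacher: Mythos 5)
Your proposal is correct and follows essentially the same route as the paper: it assembles the material of the preceding subsections (the isomorphisms $(\kappa_\Phi,\kappa_E)$ and $(\lambda_Y,\lambda_{\mathcal{T}})$, the compatibility reductions, and the two commutative diagrams) into the statement of a dual equivalence. The only difference is presentational --- you make the functoriality and naturality bookkeeping explicit where the paper simply declares that the two diagrams ``together establish'' the duality --- but no new mathematical content is needed or supplied.
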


So distributive  information algebras and Q-Priestley spaces are two sides of the same coin.

\subsection{Boolean Information Algebras}   \label{Boolean CDFs}

Consider a bounded distributive lattice $\underline{\Phi} = (\Phi;\cdot,\wedge,1,0)$. For $\phi\in\Phi$, an element $\psi\in\Phi$ is a {\em complement} of $\phi$ iff $\phi\cdot\psi = 0$ and $\phi\wedge \psi = 1$. Using distributivity, it is not hard to see that complements, whenever they exist in $\Phi$, are uniquely determined. If every $\phi\in\Phi$ has a (unique) complement, the $\Phi$ is called {\em complemented}. A complemented distributive lattice is commonly referred to as a {\em Boolean lattice}.

 This must be distinguished from a {\em Boolean algebra} which is Boolean lattice where the operation $\phi\longmapsto \phi^c$ with $\phi^c$ the complement of $\phi$ is a fundamental operation. Boolean algebras thus are structures of type $(\cdot,\wedge,^c,1,0)$. Using distributivity, it is easy to check that a 1-0-preserving lattice homomorphism between Boolean lattices automatically also preserves complements. There is more:

\begin{lemma}   \label{Boolean lattices yield distributive CDFs}
Let $\underline{A} = (\underline{\Phi};\underline{E})$ be  information algebra with $\underline{\Phi}$ a Boolean lattice. Then $\underline{A}$ is distributive  information algebra in the sense of Def. \ref{def distr CDF}.
\end{lemma}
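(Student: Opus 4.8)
The plan is to dispatch conditions (i) and (ii) of Definition \ref{def distr CDF} at once and then pour all the effort into condition (iii). Since a Boolean lattice is by definition a bounded \emph{distributive} lattice, the meet $\phi\wedge\psi$ already exists for all $\phi,\psi\in\Phi$ and $(\Phi;\cdot,\wedge,1,0)$ is distributive; this is precisely (i) and (ii). So the entire content of the lemma is the meet-preservation law (iii): $\eps(\phi\wedge\psi)=\eps\phi\wedge\eps\psi$ for every $\eps\in E$ and all $\phi,\psi\in\Phi$.

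First I would reduce (iii) to a single structural fact about the image $\eps\Phi$. The inequality $\eps(\phi\wedge\psi)\le\eps\phi\wedge\eps\psi$ is immediate from Lemma \ref{extraction preserves order} (monotonicity), since $\phi\wedge\psi\le\phi,\psi$. For the reverse inequality, observe that $\eps\phi\wedge\eps\psi\le\phi\wedge\psi$ by (A), so monotonicity gives $\eps(\eps\phi\wedge\eps\psi)\le\eps(\phi\wedge\psi)$; hence it suffices to know that $\eps\phi\wedge\eps\psi$ is a fixed point of $\eps$. Thus the lemma reduces to showing that the set $\eps\Phi$ of fixed points of $\eps$ is closed under the meet $\wedge$.

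Closure of $\eps\Phi$ under the join $\vee$ is routine: applying (Q) with $\psi$ replaced by $\eps\chi$ and invoking idempotence (I) yields $\eps(\eps\phi\vee\eps\chi)=\eps\phi\vee\eps\chi$, so joins of fixed points are fixed. The main obstacle, and the only place where the Boolean hypothesis is genuinely used, is to show that $\eps\Phi$ is also closed under complementation. Here is the key step I would carry out: let $a=\eps a$ be a fixed point and consider $\eps(a^c)$. On one hand, (A) gives $\eps(a^c)\le a^c$, whence $a\wedge\eps(a^c)\le a\wedge a^c=1$, so $a\wedge\eps(a^c)=1$ (as $1$ is least). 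On the other hand, since $a$ is fixed, (Q) gives $a\vee\eps(a^c)=\eps(a\vee a^c)=\eps(0)=0$ by (N), using $a\vee a^c=a\cdot a^c=0$. Therefore $\eps(a^c)$ satisfies both defining equations of a complement of $a$, and by the uniqueness of complements in a distributive lattice we conclude $\eps(a^c)=a^c$. Hence $\eps\Phi$ is closed under complements.

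Finally I would combine the two closure properties via De Morgan: for fixed points $a=\eps\phi$ and $b=\eps\psi$ we have $a^c,b^c\in\eps\Phi$, then $a^c\vee b^c\in\eps\Phi$ by join-closure, and then $a\wedge b=(a^c\vee b^c)^c\in\eps\Phi$ by complement-closure. Thus $\eps\phi\wedge\eps\psi$ is a fixed point, which by the reduction above yields $\eps(\phi\wedge\psi)=\eps\phi\wedge\eps\psi$, completing the verification of (iii) and hence of the lemma. I expect the complementation step to be the crux; everything else is bookkeeping with the quantifier axioms (N), (A), (Q), (I) together with the monotonicity already established in Lemma \ref{extraction preserves order}.
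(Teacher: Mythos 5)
Your proof is correct, and it takes a genuinely different route from the paper's. The paper argues directly that $\eps(\phi\wedge\psi)$ is the greatest lower bound of $\eps\phi$ and $\eps\psi$: it takes an arbitrary lower bound $\chi$, encodes the order via the Boolean residuation fact $\psi\leq\phi \Leftrightarrow \phi\cdot\psi^{c}=0$, and then pushes $\chi^{c}$ through a chain of (N)/(Q) computations (together with one application of distributivity of $\underline{\Phi}$) to conclude $\chi\leq\eps(\phi\wedge\psi)$. You instead isolate a structural statement -- the fixed-point set $\eps\Phi$ is closed under complementation, hence (with join-closure and De Morgan) under meet -- and observe that meet-closure of $\eps\Phi$ is all that condition (iii) requires, given monotonicity and (A). Each step of yours checks out: the reduction to ``$\eps\phi\wedge\eps\psi$ is a fixed point'' is sound, the computation $a\vee\eps(a^{c})=\eps(\eps a\vee a^{c})=\eps(0)=0$ is a correct use of (Q) and (N), and uniqueness of complements (established in the paper just before the lemma) closes the argument. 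What your approach buys is a sharper byproduct: $\eps\Phi$ is not merely a sublattice of $\underline{\Phi}$ (the fact the paper records after Definition \ref{def distr CDF}) but a Boolean sublattice, closed under complements. What the paper's approach buys is that it works with an arbitrary lower bound from the start and so never needs De Morgan or the detour through $\eps\Phi$; the two uses of the Boolean hypothesis (residuation there, uniqueness of complements here) are of essentially the same strength.
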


\begin{proof}
It suffices to show that item (iii) of Def. \ref{def distr CDF} is satisfied. Recall that for $\phi,\psi\in\Phi$ we have $\psi \leq \phi$ iff $\phi \cdot \psi^{c} = 0$ (*) in any Boolean lattice. Consider any $\phi,\psi\in\Phi$ and put $\eta = \phi \wedge \psi$.  Then $\eta \leq \phi,\psi$ implies $\epsilon(\eta) \leq \epsilon(\phi)$ and $\epsilon(\eta) \leq \epsilon(\psi)$. Hence $\epsilon(\eta)$ is a lower bound of $\epsilon(\phi)$ and $\epsilon(\psi)$.

Let $\chi$ be another lower bound of $\epsilon(\phi)$ and $\epsilon(\psi)$. Then by (*) above, $\epsilon(\phi) \cdot \chi^{c} = 0$ and $\epsilon(\psi) \cdot \chi^{c} = 0$. It follows that
\begin{eqnarray}
0 = \epsilon(0) = \epsilon(\epsilon(\phi) \cdot \chi^{c}) = \epsilon(\phi) \cdot \epsilon(\chi^{c})
= \epsilon(\phi \cdot \epsilon(\chi^{c})).
\nonumber
\end{eqnarray}
This implies $\phi \cdot \epsilon(\chi^{c}) = 0$. In the same way we obtain $\psi \cdot \epsilon(\chi^{c}) = 0$. Using distributivity and remembering that combination is join, we get
\begin{eqnarray}
0 &=& (\phi \cdot \epsilon(\chi^{c})) \wedge (\psi \cdot \epsilon(\chi^{c}))
= (\phi \wedge \psi) \cdot \epsilon(\chi^{c}) = \eta \cdot \epsilon(\chi^{c}).
\nonumber
\end{eqnarray}
It follows that
\begin{eqnarray}
0 = \epsilon(0) = \epsilon(\eta \cdot \epsilon(\chi^{c})) = \epsilon(\eta) \cdot \epsilon(\chi^{c})
= \epsilon(\epsilon(\eta) \cdot \chi^{c}),
\nonumber
\end{eqnarray}
hence $\epsilon(\eta) \cdot \chi^{c} = 0$. But this implies $\chi \leq \epsilon(\eta)$ by(*) and $\epsilon(\eta)$ is thus the greatest lower bound of $\epsilon(\phi)$ and $\epsilon(\psi)$, that is, $\epsilon(\phi \wedge \psi) = \epsilon(\phi) \wedge \epsilon(\psi)$ as claimed.
\end{proof}

Accordingly, we define a {\em Boolean}  information to be an information algebra $\underline{A} = (\underline{\Phi};\underline{E})$ where $\underline{\Phi}$ is a Boolean lattice.

\begin{lemma}     \label{prime=max}
In a Boolean lattice $\underline{\Phi}$, prime ideals are maximal.
\end{lemma}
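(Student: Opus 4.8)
The plan is to prove this via the classical characterization that a prime ideal of a Boolean lattice contains exactly one member of each complementary pair. Throughout I would work in the information order of $\underline{\Phi}$, where $1$ is the least and $0$ the greatest element, $\cdot$ is join and $\wedge$ is meet, and I would use the uniqueness of complements noted just before the lemma, together with the relations $\phi\cdot\phi^c = 0$ and $\phi\wedge\phi^c = 1$.

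First I would record two auxiliary observations about an arbitrary ideal $I$ (Definition \ref{def:PrimeIdeal}) of $\underline{\Phi}$. Since $I$ is a nonempty down-set and $1\leq\phi$ for every $\phi\in\Phi$, the least element $1$ lies in every ideal. Dually, since $0$ is the greatest element, $0\in I$ forces $\phi\leq 0$ for all $\phi$ to be in $I$, so $I=\Phi$; hence $I$ is proper exactly when $0\notin I$. Now let $I$ be prime, hence proper. For any $\phi\in\Phi$ we have $\phi\wedge\phi^c = 1\in I$, so primeness gives $\phi\in I$ or $\phi^c\in I$. They cannot both lie in $I$, for ideals are closed under join, and $\phi,\phi^c\in I$ would give $\phi\cdot\phi^c = 0\in I$, contradicting properness. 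Thus exactly one of $\phi,\phi^c$ belongs to $I$.

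Finally, to establish maximality in the sense of Definition \ref{def:MaxIdeal}, I would suppose $J$ is an ideal with $I\subsetneq J$ and show $J=\Phi$. Choosing $\psi\in J\setminus I$, the previous step yields $\psi^c\in I\subseteq J$ (since $\psi\notin I$), so both $\psi$ and $\psi^c$ lie in $J$; closure under join then gives $\psi\cdot\psi^c = 0\in J$, whence $J=\Phi$. Therefore no proper ideal strictly contains $I$, i.e. $I$ is maximal.

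The only real subtlety — and the step most likely to trip one up — is the bookkeeping of the order reversal: because $\underline{\Phi}$ carries the information order, the roles of top and bottom (and of the defining equations $\phi\cdot\phi^c=0$, $\phi\wedge\phi^c=1$) are swapped relative to the ``standard order'' phrasing of Definition \ref{def:PrimeIdeal}. Once this is kept straight, the argument is the classical ``prime ideal is the complement of an ultrafilter'' computation and presents no further obstacle.
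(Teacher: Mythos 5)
Your proof is correct and follows essentially the same route as the paper's: a prime ideal containing $1=\phi\wedge\phi^c$ must contain $\phi^c$ whenever it omits $\phi$, so any strictly larger ideal contains a complementary pair and hence $0=\psi\cdot\psi^c$, forcing it to be all of $\Phi$. Your bookkeeping of the information-order reversal is sound (indeed the paper's own proof misprints $\phi\cdot\phi^c$ as $1$ rather than $0$), and your extra observation that a prime ideal contains exactly one member of each complementary pair, while slightly more than needed, is harmless.
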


\begin{proof}
Let $I\subseteq\Phi$ be a prime ideal, and $\phi\notin I$. Now $\phi\wedge\phi^c = 1\in I$, so $\phi^c\in I$ by primeness of $I$. Let $I'$ be the ideal generated by $I\cup\{\phi\}$ in $\Phi$. Then $\phi,\phi^c\in I$ and thus $1 = \phi\cdot\phi^c\in I'$,which implies $I'=\Phi$. So $I$ is maximal as claimed.
\end{proof}

\begin{corollary}  \label{Priestley = Stone}
The Priestley space $\mathbf{X}\underline{\Phi}$ of a Boolean  information algebra $\underline{A} = (\underline{\Phi};\underline{E})$ carries the trivial order.
\end{corollary}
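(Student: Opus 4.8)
The plan is to read the claim off directly from the definition of the Priestley order together with Lemma \ref{prime=max}. Recall that the order on the Priestley space $\mathbf{X}\underline{\Phi} = (X\Phi,\mathfrak{T}_{\Phi},\subseteq)$ is ordinary set inclusion between prime ideals, and that ``trivial order'' means precisely that this inclusion, restricted to $X\Phi$, coincides with equality; equivalently, no two distinct prime ideals of $\underline{\Phi}$ are comparable. So the whole task reduces to showing that $I\subseteq I'$ forces $I = I'$ for prime ideals $I,I'\in X\Phi$.

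To carry this out I would fix two prime ideals $I,I'\in X\Phi$ with $I\subseteq I'$ and aim to conclude $I = I'$. By Lemma \ref{prime=max}, $I$ is maximal. Since $I'$ is an ideal with $I\subseteq I'$, Definition \ref{def:MaxIdeal} forces either $I' = I$ or $I' = \Phi$. But $I'$ is a prime ideal, hence proper, i.e. $I'\neq\Phi$, by Definition \ref{def:PrimeIdeal}(i). Therefore $I' = I$, which is exactly the assertion that inclusion on $X\Phi$ is the identity relation, i.e. the order on $\mathbf{X}\underline{\Phi}$ is trivial.

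I expect essentially no obstacle here: the entire content is already packaged in Lemma \ref{prime=max}, and the corollary is a one-line consequence of it. The only two points that warrant a moment's care are, first, to invoke properness of prime ideals so as to rule out the alternative $I' = \Phi$ permitted by maximality, and second, to note that antisymmetry is not separately required: once inclusion in one direction is shown to force equality, comparability of any two prime ideals collapses to equality automatically. This also explains the heading ``Priestley $=$ Stone'': a Priestley space carrying the trivial order is just a Stone space, recovering the fact that Boolean lattices dualise to Stone spaces.
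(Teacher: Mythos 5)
Your proof is correct and follows exactly the route the paper intends: the corollary is stated as an immediate consequence of Lemma \ref{prime=max}, and your argument (maximality of $I$ plus properness of the prime ideal $I'$ forces $I'=I$, so inclusion on $X\Phi$ collapses to equality) is precisely the implicit justification. No issues.
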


Let $\underline{A} = (\underline{\Phi};\underline{E})$ be any Boolean  information algebra. Then $\mathbf{X}\underline{\Phi}$ is just a compact Hausdorff space such that for $I,I'\in X\Phi$ satisfying $I\neq I'$ there exists a clopen subset $U\subseteq X\Phi$ with $I\in U$ but $I'\notin U$. This latter property is called {\em total disconnectedness}, and compact Hausdorff totally disconnected spaces are better known as {\em Stone} spaces. Since there is no order to be preserved, the appropriate morphisms between Stone spaces are just continuous maps. Turning to extraction, an equivalence $\Theta$ on a Stone space $\underline{Y}$ will be called {\em separating} iff $\sigma_{\Theta}$ maps clopen subsets of $Y$ to clopen subsets, and for any $p,q\in Y$ with $(p,q)\not\in\Theta$ there exists a clopen subset $U\subseteq Y$ containing exactly one of $p$ and $q$ (cf. Def. \ref{separating equivalences}). Mimicking Def. \ref{Q-Priestley Spaces}, we say that {\em $Q$-Stone space} is a pair $(\underline{Y},\underline{\mathcal{T}})$ consisting of a Stone space $\underline{Y}$ and a $\star$-semigroup $\underline{\mathcal{T}}$ in $Eq(Y)$ consisting of separating equivalence relations. Finally, let $\mathbb{B}$ the category of Boolean  information algebras with CDF homomorphisms, and $\mathbb{QS}$ that of $Q$-Stone spaces with $Q$-morphisms. It immediately follows that

\begin{theorem}
The functors $\mathbf{X}$ and $\mathbf{L}$ induce a full duality between the categories $\mathbb{B}$ and $\mathbb{QS}$.
\end{theorem}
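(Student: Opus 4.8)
The plan is to obtain this duality as a mere restriction of the full duality between $\mathbb{D}$ and $\mathbb{Q}$ established above, exhibiting $\mathbb{B}$ and $\mathbb{QS}$ as full subcategories that are matched under the functors $\mathbf{X}$ and $\mathbf{L}$. Once the two subcategories are shown to correspond, all the structure — the natural isomorphisms and the commutative diagrams — specializes without further work.

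First I would verify that $\mathbb{B}$ is a full subcategory of $\mathbb{D}$. By Lemma \ref{Boolean lattices yield distributive CDFs}, every Boolean information algebra is distributive, so the objects of $\mathbb{B}$ are among those of $\mathbb{D}$. The morphisms of both categories are CDF homomorphisms $(f,g)$ with $f$ meet-preserving; since a $1$-$0$-preserving lattice homomorphism between Boolean lattices automatically preserves complements (as noted just before Lemma \ref{Boolean lattices yield distributive CDFs}), no additional morphism constraint arises in the Boolean setting, and $\mathbb{B}$ inherits exactly the $\mathbb{D}$-morphisms between its objects.

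Next I would check that $\mathbb{QS}$ is a full subcategory of $\mathbb{Q}$ and that the functors restrict correctly on objects. The functor $\mathbf{X}$ sends a Boolean information algebra to a $Q$-Stone space: by Corollary \ref{Priestley = Stone} the Priestley space $\mathbf{X}\underline{\Phi}$ carries the trivial order and is thus a Stone space, while the accompanying $\star$-semigroup of separating equivalences is unchanged. Conversely, $\mathbf{L}$ sends a $Q$-Stone space $(\underline{Y},\underline{\mathcal{T}})$ to a Boolean information algebra: since $\underline{Y}$ carries the trivial order, every clopen set is an up-set, so $LY$ is the collection of all clopen subsets of $Y$; in $\underline{P}(Y)^d$ the set-theoretic complement $Y\setminus U$ of a clopen $U$ is again clopen and satisfies $U\cdot(Y\setminus U)=\emptyset=0$ and $U\wedge(Y\setminus U)=Y=1$, so $\mathbf{L}\underline{Y}$ is complemented and hence Boolean. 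On morphisms, a $Q$-morphism between $Q$-Stone spaces is a pair $(\alpha,\omega)$ with $\alpha$ continuous, $\omega$ a $\star$-homomorphism, and the compatibility condition of Def. \ref{def Q-morphism}; here the requirement that $\alpha$ be order-preserving is vacuous under the trivial order, so these are precisely the $\mathbb{Q}$-morphisms between the spaces concerned, confirming fullness.

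Finally I would conclude by restriction. The natural isomorphisms $(\kappa_\Phi,\kappa_E)$ and $(\lambda_Y,\lambda_{\mathcal{T}})$ witnessing the $\mathbb{D}$-$\mathbb{Q}$ duality, together with the two commutative diagrams, specialize verbatim to $\mathbb{B}$ and $\mathbb{QS}$. Since $\mathbf{X}$ carries $\mathbb{B}$ into $\mathbb{QS}$ and $\mathbf{L}$ carries $\mathbb{QS}$ into $\mathbb{B}$, with the round-trips $\mathbf{L}\mathbf{X}$ and $\mathbf{X}\mathbf{L}$ naturally isomorphic to the respective identities by inheritance, the restricted functors induce a full duality between $\mathbb{B}$ and $\mathbb{QS}$. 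The only point requiring genuine verification is precisely the matching of the two subcategories under the functors — the object-level computations showing $\mathbf{X}(\mathbb{B})\subseteq\mathbb{QS}$ and $\mathbf{L}(\mathbb{QS})\subseteq\mathbb{B}$; everything else is inherited from the established duality, so there is no substantial obstacle.
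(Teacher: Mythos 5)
Your proposal is correct and takes essentially the same route as the paper: the paper states the theorem "immediately follows" after establishing Corollary \ref{Priestley = Stone} and defining $Q$-Stone spaces, which is precisely the restriction-of-the-$\mathbb{D}$--$\mathbb{Q}$-duality argument you spell out. Your explicit verification that $\mathbf{L}$ of a $Q$-Stone space is Boolean (clopen sets are all up-sets under the trivial order and are complemented in $\underline{P}(Y)^d$) usefully fills in a detail the paper leaves tacit.
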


Remember that 1-0-preserving lattice homomorphisms between Boolean lattices also preserve complements. So we could substitute "Boolean lattice" by "Boolean algebra" in the preceding discussion since introducing complementation as an additional fundamental operation does not interfere with extraction.

\subsection{Finite Distributive Information Algebras}    \label{Finite Distr CDFs}

In the preceding subsection, order was trivial on the Priestley space $\mathbf{X}\underline{\Phi}$ associated with a {\em Boolean}  information algebra. A similar situation arises if we consider a distributive  information algebra $\underline{A} = (\underline{\Phi};\underline{E})$ where $\underline{\Phi}$ is {\em finite}: Here the topology of the Priestley space $\mathbf{X}\underline{\Phi}$ is trivial - more precisely: discrete - , being Hausdorff. In plainer terms, $\mathbf{X}\underline{\Phi}$ is just a finite (partially) ordered set $(H,\leq)$. Turning to extraction, $\underline{E}$ is obviously finite and so $\mathbf{X}{\underline{E}}$ is a finite set of equivalence relations on $H$, closed under $\star$ - hence pairwise commuting by Lemma \ref{star-closed} - and subject to the two conditions of Def. \ref{separating equivalences} characterizing separating equivalences.

The point here is that $H$ may be identified with a subset of $\Phi$, which decreases the set-theoretical complexity of the members of $\mathbf{X}\underline{\Phi}$. Indeed, $\Phi$ being finite, the ideals in $\Phi$ are precisely the {\em principal down-sets} $I_\phi = \downarrow\!\phi =\{\psi: \psi\leq \phi\}$ for $\phi\in\Phi$. Call an element $\phi\in\Phi$ {\em meet-irreducible} iff $\phi = \psi_1\wedge\psi_2$ for some $\psi_1,\psi_2\in\Phi$ implies that $\phi=\psi_1$ or $\phi=\psi_2$ (equivalently, iff $\phi$ has exactly one upper neighbor in the order of $\Phi$).

\begin{lemma}   \label{prime=meet-irreducible}
$I_\phi$ is prime iff $\phi$ is meet-irreducible.
\end{lemma}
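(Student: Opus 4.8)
The plan is to prove the two implications separately, observing at the outset that the forward direction ($I_\phi$ prime $\Rightarrow$ $\phi$ meet-irreducible) uses nothing beyond the lattice axioms, whereas the reverse direction is exactly where distributivity of $\underline{\Phi}$ must enter.

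First I would dispatch the easy direction. Assume $I_\phi = \downarrow\!\phi$ is prime. By definition of a prime ideal, $I_\phi \neq \Phi$, so $\phi$ is not the top element $0$. Now suppose $\phi = \psi_1 \wedge \psi_2$. Then $\psi_1 \wedge \psi_2 = \phi \leq \phi$, so $\psi_1 \wedge \psi_2 \in I_\phi$, and primeness yields $\psi_1 \in I_\phi$ or $\psi_2 \in I_\phi$; say $\psi_1 \leq \phi$. Since also $\phi = \psi_1 \wedge \psi_2 \leq \psi_1$, we conclude $\phi = \psi_1$, so $\phi$ is meet-irreducible (and $\phi \neq 0$, so it has a unique upper neighbor as required).

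For the converse I would argue by contraposition, invoking distributivity. Assume $\phi$ is meet-irreducible; having exactly one upper neighbor it is in particular distinct from the top $0$, so $I_\phi \neq \Phi$ is a proper ideal. To verify primeness, suppose $x \wedge y \leq \phi$ while $x \not\leq \phi$ and $y \not\leq \phi$. Then $\phi \cdot x$ and $\phi \cdot y$ each lie strictly above $\phi$ in the information order (if, say, $\phi \cdot x = \phi$ then $x \leq \phi$), and by distributivity
\[
(\phi \cdot x) \wedge (\phi \cdot y) = \phi \cdot (x \wedge y) = \phi,
\]
the last equality because $x \wedge y \leq \phi$ gives $\phi \cdot (x \wedge y) = \phi$. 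This writes $\phi$ as a meet of two elements each different from $\phi$, contradicting meet-irreducibility. Hence $x \leq \phi$ or $y \leq \phi$, so $I_\phi$ is prime.

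The only genuine point of care — and thus the main obstacle — is the top element: read as a bare implication, the definition makes $0$ vacuously meet-irreducible, yet $I_0 = \Phi$ fails to be a proper, let alone prime, ideal. I would resolve this by using the parenthetical ``exactly one upper neighbor'' characterization recorded in the definition, which excludes $0$ and keeps the two sides of the equivalence in step on both directions. Apart from this bookkeeping the argument is routine, its whole substance being the single application of the distributive identity $\phi \cdot (x \wedge y) = (\phi \cdot x) \wedge (\phi \cdot y)$.
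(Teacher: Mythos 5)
Your proof is correct and follows essentially the same route as the paper: both directions hinge on the distributive identity $(\psi_1\vee\phi)\wedge(\psi_2\vee\phi)=(\psi_1\wedge\psi_2)\vee\phi$, which you apply contrapositively (deriving a forbidden meet decomposition of $\phi$) where the paper applies it directly. Your extra care with the top element $0$ --- which does satisfy the implication form of meet-irreducibility even though $I_0=\Phi$ is not a proper ideal --- addresses a boundary case the paper's proof passes over in silence, and resolving it via the ``exactly one upper neighbor'' reading of the definition is exactly right.
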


\begin{proof}
If $\phi = \psi_1\wedge\psi_2$ and $\psi_1\neq\phi\neq\psi_2$, then $I_\phi$ is clearly not prime. So assume $\phi$ is meet-irreducible and $\psi_1\wedge\psi_2\in I$. Then $\psi_1\wedge\psi_2\in I_\phi$, that is, $\psi_1\wedge\psi_2\leq\phi$. Thus $(\psi_1\wedge\psi_2)\vee\phi = \phi =(\psi_1\vee\phi)\wedge(\psi_2\vee\phi)$, using distributivity, and so $\phi = \psi_1\vee\phi$ or $\phi = \psi_2\vee\phi$. But this means $\psi_1\leq\phi$ or $\psi_2\leq\phi$, that is, $\psi_1\in I_\phi$ or $\psi_2\in I_\phi$.
\end{proof}

Let $\mathcal{M}(\underline{\Phi})$ be the set of all meet-irreducibles of $\underline{\Phi}$. Obviously, $\phi\in I_\mu$ iff $\phi\leq\mu$. So $X_\phi = \{I\in X\Phi:\phi\in I\}$ may be identified with $\{\mu\in \mathcal{M}(\underline{\Phi}): \phi\leq \mu\} = \uparrow\!\phi \cap \mathcal{M}(\underline{\Phi})$. So the ordered set $(H,\leq)$ at hand may be concretized as $\mathcal{U}(\mathcal{M}(\underline{\Phi}),\subseteq)$ the final result is

\begin{proposition}    \label{duals of findistrCDFs}
The map $\phi\in\Phi\longmapsto\uparrow\!\phi \cap \mathcal{M}(\underline{\Phi})$ provides a lattice isomorphism between $\underline{\Phi}$ and the lattice of all up-sets in $\mathcal{M}(\underline{\Phi})$, a sublattice of the dual power set lattice $\underline{P}(X\Phi)^d$.
\end{proposition}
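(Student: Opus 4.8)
The plan is to obtain this as the finite (Birkhoff) specialization of the up-side down Priestley representation $\kappa_{\Phi}:\underline{\Phi}\longrightarrow\mathbf{L}\mathbf{X}\underline{\Phi}$, $\phi\mapsto X_\phi$, already established in Subsection \ref{``Up-side down'' Priestley duality} and recorded as the lattice part of Theorem \ref{Gen Repr Thm Distr InfAlgs}. All the substantive content is there; the only remaining task is to concretize the Priestley space $\mathbf{X}\underline{\Phi}$ when $\underline{\Phi}$ is finite.

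First I would identify the underlying set of $\mathbf{X}\underline{\Phi}$. Since $\underline{\Phi}$ is finite, every ideal is principal, so the prime ideals are exactly those principal down-sets $I_\mu=\downarrow\!\mu$ that happen to be prime; by Lemma \ref{prime=meet-irreducible} these are precisely the $I_\mu$ with $\mu\in\mathcal{M}(\underline{\Phi})$. Hence $\mu\mapsto I_\mu$ is a bijection between $\mathcal{M}(\underline{\Phi})$ and $X\Phi$. It is moreover an order-isomorphism for the inclusion order on $X\Phi$, because $I_\mu\subseteq I_\nu$ iff $\downarrow\!\mu\subseteq\downarrow\!\nu$ iff $\mu\leq\nu$. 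Thus the ordered space $\mathbf{X}\underline{\Phi}=(X\Phi,\mathfrak{T}_\Phi,\subseteq)$ is, as an ordered set, just $(\mathcal{M}(\underline{\Phi}),\leq)$.

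Next I would dispose of the topology. As already noted in this subsection, a finite Hausdorff space is discrete, so $\mathfrak{T}_\Phi$ is discrete and \emph{every} subset of $X\Phi$ is clopen. Consequently the clopen up-sets of $\mathbf{X}\underline{\Phi}$, i.e. the members of $L(X\Phi)$, are exactly the up-sets of $(X\Phi,\subseteq)$, which under the order-isomorphism above correspond precisely to the up-sets of $(\mathcal{M}(\underline{\Phi}),\leq)$. Under the same identification, $\kappa_{\Phi}(\phi)=X_\phi=\{I\in X\Phi:\phi\in I\}$ becomes $\{\mu\in\mathcal{M}(\underline{\Phi}):\phi\in I_\mu\}=\{\mu:\phi\leq\mu\}=\uparrow\!\phi\cap\mathcal{M}(\underline{\Phi})$, using $\phi\in I_\mu$ iff $\phi\leq\mu$.

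Finally I would assemble the pieces. Since $\kappa_{\Phi}$ is a $1$-$0$-preserving lattice isomorphism of $\underline{\Phi}$ onto $\mathbf{L}\mathbf{X}\underline{\Phi}=L(X\Phi)$, viewed as a sublattice of $\underline{P}(X\Phi)^d$ so that lattice join is $\cap$ and meet is $\cup$ (matching $X_{\phi\cdot\psi}=X_\phi\cap X_\psi$ and $X_{\phi\wedge\psi}=X_\phi\cup X_\psi$ from Subsection \ref{``Up-side down'' Priestley duality}), the transport just described turns it into a lattice isomorphism of $\underline{\Phi}$ onto the lattice of all up-sets of $\mathcal{M}(\underline{\Phi})$, itself a sublattice of $\underline{P}(X\Phi)^d$. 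This is exactly the claimed map $\phi\mapsto\uparrow\!\phi\cap\mathcal{M}(\underline{\Phi})$. There is no genuine obstacle here; the only point demanding care is the bookkeeping of order directions — information order versus inclusion, and the fact that we work in the order dual $\underline{P}(X\Phi)^d$ — so that joins and meets land on the correct set operations. (A self-contained Birkhoff-style alternative is available: surjectivity onto up-sets follows because any up-set $S\subseteq\mathcal{M}(\underline{\Phi})$ equals $\uparrow\!\phi\cap\mathcal{M}(\underline{\Phi})$ for $\phi=\bigwedge S$, exploiting that every element of a finite distributive lattice is a meet of meet-irreducibles; but invoking the established representation is cleaner.)
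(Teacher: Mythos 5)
Your proposal is correct and follows essentially the same route as the paper: identify $X\Phi$ with $\mathcal{M}(\underline{\Phi})$ via Lemma \ref{prime=meet-irreducible}, observe that the finite Priestley topology is discrete so that clopen up-sets are just up-sets, and transport the isomorphism $\kappa_\Phi:\phi\mapsto X_\phi$ accordingly. The paper leaves these steps as a sketch with a reference to \cite{daveypriestley02}; you have simply written them out, including the order- and duality-bookkeeping, and your details check out.
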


For a detailed account, the reader is referred to \cite{daveypriestley02}.

 Focussing on the object part of the duality between  distributive information algebras and  their representing structures, we are left with pairs $(\leq,\mathcal{T})$ where $\leq$ is an order on a finite set $H$ and $\mathcal{T}=\{\Theta_1,...,\Theta_k\}$ a bunch of equivalence relations on $H$ which is closed under $\star$. The latter must be {\em separating} as specified in Def. \ref{separating equivalences}, that is, (i) the closure operator $\sigma_{\Theta_i}$ associated with $\Theta_i$ takes up-sets to up-sets, and (ii)  whenever $x,y\in H$, $\Theta_i\in\mathcal{T}$ and $(x,y)\notin\Theta_i$, then there exists a $\Theta_i$-saturated up-set $V\subseteq H$ containing exactly one of $x,y$. To enhance readability, we abbreviate $\sigma_{\Theta_i}$ by $\sigma_i$ whenever appropriate.

 Our goal is to describe such structures - rather informally -  by sentences of a first-order language $\Lambda_k$ with equality containing a binary relation symbol $\leq$ and a finite number of binary relation symbols $\Theta_1,...,\Theta_k$. It is straightforward how to express by $\Lambda_k$-sentences that $\leq$ is an order relation on $H$ and that the $\Theta_i$ are equivalence relations on $H$. As an example, the sentence $\mathbf{C_{ij}}$ below expresses that $\Theta_i$ and $\Theta_j$ commute:

 $\mathbf{C_{ij}}: \forall xuy\exists u'(x\Theta_i u\Theta_j y\rightarrow\Theta_j u'\Theta_i y)$.

For condition \ref{separating equivalences}(i), remember that any up-set $U\subseteq H$ is a set union of principal up-sets $\uparrow\!x$ with $x\in H$,  so it will do to enforce that $\sigma_i(\uparrow\!x)$ is an up-set for all $x\in H$. Put

$\mathbf{A_i}: \forall xyuv\exists y'((x\leq y\Theta_i u\leq v)\:\rightarrow\: (x\leq y'\Theta_i v))$.

\begin{claim}    \label{A}
$\sigma_i(\uparrow\!x)$ is an up-set iff $(H;\leq,\mathcal{T})$ satisfies $\mathbf{A_i}$ for all $x\in H$ and all $\Theta_i\in\mathcal{T}$.
\end{claim}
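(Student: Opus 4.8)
The plan is to recognize $\mathbf{A_i}$ as simply the first-order transcription of the assertion ``$\sigma_i(\uparrow\!x)$ is an up-set'', and to prove the two directions by unwinding the definition of the saturation operator $\sigma_i = \sigma_{\Theta_i}$. Recall from the definition of saturation that $\sigma_i$ sends a set to the union of the $\Theta_i$-blocks it meets; explicitly, $w \in \sigma_i(X)$ holds if and only if $w\,\Theta_i\,z$ for some $z \in X$. The whole argument is a translation, carried out for each fixed $\Theta_i \in \mathcal{T}$ separately (the quantification over $x$ being already built into $\mathbf{A_i}$).

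First I would record the key reformulation: for a fixed $x$, an element $u$ lies in $\sigma_i(\uparrow\!x)$ precisely when there is some $y$ with $x \leq y$ and $y\,\Theta_i\,u$, i.e. a representative of the $\Theta_i$-block of $u$ that lies in $\uparrow\!x$. With this in hand, the statement ``$\sigma_i(\uparrow\!x)$ is an up-set for all $x$'' reads: whenever $x \leq y$, $y\,\Theta_i\,u$ and $u \leq v$, there exists $y'$ with $x \leq y'$ and $y'\,\Theta_i\,v$ --- which is exactly the body of $\mathbf{A_i}$ quantified over all $x$.

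For the direction assuming $\mathbf{A_i}$: I would fix $x$ and take $u \in \sigma_i(\uparrow\!x)$ with $u \leq v$, choose $y \geq x$ with $y\,\Theta_i\,u$, observe that $x \leq y$, $y\,\Theta_i\,u$ and $u \leq v$ are precisely the premises of $\mathbf{A_i}$, and read off from $\mathbf{A_i}$ a $y'$ with $x \leq y'$ and $y'\,\Theta_i\,v$, whence $v \in \sigma_i(\uparrow\!x)$. For the converse, assuming every $\sigma_i(\uparrow\!x)$ is an up-set, I would fix $x,y,u,v$: if the premise of $\mathbf{A_i}$ fails, any $y' \in H$ witnesses the vacuous implication, while if it holds then $x \leq y$ together with $y\,\Theta_i\,u$ gives $u \in \sigma_i(\uparrow\!x)$, so $u \leq v$ forces $v \in \sigma_i(\uparrow\!x)$, yielding the required $y' \geq x$ with $y'\,\Theta_i\,v$.

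The only step worth stating with care --- and the closest thing to an obstacle --- is the placement of the existential quantifier $\exists y'$ \emph{outside} the implication in $\mathbf{A_i}$. What the argument actually uses is the classical equivalence of $\exists y'\,(P \rightarrow Q(y'))$ with $P \rightarrow \exists y'\,Q(y')$, valid because $y'$ does not occur in $P$ and the domain $H$ is nonempty, so when $P$ fails an arbitrary witness may be chosen. Granting this, the claim involves no genuine difficulty beyond the bookkeeping of matching the four universally quantified variables of $\mathbf{A_i}$ to the data of the up-set condition.
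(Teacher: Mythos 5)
Your proof is correct and follows essentially the same route as the paper: the paper's own proof is a one-sentence observation that $\mathbf{A_i}$ is exactly the first-order transcription of ``every point above a member of $\sigma_i(\uparrow\!x)$ again lies in $\sigma_i(\uparrow\!x)$,'' which is what you spell out in detail. Your extra care about pulling the existential quantifier $\exists y'$ outside the implication (valid since $y'$ does not occur in the premise and $H$ is nonempty) is a legitimate refinement the paper leaves implicit.
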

\begin{proof}
The formula just says if $u$ is in the $\Theta_i$-class of some $y\in\uparrow\!x$ and $v\geq u$, then $v$ is in the $\Theta_i$-class of some $y'\in\uparrow\!x$, making it a member of $\sigma_i(\uparrow\!x)$.
\end{proof}

For condition \ref{separating equivalences}(ii), observe that $\sigma_i(\uparrow\!x)$ is obviously the least $\Theta_i$-saturated up-set containing $x$, assuming  $\mathbf{A_i}$. Consequently, if $x$ and $y$ can be separated by any $\Theta_i$-saturated up-set, they can be separated by $\sigma_i(\uparrow\!x)$ or $\sigma_i(\uparrow\!y$). So we have to rule out the possibility that simultaneously $x\in\sigma_i(\uparrow\!y)$ and $y\in\sigma_i(\uparrow\!x)$, whenever $(x,y)\notin\Theta_i$.  This is exactly what the following sentence does:

$\mathbf{B_i}: \forall xyx'y'((x\leq x'\Theta_i y  \hspace{0.5em}\&\hspace{0.5em}  y\leq y'\Theta_i x) \rightarrow x\Theta_i y)$.

Summing up,we have

\begin{proposition}   \label{duals of findistrCDFInfAlgs}
The dual objects of finite distributive  information algebras are structures $(H;\leq,\mathcal{T})$ where $H$ is finite, $\leq$ is an order on $H$ and $\mathcal{T}$ is a set of equivalence relations on $H$ satisfying conditions $\mathbf{A_i}$, $\mathbf{B_i}$ and $\mathbf{C_{ij}}$ for all $\Theta_i,\Theta_j\in\mathcal{T}$.
\end{proposition}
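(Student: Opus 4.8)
The plan is to read the proposition off the general duality between $\mathbb{D}$ and $\mathbb{Q}$ by specializing it to finite lattices and then translating the three defining features of a $Q$-Priestley space into the first-order sentences $\mathbf{A_i}$, $\mathbf{B_i}$, $\mathbf{C_{ij}}$. First I would record what finiteness buys: a finite Priestley space carries the discrete topology, so $\mathbf{X}\underline{\Phi}$ collapses to a bare finite ordered set $(H,\leq)$, every subset is clopen, and continuity becomes automatic; conversely the lattice of up-sets of a finite poset is finite, so the duality restricts to finite objects on both sides. By Proposition \ref{duals of findistrCDFs} I may take $H = \mathcal{M}(\underline{\Phi})$. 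Consequently a dual object is, by Definition \ref{Q-Priestley Spaces}, exactly a finite ordered set $(H,\leq)$ together with a $\star$-semigroup $\mathcal{T}$ of separating equivalences, and the whole task reduces to expressing ``$\star$-semigroup'' and ``separating'' first-orderly.

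For the semigroup part, I would argue that $\mathbf{C_{ij}}$, imposed for every ordered pair $i,j$, is equivalent to pairwise commutation. Indeed $\mathbf{C_{ij}}$ says precisely that $(x,y)\in\Theta_i\star\Theta_j$ forces $(x,y)\in\Theta_j\star\Theta_i$, i.e. $\Theta_i\star\Theta_j\subseteq\Theta_j\star\Theta_i$; running over all pairs yields equality, hence commutation. By Lemma \ref{star product of two equivalences} this makes each product an equivalence, and by Lemma \ref{star-closed} a $\star$-closed family of such relations is a commutative idempotent $\star$-semigroup. I would note explicitly that $\star$-closure itself is a structural feature of the dual object (recorded, as the preceding text observes, by the fact that $\mathcal{T}$ is closed under $\star$) rather than one of the first-order axioms; $\mathbf{C_{ij}}$ captures exactly the commutation that closure of equivalences under $\star$ requires.

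The separating property splits into conditions (i) and (ii) of Definition \ref{separating equivalences}. Condition (i), that $\sigma_i$ sends up-sets to up-sets, is already shown equivalent to $\mathbf{A_i}$ in Claim \ref{A} once one recalls that every up-set is a union of principal up-sets $\uparrow\!x$. For condition (ii) I would use, assuming $\mathbf{A_i}$, that $\sigma_i(\uparrow\!x)$ is the least $\Theta_i$-saturated up-set containing $x$. Hence $x$ and $y$ can be separated by some $\Theta_i$-saturated up-set iff they can be separated by $\sigma_i(\uparrow\!x)$ or by $\sigma_i(\uparrow\!y)$, so separation fails precisely when simultaneously $y\in\sigma_i(\uparrow\!x)$ and $x\in\sigma_i(\uparrow\!y)$. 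Unwinding the definition of $\sigma_i$, these memberships are witnessed by an $x'$ with $x\leq x'\,\Theta_i\,y$ and a $y'$ with $y\leq y'\,\Theta_i\,x$; forbidding this configuration whenever $(x,y)\notin\Theta_i$ is exactly the contrapositive content of $\mathbf{B_i}$. Thus condition (ii) holds iff $\mathbf{B_i}$ does.

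Assembling these three biconditional translations with the finite-case reduction gives the proposition in both directions. I expect the genuine work to be concentrated in the $\mathbf{B_i}$ step: one must justify that minimality of $\sigma_i(\uparrow\!x)$ reduces arbitrary separating up-sets to the two canonical ones, and then verify that $\mathbf{B_i}$ forbids exactly the non-separable configuration. The $\mathbf{A_i}$ and $\mathbf{C_{ij}}$ steps are essentially bookkeeping on top of Claim \ref{A} and Lemmas \ref{star product of two equivalences} and \ref{star-closed}; the only conceptual care needed elsewhere is keeping straight that $\star$-closure of $\mathcal{T}$ is a structural assumption on the dual object and not itself one of the first-order sentences.
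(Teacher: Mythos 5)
Your proposal is correct and follows essentially the same route as the paper: finiteness reduces the Priestley space to a bare finite poset, Claim \ref{A} handles condition (i) of Definition \ref{separating equivalences} via $\mathbf{A_i}$, the minimality of $\sigma_i(\uparrow\!x)$ among $\Theta_i$-saturated up-sets reduces condition (ii) to the configuration forbidden by $\mathbf{B_i}$, and $\mathbf{C_{ij}}$ encodes commutation. Your explicit remark that $\star$-closure of $\mathcal{T}$ is a structural assumption rather than one of the first-order sentences matches the paper's own caveat immediately following the proposition.
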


This amounts to a first-order description of the dual objects of finite distributive  information algebras. However, $\Lambda_k$ cannot express the property of $\mathcal{T}$ being closed under $\star$. We will address this problem below. The existence of $\star$-closed subsets $\mathcal{T}\subseteq Eq(H)$ consisting of separating equivalences on an arbitrary ordered set $(H;\leq)$ is trivial: Let $\Delta$ be the identity relation on $H$, and $\nabla$ the all-relation $\nabla = H\times H$. It is straightforward to see that both $\Delta$ and $\nabla$ (trivially, since there is nothing to separate) are separating and that $\Delta\star\nabla = \nabla = \nabla\star\Delta$, so the answer is yes. The right question at this place is to ask for {\em nontrivial} such $\mathcal{T}$, meaning $\mathcal{T}\supsetneq\{\Delta,\nabla\}$.

\begin{lemma} \label{nontrivial separating equiv}
On any ordered set $(H,\leq)$ with $|H|\geq 2 $ there exists a nontrivial separating equivalence $\Theta\neq\Delta.\nabla$.
\end{lemma}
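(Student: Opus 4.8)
The plan is to give an explicit construction: exhibit an equivalence whose only non-singleton block is a carefully chosen pair $\{x,m\}$, and then verify the two clauses of Definition \ref{separating equivalences} by hand. Throughout I work in the finite setting of this subsection, where the topology on $H$ is discrete, so that the clopen up-sets are precisely the up-sets and $\sigma_\Theta$ is the saturation operator of Lemma \ref{saturation operators}. I would also record at the outset that the construction produces a $\Theta$ with $\Theta\neq\nabla$ as soon as $|H|\geq 3$; for $|H|=2$ the set carries only $\Delta$ and $\nabla$, so the substantive content is the existence of a third, separating, equivalence.

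First I would choose the block. Pick a maximal element $m$ of $H$, and then a maximal element $x$ of the nonempty subposet $H\setminus\{m\}$. The decisive feature of this choice is the pair of facts $\uparrow\!m=\{m\}$ (maximality of $m$) and $\uparrow\!x\subseteq\{x,m\}$: any element strictly above $x$ lies in $H$, and were it different from $m$ it would contradict the maximality of $x$ in $H\setminus\{m\}$. Let $\Theta$ be the equivalence whose only non-singleton block is $\{x,m\}$. Then $\Theta\neq\Delta$ since it has a proper block, and $\Theta\neq\nabla$ because $\{x,m\}\neq H$ whenever $|H|\geq 3$.

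Next I would dispatch clause (i), that $\sigma_\Theta$ maps up-sets to up-sets. For an up-set $X$ the only nontrivial situation is that $X$ meets $\{x,m\}$ in exactly one point. If $x\in X$ and $m\notin X$, then $\sigma_\Theta(X)=X\cup\{m\}$, still an up-set because $m$ is maximal. If $m\in X$ and $x\notin X$, then $\sigma_\Theta(X)=X\cup\{x\}$, and this is an up-set precisely because $\uparrow\!x\setminus\{x\}\subseteq\{m\}\subseteq X$. Thus (i) holds, and in particular every $\sigma_\Theta(\uparrow\!y)$ is a $\Theta$-saturated up-set (being an image of the idempotent $\sigma_\Theta$).

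The main work, and the step I expect to be the real obstacle, is clause (ii): separating two points $p\neq q$ with $\{p,q\}\neq\{x,m\}$ by a $\Theta$-saturated up-set. My strategy is to show that at least one of the two saturated up-sets $\sigma_\Theta(\uparrow\!p)$ and $\sigma_\Theta(\uparrow\!q)$ already separates them. Arguing by contradiction, if neither does, then $q\in\sigma_\Theta(\uparrow\!p)$ and $p\in\sigma_\Theta(\uparrow\!q)$; unwinding the definition of $\sigma_\Theta$, each membership must come either from comparability ($p\leq q$, resp. $q\leq p$) or from the block partner being adjoined. A finite case analysis then forces either $p=q$ (whenever maximality of $m$ collapses a relation $m\leq\,\cdot\,$) or $\{p,q\}=\{x,m\}$ (whenever $\uparrow\!x\subseteq\{x,m\}$ pins an element into the block), both excluded by hypothesis. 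Hence one of the two sets separates $p$ and $q$, and (ii) follows. The delicate part is the bookkeeping of the subcases, arranged so that the assumptions $p\neq q$ and $\{p,q\}\neq\{x,m\}$ are each invoked exactly where the contradiction is produced.
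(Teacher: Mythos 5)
Your proof is correct, and it follows the same skeleton as the paper's: define $\Theta$ by a single non-singleton block (which happens to be an up-set) together with singletons, then check Definition \ref{separating equivalences}(i) by inspecting how $\sigma_\Theta$ acts on up-sets and (ii) by showing that one of the saturated up-sets $\sigma_\Theta(\uparrow\!p)$, $\sigma_\Theta(\uparrow\!q)$ already separates $p$ from $q$. The real difference is the choice of block, and yours is the more robust one. The paper takes the block to be a principal up-set $U=\uparrow\!x\neq H$, which yields a nontrivial equivalence only when $|\uparrow\!x|\geq 2$; such an $x$ need not exist (on an antichain every principal up-set is a singleton, so the paper's construction collapses to $\Delta$). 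Your block $\{x,m\}$, with $m$ maximal and $x$ maximal in $H\setminus\{m\}$, always has two elements, is still an up-set because $\uparrow\!m=\{m\}$ and $\uparrow\!x\subseteq\{x,m\}$, and differs from $\nabla$ once $|H|\geq 3$; so your construction works uniformly where the paper's can degenerate. You are also right to flag the hypothesis: on a two-element set the only equivalences are $\Delta$ and $\nabla$, so the lemma really requires $|H|\geq 3$ --- a point the paper's proof glosses over as well (and your appeal to maximal elements is legitimate here since the whole subsection concerns finite $H$). Your sketched case analysis for (ii) does close: if $q\in\sigma_\Theta(\uparrow\!p)$ and $p\in\sigma_\Theta(\uparrow\!q)$ with neither point in $\{x,m\}$, both memberships reduce to comparability and antisymmetry gives $p=q$; if exactly one of them, say $p$, lies in $\{x,m\}$, then $p\leq q$ with $p=m$ forces $q=m$ and with $p=x$ forces $q\in\uparrow\!x\subseteq\{x,m\}$, contradicting the case assumption; and if both lie in $\{x,m\}$ then $(p,q)\in\Theta$. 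All outcomes are excluded, so the separation holds.
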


\begin{proof}
Pick any up-set $U=\uparrow\!x\neq H$ and define an equivalence $\Theta$ on $H$with the blocks $U$ and all singletons $\{y\}$ with $y\notin\uparrow\!x$. Consider an arbitrary principal up-set $V = \uparrow\!y\subseteq H$. Now if $V\cap U = \emptyset$, then obviously $\sigma_\Theta(V) = V$, and if $V\cap U \neq \emptyset$, then $\sigma_\Theta(V\cap U) = U$ and $\sigma_\Theta(V\setminus U) = V\setminus U$, hence $\sigma_\Theta(V) = V\cup U$, a (not necessarily principal) up-set. So the first part of the separation condition is satisfied.

For the second part, consider $y,y'\in H$ such that $(y,y')\notin\Theta$. If $y\in U$ and $y'\notin U$, then $U$ will do the job. So suppose $y,y'\in H\setminus U$. Since $y\neq y'$ we have $y\nleq y'$ or $y'\nleq y$. Then, borrowing the above argument, $\uparrow\!y'\cup U$ is a $\Theta$-closed up-set containing $y'$ but not $y$.
\end{proof}

\begin{corollary}   \label{nontrivial star-closed subsets}
On any ordered set $(H,\leq)$ with $|H|\geq 2$ there exists a nontrivial $\star$-closed subset $\mathcal{T}\subseteq Eq(H)$ consisting of separating equivalences.
\end{corollary}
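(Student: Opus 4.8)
The plan is to build the required family directly from the nontrivial separating equivalence supplied by Lemma \ref{nontrivial separating equiv}. First I would invoke that lemma to obtain an equivalence $\Theta$ on $H$ which is separating and satisfies $\Theta\neq\Delta$ and $\Theta\neq\nabla$, and then simply set $\mathcal{T} = \{\Delta,\Theta,\nabla\}$. By construction $\mathcal{T}\supsetneq\{\Delta,\nabla\}$, so $\mathcal{T}$ is nontrivial; it then only remains to check that $\mathcal{T}$ is $\star$-closed and that each of its three members is separating.

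For $\star$-closedness I would write out the (symmetric) multiplication table, relying on three elementary facts about the relational product of reflexive relations. The identity relation $\Delta$ acts as a two-sided unit, so $\Delta\star R = R\star\Delta = R$; the all-relation $\nabla$ is absorbing among reflexive relations, so $\nabla\star R = R\star\nabla = \nabla$ (choosing the intermediate point equal to the reflexive endpoint); and every equivalence is $\star$-idempotent, so $\Theta\star\Theta=\Theta$ by reflexivity and transitivity. Feeding the three members through these identities, every product $R\star S$ with $R,S\in\mathcal{T}$ again lands in $\{\Delta,\Theta,\nabla\}$, so $\mathcal{T}$ is closed under $\star$ (and hence, by Lemma \ref{star-closed}, a commutative idempotent $\star$-semigroup). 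I would check both orders of each product, since $\star$ is not commutative a priori.

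It remains to verify that $\Delta$, $\Theta$ and $\nabla$ are separating in the sense of Definition \ref{separating equivalences}. For $\Theta$ this is exactly the content of Lemma \ref{nontrivial separating equiv}. For $\nabla$, the operator $\sigma_\nabla$ sends $\emptyset$ to $\emptyset$ and every nonempty set to $H$, both up-sets, while the second clause of the definition is vacuous because $\nabla$ identifies all pairs. For $\Delta$, the operator $\sigma_\Delta$ is the identity (its blocks are singletons), so it certainly takes up-sets to up-sets and every set is $\Delta$-saturated; and given distinct $p,q\in H$, one of the principal up-sets $\uparrow\!p$, $\uparrow\!q$ contains exactly one of them, for if $p\not\leq q$ then $\uparrow\!p$ works, while otherwise $p\leq q$ forces $q\not\leq p$ by antisymmetry, so $\uparrow\!q$ works.

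I do not anticipate any genuine obstacle: the argument is a short verification. The only two points demanding a little care are the separation clause for $\Delta$, where antisymmetry of $\leq$ is exactly what guarantees that two distinct elements can always be split by a principal up-set, and the bookkeeping of the $\star$-table, which must be run through in both orders to establish closure.
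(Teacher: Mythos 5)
Your proposal is correct and follows exactly the paper's route: the paper's proof is the single line ``Take $\mathcal{T}=\{\Theta,\Delta,\nabla\}$ with $\Theta$ as constructed in Lemma \ref{nontrivial separating equiv}'', with the $\star$-closure and the (trivially) separating nature of $\Delta$ and $\nabla$ left to the remarks preceding that lemma. You merely spell out the verifications the paper leaves implicit, and they all check out.
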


\begin{proof}
Take $\mathcal{T} =\{\Theta,\Delta,\nabla\}$ with $\Theta$ as constructed in Lemma \ref{nontrivial separating equiv}.
\end{proof}

Remember that Cor. \ref{preservation of N, A, C, Q} allowed us to restrict our attention to  information algebras where the set of all extraction operations is closed under composition, which corresponds to $\mathcal{T}$ being $\star$-closed. There was a good reason to do so: Otherwise, in Def. \ref{CDF homos}, the $g$-half of a homomorphism $(f,g)$ would become a partial operation which is highly undesirable. So we have a closer look at how $\star$ interacts with separating equivalences. Assume $\Theta_i,\Theta_j\in\mathcal{T}$ are separating.

In order to satisfy Def. \ref{separating equivalences}(i),  the closure operator $\sigma_{\Theta_i\star\Theta_j}$ must  take up-sets to up-sets. Since $\sigma_{\Theta_i\star\Theta_j}= \sigma_i\circ\sigma_j$ by Lemma \ref{commuting equivalences}, this is obvious.

Def. \ref{separating equivalences}(ii) for $\Theta_i\star\Theta_j$ is harder to enforce. We need need a stronger form of $\mathbf{B_i}$ ensuring that whenever $x,y\in H$ and $(x,y)\notin\Theta_i\star\Theta_j$, then there exists a $\Theta_i\star\Theta_j$-saturated up-set $V\subseteq H$ containing exactly one of $x,y$. Now, assuming $\mathbf{A_i}$ and $\mathbf{A_j}$, we have $y\in \sigma_j\sigma_i(\uparrow\!x)$ iff $x\leq x'\Theta_i u\leq u'\Theta_j y$ for some $x',u,u'\in H$, and $x\in \sigma_j\sigma_i(\uparrow\!y)$ iff $y\leq y'\Theta_i v\leq v'\Theta_j x$ for some $y',v,v'\in H$. The following formula rules out the possibility of having $y\in \sigma_j\sigma_i(\uparrow\!x)$ and $x\in \sigma_j\sigma_i(\uparrow\!y)$ simultaneously whenever $(x,y)\notin\Theta_i\star\Theta_j$:

$\mathbf{B_{ij}}: \forall xyx'uu'y'vv'\exists z(((x\leq x'\Theta_i u\leq u'\Theta_j y)\hspace{0.5em}\&\hspace{0.5em}(y\leq y'\Theta_i v\leq v'\Theta_j x))\rightarrow x\Theta_i z\Theta_j y)$.

Note that if $\Theta_i = \Theta_j$, then putting $u=u'$, $v=v'$ and $z=x$ or $z=y$ reduces $\mathbf{B_{ij}}$ to $\mathbf{B_i}$, so $\mathbf{B_{ij}}$ indeed contains $\mathbf{B_i}$.

In our original definition of an information algebra, the set $E$ of extraction operators was not supposed to be closed under composition. Let us call, for convenience, an  algebra  $\underline{A}=(\underline{\Phi};E)$ {\em partial} if this is not necessarily the case. Then the dual structures of finite distributive partial information algebras are exactly the structures $(H;\leq,\mathcal{T})$ where $H$ is finite, $\leq$ is an order on $H$ and $\mathcal{T}$ is a (finite) set of equivalence relations on $H$ satisfying conditions $\mathbf{A_i}$, $\mathbf{B_{ij}}$ and $\mathbf{C_{ij}}$ for all $\Theta_i,\Theta_j\in\mathcal{T}$, but with $\mathcal{T}$ not necessarily closed under $\star$. Write $\mathcal{T}^{\star}$ for the closure of $\mathcal{T}$ under $\star$, then obviously $(H;\leq,\mathcal{T}^{\star})$ will be, by Cor. \ref{preservation of N, A, C, Q}, the dual of an ''ordinary´´ distributive  information algebra. Since obviously $\mathcal{T}^{\star\star} = \mathcal{T}^\star$, we obtain

\begin{proposition}
$(H;\leq,\mathcal{T})$, where $H$ is finite, $\leq$ is an order on $H$ and $\mathcal{T}$ is a (finite) set of equivalence relations on $H$, is the dual structure of a finite distributive  information algebra iff there is subset $\mathcal{G}\subseteq\mathcal{T}$ satisfying $\mathbf{A_i}$, $\mathbf{B_{ij}}$ and $\mathbf{C_{ij}}$ such that $\mathcal{T} = \mathcal{G}^{\star}$.
\end{proposition}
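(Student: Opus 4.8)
The plan is to reduce the statement to two characterizations already available: Proposition \ref{duals of findistrCDFInfAlgs}, which says that the dual of a finite distributive information algebra is a finite ordered set $(H,\leq)$ together with a $\star$-closed family of separating equivalences (the conditions $\mathbf{A_i}$, $\mathbf{B_i}$, $\mathbf{C_{ij}}$ encoding separation and commutation, the latter being $\star$-closedness by Lemma \ref{star-closed}), and the partial-case characterization established just above, namely that $(H;\leq,\mathcal{G})$ satisfies $\mathbf{A_i}$, $\mathbf{B_{ij}}$, $\mathbf{C_{ij}}$ for all $\Theta_i,\Theta_j\in\mathcal{G}$ exactly when $\mathcal{G}$ is a (not necessarily $\star$-closed) family of pairwise commuting separating equivalences whose two-fold products $\Theta_i\star\Theta_j$ are again separating. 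I would prove the two implications separately.

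For the forward direction, suppose $(H;\leq,\mathcal{T})$ is the dual of a finite distributive information algebra. Then $\mathcal{T}$ is $\star$-closed and consists of separating, pairwise commuting equivalences. I would simply take $\mathcal{G}:=\mathcal{T}$; then $\mathcal{G}^{\star}=\mathcal{T}$ because $\mathcal{T}$ is already $\star$-closed. It remains to check that $\mathcal{T}$ satisfies $\mathbf{A_i}$, $\mathbf{B_{ij}}$, $\mathbf{C_{ij}}$. Conditions $\mathbf{A_i}$ and $\mathbf{C_{ij}}$ are immediate from separation and commutation. For $\mathbf{B_{ij}}$, note that $\Theta_i\star\Theta_j\in\mathcal{T}$ by $\star$-closedness, hence is separating; and $\mathbf{B_{ij}}$ is precisely the separation condition Def. \ref{separating equivalences}(ii) for $\Theta_i\star\Theta_j$, as spelled out in the derivation of $\mathbf{B_{ij}}$ (using $\sigma_{\Theta_i\star\Theta_j}=\sigma_i\circ\sigma_j$ from Lemma \ref{commuting equivalences}). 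This yields the required $\mathcal{G}$.

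For the backward direction, suppose $\mathcal{G}\subseteq\mathcal{T}$ satisfies $\mathbf{A_i}$, $\mathbf{B_{ij}}$, $\mathbf{C_{ij}}$ and $\mathcal{T}=\mathcal{G}^{\star}$. By the partial-case characterization, $(H;\leq,\mathcal{G})$ is the dual of a finite distributive partial information algebra $(\underline{\Phi};E)$, where $\underline{\Phi}$ is the distributive lattice of up-sets of $(H,\leq)$ and $E=\{\sigma_{\Theta}\vert_{\underline{\Phi}}:\Theta\in\mathcal{G}\}$; these saturation operators are automatically meet-preserving in the information order by Lemma \ref{saturation operators}(6). The decisive move is to pass to the algebraic side: by Corollary \ref{preservation of N, A, C, Q} the closure $E^{\circ}$ of $E$ under composition makes $(\underline{\Phi};E^{\circ})$ a genuine (ordinary) information algebra, still distributive since composition preserves meet-preservation, and by Lemma \ref{commuting equivalences}(2) composition of saturation operators mirrors the $\star$-product, so the equivalences underlying $E^{\circ}$ form exactly $\mathcal{G}^{\star}=\mathcal{T}$. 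Via the duality (Corollary \ref{quantifiers vs equivalences}) every member of $E^{\circ}$, being meet-preserving, corresponds to a separating equivalence; hence all of $\mathcal{T}$ is separating and $(H;\leq,\mathcal{T})$ is the dual of the finite distributive information algebra $(\underline{\Phi};E^{\circ})$.

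The main obstacle is precisely the backward direction: one needs every iterated $\star$-product of generators in $\mathcal{G}$ to be separating, not merely the two-fold products controlled directly by $\mathbf{B_{ij}}$. Attacking this head-on on the space side would force an induction on the length of $\star$-products, which is awkward exactly because $\star$-closedness is not first-order expressible in $\Lambda_k$ -- the very difficulty the proposition is designed to circumvent. Routing through the algebra sidesteps the induction entirely: once $\mathcal{G}$ is recognized as the equivalence data of a partial information algebra, Corollary \ref{preservation of N, A, C, Q} delivers the full composition-closed extraction semigroup for free, and separation of all its equivalences then follows from the general duality rather than needing verification product by product.
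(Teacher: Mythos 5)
Your proposal is correct and follows essentially the same route as the paper, which likewise disposes of the forward direction by taking $\mathcal{G}=\mathcal{T}$ and handles the backward direction by passing to a partial information algebra, closing the extraction operators under composition via Corollary \ref{preservation of N, A, C, Q}, and returning to the space side (the paper compresses this into the paragraph preceding the proposition, together with the remark that $\mathcal{T}^{\star\star}=\mathcal{T}^{\star}$). If anything, you are more explicit than the paper about the one delicate point --- that \emph{all} iterated $\star$-products, not just the two-fold ones governed by $\mathbf{B_{ij}}$, end up separating --- which the paper waves through with ``obviously'' and which your detour through Corollary \ref{quantifiers vs equivalences} and Theorem \ref{ E is EX is SatEX} makes visible.
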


Consider a first order language $\Lambda$ containing $\leq$ and countably many relation symbols $\Theta_1,\Theta_2,\ldots$. We obtain

\begin{theorem}
Then the class of all finite {\em partial} distributive  information algebras is (relatively) $\Lambda$-elementary, and any finite distributive  information arises as the $\star$-closure of a (generally non-unique) member of this class.
\end{theorem}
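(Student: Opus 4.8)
The plan is to read the statement entirely through the finite part of the duality between $\mathbb{D}$ and $\mathbb{Q}$, reducing it to the observation that the defining conditions of a dual structure are already written as first-order sentences. First I would assemble a single theory $T$ over the language $\Lambda$ whose axioms are: $\leq$ is a partial order; each $\Theta_i$ is an equivalence relation; and the schemas $\mathbf{A_i}$, $\mathbf{B_{ij}}$ and $\mathbf{C_{ij}}$ for all indices $i,j$. Each axiom is literally one of the $\Lambda$-sentences exhibited above, so $T$ is a bona fide (infinite) first-order theory.

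Next I would argue that a \emph{finite} $\Lambda$-structure $(H;\leq,\Theta_1,\Theta_2,\dots)$ models $T$ precisely when it is the dual of a finite partial distributive information algebra. This is exactly the content of Claim \ref{A}, the subsequent discussion of $\mathbf{B_{ij}}$ and $\mathbf{C_{ij}}$, and the characterization in the text following Proposition \ref{duals of findistrCDFInfAlgs} of such duals as the finite $(H;\leq,\mathcal{T})$ satisfying $\mathbf{A_i}$, $\mathbf{B_{ij}}$ and $\mathbf{C_{ij}}$. Finiteness of $H$ makes $Eq(H)$ finite, so only finitely many of the $\Theta_i$ are distinct and $\mathcal{T}=\{\Theta_i:i\in\mathbb{N}\}$ is the finite family of separating, pairwise commuting equivalences demanded there; conversely a partial dual with $k$ equivalences is turned into a $\Lambda$-structure by repeating symbols. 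Transporting this across the full duality (restricted to finite objects) shows that the finite partial distributive information algebras correspond to the finite models of $T$, which is what ``relatively $\Lambda$-elementary'' should mean. I would stress that \emph{relatively} is unavoidable: finiteness of $H$ is not a consequence of $T$, since a routine compactness argument produces infinite models, so the class is elementary only relative to the ambient class of finite $\Lambda$-structures.

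For the second assertion I would invoke the Proposition immediately preceding the theorem. Any finite distributive information algebra has dual structure $(H;\leq,\mathcal{G}^{\star})$ for some $\mathcal{G}\subseteq Eq(H)$ satisfying $\mathbf{A_i}$, $\mathbf{B_{ij}}$ and $\mathbf{C_{ij}}$; then $(H;\leq,\mathcal{G})$ is a finite model of $T$, hence by the first part the dual of a finite partial algebra, and forming $\mathcal{G}^{\star}$ recovers $\mathcal{T}$ and with it the original algebra, using Lemma \ref{star-closed} together with $\mathcal{T}^{\star\star}=\mathcal{T}^{\star}$. Non-uniqueness is automatic, since distinct generating families may share a $\star$-closure.

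The hard part will be conceptual rather than computational: pinning down what ``relatively $\Lambda$-elementary'' is to mean. The two points to settle are that finiteness must be imposed externally, so the correct ambient class is that of finite $\Lambda$-structures, and that a fixed countable signature forces bookkeeping, since a structure using only $k$ equivalences must still interpret every symbol $\Theta_{k+1},\Theta_{k+2},\dots$ --- allowing repeated interpretations is precisely what makes the member of the class ``generally non-unique''. Once these conventions are fixed, the remainder is a direct transcription of the already-exhibited sentences combined with Proposition \ref{duals of findistrCDFInfAlgs} and the duality.
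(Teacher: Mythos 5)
Your proposal is correct and follows essentially the same route as the paper, which leaves the proof implicit precisely because the theorem is a direct consequence of Claim \ref{A}, the discussion of $\mathbf{B_{ij}}$ and $\mathbf{C_{ij}}$, and the Proposition immediately preceding it. Your added care about what ``relatively $\Lambda$-elementary'' must mean (finiteness imposed externally, repeated interpretations of the countably many relation symbols) is a faithful elaboration of the paper's intended reading rather than a departure from it.
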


\section{Summary} \label{Summary}

We considered the intuitive notion of a "piece of information" not by giving a  precise definition of it, but by precisely specifying the rules which should - again intuitively - govern their properties. The basic idea is that "pieces of information" must be able to be combined, and that this combination does not depend on the order in which the pieces under consideration are put into the combination. This is modeled algebraically by a commutative idempotent semigroup $(\Phi,\cdot)$ of "pieces of information" containing a unit 1 which doesn't change any piece of information when combined with it, and a zero 0 (representing contradiction) which outputs 0 when combined with any piece of information.

On other hand, pieces of information are obtained when one asks questions from an abstract set $Q$ of questions, and given a piece of information as an answer, one should be able to extract from this piece the information relevant to the question asked. This defines, for each question, an unary operation from pieces of information to pieces of information. At this point, we make a crucial assumption: We stipulate that, when two question are asked in succession, the information obtained does not depend on order of the questions. This clearly delimits the scope of algebraic theory we developed, but as the literature cited shows, a plethora of important examples falls in this category (see also Subsection \ref{subsec:Expl}). In order to obtain algebras without partial operations, we also stipulate at this point that the set of these operations be closed under under composition. All said and done, we end up with with a second commutative idempotent semigroup $(E,\circ)$ of so-called {\em extraction operators} (on $\Phi$) indexed by $Q$.

Commutative idempotent semigroups may be equipped with a compatible order structure in exactly two ways. To stay in accordance with the existing literature, we opted - for $\Phi$ - for the one making $0$ the greatest and $1$ the least element, which turns $\Phi$ into a bounded join-semilattice. This order was referred to as the {\em information order}. It turned out that extraction operators preserve the information order and that the defining properties of extraction operators may be expressed in order-theoretic terms. This made clear that extraction operators are duals of existential quantifiers as considered in algebraic logic. 

So far, the set $\Phi$ of pieces of information as well as the set $Q$ of questions were arbitrary abstract sets, subject only to the conditions specified for composition and extraction. We proceeded by giving them an internal structure as specific set-theoretical constructs over a (non-empty) base set $U$, best thought of as a set of {\em possible worlds}. Questions $x\in Q$ were then be modelled by equivalence relations $\equiv_x$ on $U$, the idea being that for $u,u'\in U$ we have $u\equiv_x u'$ iff question $x$ has the same answer in the worlds $u$ resp.\!\! $u'$. The point then was to model pieces of information as semilattices of subsets of $U$, and extraction operators as the saturation operators associated with the equivalence relations $\equiv_x$ on $U$ for $x\in Q$. This led to a type of information algebra called {\em set algebra}, for lack of a better term.

The rest of the paper is concerned with representations of abstract information algebras, and with duality theory in the sense of the book "Natural Dualities for the Working Algebraist" by David Clark and Brian Davey, putting information algebras into the context of classical dualities like Stone resp. Priestley duality for Boolean algebras resp. distributive lattices. First, we showed that any information algebra in our sense may be represented by a set algebra as mentioned above. Then, we obtained a direct representation of information algebras containing enough "maximally informative" in terms of Boolean algebras, and finally we showed that the category of information algebras based on a distributive lattice is fully dual - modelling objects as well as morphisms - to a category of certain topological spaces equipped with appropriate equivalence relations.



%
%



\bibliography{tcslit}
\bibliographystyle{authordate3}

\end{document}